\newtheorem{theorem}{Theorem}
\newtheorem{observation}[theorem]{Observation}
\newtheorem{corollary}[theorem]{Corollary}
\newtheorem{lemma}[theorem]{Lemma}
\newtheorem{claim}[theorem]{Claim}
\theoremstyle{definition}
\newtheorem{definition}[theorem]{Definition}
\newcommand{\B}[1]{\left( {#1} \right)}
\newcommand{\F}[2]{{\frac{#1}{#2}}}
\newcommand{\R}[1]{{\frac{1}{#1}}}
\newcommand{\BF}[2]{\B{\F{#1}{#2}}}
\newcommand{\set}[1]{\left\{ #1 \right\}}
\newcommand{\midline}[2]{{#1} \, \middle| \, {#2}}
\newcommand{\prob}[1]{\operatorname{Pr}\left( #1 \right)}
\newcommand{\stset}[2]{\set{\midline{#1}{#2}}}
\newcommand{\RIGHT}{\quad \Longrightarrow \quad}
\newif\ifdraft
\title{Multi-Round Cooperative Search Games with Multiple Players\footnote{This work has received funding from the European Research Council (ERC) under the European Union's Horizon 2020 research and innovation program (grant agreement No 648032).}
}
\author[1]{Amos Korman}
\affil[1]{CNRS and University Paris Diderot, Paris, France\\  \texttt{amos.korman@irif.fr}}
\author[2]{Yoav Rodeh}
\affil[2]{Ort Braude College, Karmiel, Israel\\ \texttt{ yoav.rodeh@braude.ac.il}}
\newcommand{\prof}{\mathbb{P}}
\newcommand{\Pnash}{{\prof_\mathtt{nash}}}
\newcommand{\Pfinal}{\prof_\mathtt{final}}
\newcommand\Astar{A^\star}
\newcommand{\Aunif}{A_\mathtt{unif}}
\DeclareMathOperator{\success}{\mathtt{success}}
\DeclareMathOperator{\PoSS}{\mathtt{PoSS}}
\DeclareMathOperator{\PoA}{\mathtt{PoA}}
\DeclareMathOperator{\PoSA}{\mathtt{PoSA}}
\newcommand{\Cex}{{C_\mathtt{ex}}}
\newcommand{\Cshare}{{C_\mathtt{share}}}
\newcommand{\N}[1]{{\widetilde{#1}}}
\newcommand{\SM}[1]{\delta_{#1}}
\newcommand{\maxv}{\mathtt{maxv}}
\begin{document}
\date{}
\maketitle

\begin{abstract}
A treasure is placed in one of $M$ boxes according to a known distribution and $k$ searchers are searching for it in parallel during $T$ rounds. How can one incentivize selfish players so that the probability that at least one player finds the treasure is maximized?
We focus on {\em congestion policies}  $C(\ell)$ specifying the reward a player receives being one of the $\ell$ players that (simultaneously) find the treasure first.
We prove that the {\em exclusive policy}, in which  $C(1)=1$ and $C(\ell)=0$ for $\ell>1$, yields a {\em price of anarchy} of $(1-(1-{1}/{k})^{k})^{-1}$, which is the best among all symmetric reward policies. We advocate the use of symmetric equilibria, and show that besides being fair, they are highly robust to crashes of players. Indeed, in many cases, if some small fraction of players crash, symmetric equilibria remain efficient in terms of their group performance while also serving as approximate equilibria. 
\end{abstract}

\section{Introduction}
 Searching in groups is ubiquitous in  multiple contexts, including in the biological world, in human populations as well as on the internet \cite{JACM,Social-foraging,hills}.  In many cases there is some prior on the distribution of the searched target. Moreover, when the space is large, each searcher typically needs to inspect multiple possibilities, which in some circumstances can only be done  sequentially. 
 This paper introduces a game theoretic perspective to such multi-round treasure hunt searches, generalizing a basic collaborative Bayesian framework previously introduced in \cite{JACM}.

Consider the case that a treasure is placed in one of $M$ boxes according to a known distribution~$f$ and that $k$ searchers are searching for it in parallel during $T$ rounds, each specifying a box to visit in each round. Assume w.l.o.g.\ that the boxes are ordered such that lower index boxes have higher probability to host the treasure, i.e., $f(x)\geq f(x+1)$. 
We evaluate the group performance by
the {\em success probability}, that is, the probability that the treasure is found by at least one searcher.

If coordination is allowed, letting searcher $i$ visit box $(t-1)k + i$ at time $t$ will maximize success probability.  
However, as simple as this algorithm is, it is very sensitive to faults of all sorts. 
For example, if an adversary that knows where the treasure is can crash a searcher before the search starts (i.e., prevent it from  searching), then it can reduce the search probability to zero.

The authors of \cite{JACM} suggested the use of {\em identical non-coordinating} algorithms. In such scenarios all processors act independently, using no communication or coordination, executing the same probabilistic algorithm, differing only by the results of their coin flips. 
As argued in \cite{JACM}, in addition to their economic use of communication, identical non-coordinating algorithms enjoy inherent robustness to different kind of faults. For example, assume that there are $k+k'$ searchers, and that an adversary can fail up to $k'$ searchers. Letting all searchers run the best non-coordinating algorithm for $k$ searchers guarantees that regardless of which $\ell\leq k'$ searchers fail, the overall search efficiency is at least as good as the non-coordinating one for $k$ players. 
Of course, since $k'$ players might fail, any solution can only hope to achieve the best performance of $k$ players. As it applies to the group performance we term this property as {\em group robustness}. Among the main results in \cite{JACM} is  identifying a non-coordinating algorithm, denoted $A^\star$, whose expected running time is minimal among non-coordinating algorithms. Moreover, for every given $T$, if this algorithm runs for $T$ rounds, it also maximizes the success probability.

The current paper studies  the game theoretic version of this  multi-round search problem\footnote{We concentrate on the normal form version 
in which players do not receive any feedback during the search (except when the treasure is found in which case the game ends). In particular, we assume that players cannot communicate with each other.}. 
The setting of \cite{JACM} assumes that the searchers adhere fully to the instructions of a central entity. 
In contrast, in a game theoretical context, searchers are self-interested and one needs to incentivize them to behave as desired, e.g., by awarding those players that find the treasure first. For many real world contexts, the competitive setting is in fact the more realistic one to assume. Applications range from crowd sourcing \cite{papanastasiou2017crowdsourcing}, multi-agent searching on the internet \cite{boinc}, grant proposals \cite{kleinberg2011mechanisms}, to even contexts of animals \cite{IFD-review2} (see Appendix \ref{App:animals}).

In the competitive setting, choosing a good {\em rewarding policy}  becomes a problem in algorithmic mechanism design \cite{GameTheory}. 
Typically, a reward policy is evaluated by its {\em price of anarchy (PoA)}, namely, the  ratio between the  performances of the best collaborative algorithm and the worst  equilibrium~\cite{papa}. 
Aiming to both accelerate the convergence process to an equilibrium and obtain a preferable one, the announcement of the reward policy can be accompanied by  a proposition for players to play  particular strategies that form a profile  at equilibrium.

This paper highlights the benefits of suggesting  (non-coordinating) {\em symmetric equilibria} in such scenarios, that is, to suggest the same non-coordinating strategy to be used by all players, such that the resulting profile is at  equilibrium. 
This is of course relevant assuming that the {\em price of symmetric stability (PoSS)}, namely,  the ratio between the  performances of the best collaborative algorithm and the best symmetric equilibrium, is low.
Besides the obvious reasons of fairness and simplicity, from the perspective of a central entity who is interested in the overall success probability, we 
obtain the group robustness property mentioned above, by suggesting that the $k+k'$ players play according to the  strategy that is a symmetric equilibrium for $k$ players. 
Obviously, this group robustness is valid only provided that the players indeed play according to the suggested strategy. 
However, the suggested strategy is guaranteed to be an equilibrium only for $k$ players, while in fact, the adversary may keep some of the extra $k'$ players alive.
Interestingly, however, in many cases, a symmetric equilibrium for $k$ players also serves as an approximate equilibrium for $k+k'$ players, as long as $k'\ll k$.
As we show, this {\em equilibrium robustness} property is rather general, holding for a class of games, that we call  {\em monotonously scalable} games. 

\subsection{The Collaborative Search Game}\label{sec:model}

A treasure is placed in one of $M$ boxes according to a known  distribution $f$ and $k$ players are searching for it in parallel during $T$ rounds. Assume w.l.o.g.\ that $f(x)>0$ for every $x$ and that  $f(x)\geq f(x+1)$.

\subsubsection{Strategies}

An {\em execution} of $T$ rounds is a sequence of box visitations $\sigma=x(1),x(2),\dots x(T)$, one for each round $i\leq T$. We assume that a player visiting a box has no information on whether other players have already visited that box or are currently visiting it. Hence, a  {\em strategy} of a player is a 
probability distribution over the space of executions of $T$ rounds. Note that the probability of visiting a box $x$ in a certain round may depend on the boxes visited by the player until this round, but not on the actions of other players. A strategy is {\em non-redundant} if at any given round it always checks a box it didn't check before (as long as  there are such boxes). 

A {\em profile} is a collection of $k$ strategies, one for each player.  Special attention will be devoted to {\em symmetric profiles}. In such profiles all players play the same  strategy (note that their actual executions may be different, due to different probabilistic choices).

\subsubsection{Probability Matrix}

While slightly abusing notation, we shall associate  each  strategy $A$ with its \emph{probability matrix}, 
\[
A: \set{1, \ldots, M} \times \set{1, \ldots, T} \rightarrow [0,1],
\] where 
$A(x,t)$ is the probability that strategy $A$ visits $x$ for the first time at round $t$.
We also denote
$\N{A}(x,t)$ as the probability that $A$ does not visit $x$ by, and including, time $t$. That is, 
$\N{A}(x,t)= 1 - \sum_{s\leq t} A(x,t)$ and $\N{A}(x,0) = 1$. 
For convenience we denote by $\SM{x,t}$ the matrix of all zeros except $1$ at $x,t$. Its dimensions will be clear from context.

\subsubsection{Group Performance}

A profile is evaluated by its  \emph{success probability}, i.e.,   
the probability that at least one player finds the treasure by time $T$. 
Formally, let $\prof$ be a profile. Then, 
\begin{align*}
\success(\prof) & = \sum_x f(x)\B{1-\prod_{A \in \prof} \N{A}(x,T)}.
\end{align*}
The expected running time in the symmetric case, which is $\sum_x f(x) \sum_t  \N{A}(x,t)^k$, was studied in \cite{JACM}. That paper identified a strategy, denoted $\Astar$, that minimizes this quantity. In fact, it  does so by minimizing the term $\sum_x f(x) \N{A}(x, t)^k$ for each $t$ separately. 
Note that minimizing the  case $t=T$ is exactly the same as maximizing the success probability.
Thus, restricted to the case where all searchers use the same strategy, $\Astar$ simultaneously optimizes the success probability as well as optimizes the expected running time. A description of $\Astar$ is provided below.

\subsubsection{Algorithm \texorpdfstring{\boldmath{$\Astar$}}{Astar}}\label{sec:Astar}
We note that in \cite{JACM} the matrix of $\Astar$ is given, and then an algorithm is explicitly described that has its matrix (Section 4.3 in \cite{JACM}). We describe the matrix only, as its details are necessary for this paper. 
Denote $q(x) = f(x)^{-1/
(k-1)}$. For each $t$, 
\[
\N{\Astar}(x,t) = \min(1, \alpha(t) q(x))
,\]
where $\alpha(t) \geq 0$ is such that $\sum_x \N{\Astar}(x,t) = M-t$. 
Of course, once $\N{\Astar}$ is known, then so is $\Astar$: $\Astar(x,t) = \N{\Astar}(x,t-1) - \N{\Astar}(x,t)$.


\subsubsection{Congestion Policies}

A natural way to incentivize players is by rewarding those players that find the treasure before others.
A {\em congestion policy} $C(\ell)$ is a function specifying the reward that a player receives if it is one of $\ell$ players that (simultaneously) find the treasure for the first time. We assume that $C(1)=1$, and that $C$ is non-negative and non-increasing. 
Due to the fact that the policy in which $C\equiv 1$, i.e., $C(\ell) = 1$ for every $\ell$, is rather degenerate, we henceforth assume that there exists an $\ell$ such that $C(\ell)<1$.  We shall give special attention to the following policies. 

\begin{itemize}
\item
The \emph{sharing policy} is defined by $\Cshare(\ell)=1/\ell$, namely, the treasure is shared equally among all those who find it first. 
\item
The \emph{exclusive policy} is defined by $\Cex(1)=1$, and $\Cex(\ell)=0$ for $\ell>1$, namely,
the treasure is given to the first one that finds it exclusively; if more than one discover it, they get nothing\footnote{In the one round game, the exclusive policy yields a utility for a player that equals its marginal contribution to the social welfare, i.e., the success probability \cite{Welfare}. However, this is not the case in the multi-round game.}.
\end{itemize}
Note that we allow for $C(\ell) \cdot \ell$ to be greater than 1. For example, a competition may give 1\$ to all winners, as long as there are at most 3. If it would give 1\$ no matter how many winners there are, we would be in the policy $C\equiv 1$.

A {\em configuration} is a triplet $(C,f,T)$, where $C$ is a congestion policy, $T$ is a positive integer, and $f$ is a positive non-increasing probability distribution on $M$ boxes.

\subsubsection{Values, Utilities and Equilibria}

Let $(C,f,T)$ be a configuration. The \emph{value} of box $x$ at round $t$ when playing against\footnote{We stress that the player we look at, is external, in the sense that it is not part of the profile $\prof$, and in fact, does not participate in the game in rounds $1,2,\ldots,t-1$.} a profile $\prof$ is the expected gain from visiting $x$ at  round $t$. Formally,
\begin{align*}
v_\prof(x,t) 
& = 
f(x) \sum_{\ell=0}^{k-1} C(\ell+1) \prob{
\begin{array}{l}
\text{$x$ was not visited before time $t$, and} \\
\text{at time $t$ is visited by $\ell$ players of $\prof$}
\end{array}}
\\ & =
f(x) \sum_{\ell=0}^{k-1} C(\ell+1) 
\sum_{\stackrel{I \subseteq \prof}{|I| = \ell}} \prod_{A \in I} A(x,t) \prod_{A \not\in I} \N{A}(x,t)
.\end{align*}
We can then define {\em utility of $A$  in round $t$} and the {\em utility of $A$} as:
\begin{align*}
U_\prof(A,t) & =  \sum_{x} A(x,t) \cdot v_\prof(x,t), 
\\
U_\prof(A) & = \sum_{t} U_\prof(A,t).
\end{align*}
Here are some specific cases we are interested~in:

\begin{itemize}
\item For symmetric profiles, $v_A(x,t)$  denotes the value when playing against $k-1$ players playing~$A$. In this case,
\[
v_A(x,t) = f(x) \sum_{\ell=0}^{k-1}  C(\ell+1)
\binom{k-1}\ell A(x,t)^\ell \N{A}(x,t)^{k-\ell-1}.
\]
$U_A(B,t)$ and $U_A(B)$ are defined in an analogous way in this case. 
\item For the exclusive policy,
\[
v_\prof(x,t) = f(x) \prod_{A \in \prof} \N{A}(x,t)
.\]
\item For the exclusive policy in symmetric profiles,
\[
v_A(x,t) = f(x) \N{A}(x,t)^{k-1}
.\]
\end{itemize}

Denote $\prof^{-A}$ is the set of players of $\prof$ excluding $A$. 
A profile $\prof$ is a {\em Nash equilibrium} under a configuration, if for any $A\in\prof$ and any other  strategy $B$,
$U_{\prof^{-A}}(A) \geq U_{\prof^{-A}}(B).$
Similarly, a  strategy $A$ is called a \emph{symmetric equilibrium} if the profile $A^k$ consisting of all $k$ players playing according to $A$ is an equilibrium.
We also define approximate equilibria:
\begin{definition}
 For $\epsilon>0$, we say a profile $\prof$ is a $(1+\epsilon)$-equilibrium if
for every $A \in \prof$ and for every strategy $B$, $U_{\prof^{-A}}(B) \leq (1+\epsilon) U_{\prof^{-A}}(A)$.
\end{definition}

\subsubsection{A Game of Doubly-Substochastic Matrices}

Both the expressions for the success probability and utility solely depend on the values of the probability matrices associated with the  strategies in question. Hence we view all strategies sharing the same  matrix as {\em equivalent}. Note that a matrix does not necessarily correspond to a unique strategy, as illustrated by the following equivalent strategies, for which $A(x,t) = B(x,t) = 1/M$ for every $t\leq M$ and $0$ thereafter:

\begin{itemize}
\item  Strategy $A$ chooses uniformly at every round one of the boxes it didn't choose yet.
\item In the first round, strategy $B$ chooses a random $x\in \set{0,\ldots,M-1}$. Then, at each round $t \geq 1$ it visits box $(x+t\mod M) + 1$.
\end{itemize}
Matrices are much simpler to handle than strategies, and so we would rather think of our game as a game of probability matrices than a game of strategies. For this we need to characterize which matrices are indeed probability matrices of strategies.
Clearly, a probability matrix is non-negative. Also, by their definition, each row and each column sums to at most 1. Such a matrix is called \emph{doubly-substochastic}. It turns out that these conditions are sufficient. That is, every doubly-substochastic matrix is a probability matrix of some strategy. 

A doubly-substochastic matrix is a \emph{partial permutation} if it consists of only 0 and 1 values.
The following is a generalization of the Birkhoff - von Neumann theorem, proved for example in \cite{horn}. 
\begin{theorem} \label{thm:birkhoff2}
A matrix is doubly-substochastic iff it is a convex combination of partial permutations.
\end{theorem}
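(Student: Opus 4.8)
The plan is to establish the two directions separately, treating the ``convex combination $\Rightarrow$ doubly-substochastic'' direction as routine and reducing the substantive direction to the classical Birkhoff--von Neumann theorem for square doubly-stochastic matrices. For the easy direction, I would note that every partial permutation is itself doubly-substochastic, and that the set of doubly-substochastic matrices of fixed dimensions is convex, being cut out by the linear constraints $A(x,t)\ge 0$, $\sum_t A(x,t)\le 1$, and $\sum_x A(x,t)\le 1$. Hence any convex combination of partial permutations satisfies the same constraints and is again doubly-substochastic.

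For the converse, the key idea is to complete a substochastic matrix to a genuinely doubly-stochastic matrix of larger dimension. First I would pad the given $M\times T$ matrix $A$ with zero rows or columns to obtain a square $N\times N$ matrix ($N=\max(M,T)$), which remains doubly-substochastic. Writing the row and column deficiencies as $r_i = 1-\sum_j A_{ij}\ge 0$ and $c_j = 1-\sum_i A_{ij}\ge 0$, I would form the $2N\times 2N$ block matrix
\[
B = \begin{pmatrix} A & \operatorname{diag}(r) \\ \operatorname{diag}(c) & A^{\top}\end{pmatrix},
\]
and verify directly that every row and every column of $B$ sums to exactly $1$: the top blocks use $\sum_j A_{ij}+r_i=1$ and $\sum_i A_{ij}+c_j=1$, while the transpose block $A^{\top}$ in the bottom-right corner absorbs precisely the remaining deficiencies, since its row sums equal the column deficiencies of $A$ and its column sums equal the row deficiencies. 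Thus $B$ is doubly-stochastic.

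I would then invoke the classical Birkhoff--von Neumann theorem to write $B=\sum_m \lambda_m P_m$ as a convex combination of $2N\times 2N$ permutation matrices. Restricting each $P_m$ to its top-left $N\times N$ block yields a matrix $Q_m$ with at most one $1$ in each row and column, i.e.\ a partial permutation, and restricting the identity $B=\sum_m \lambda_m P_m$ to that same block gives $A=\sum_m \lambda_m Q_m$. Finally, deleting the padded zero rows and columns recovers the original $M\times T$ matrix; since a submatrix of a partial permutation is again a partial permutation and the padded entries of $A$ are zero, this exhibits the original $A$ as the desired convex combination.

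The main obstacle is getting the doubly-stochastic completion right. One must check that the deficiencies are globally balanced, $\sum_i r_i = N - \sum_{i,j} A_{ij} = \sum_j c_j$, and, more importantly, that the off-diagonal diagonal blocks together with the transpose block cancel \emph{every} remaining row and column deficiency exactly, so that $B$ is truly doubly-stochastic rather than merely substochastic; once this completion is in place, the Birkhoff invocation and the two restriction steps are pure bookkeeping. (Alternatively, one could argue directly that the polytope of doubly-substochastic matrices has the partial permutations as its extreme points and apply the finite-dimensional Krein--Milman theorem, but the reduction above reuses the classical result and is cleaner.)
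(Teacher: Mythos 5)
Your proof is correct and follows essentially the same route the paper relies on: the paper does not prove this theorem itself but cites Horn's generalization of Birkhoff--von Neumann, which likewise embeds the doubly-substochastic matrix into a doubly-stochastic one at most four times larger and then applies the classical decomposition. Your $2N\times 2N$ block completion with $\operatorname{diag}(r)$, $\operatorname{diag}(c)$ and $A^{\top}$ is exactly such an embedding, and your row/column sum checks and the restriction to the top-left block are sound.
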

Furthermore, Birkhoff's construction \cite{birkhoff1946three} finds this decomposition in polynomial time, and guarantees it contains at most a number of terms as the number of positive elements of the matrix. The generalization of \cite{horn} does not change this claim significantly, as it embeds the doubly-substochastic matrix in a doubly-stochastic one which is at most 4 times larger.
\begin{corollary}\label{cor:birkhoff} \label{cor:mat2alg}
If matrix $A$ is doubly-substochastic then there is some strategy such that $A$ is its probability matrix. 
Furthermore, this strategy can be found in polynomial time, and is implementable as a polynomial algorithm.
\end{corollary}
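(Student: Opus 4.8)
The plan is to read Theorem~\ref{thm:birkhoff2} as a recipe. Since $A$ is doubly-substochastic, Theorem~\ref{thm:birkhoff2} yields weights $\lambda_i \ge 0$ with $\sum_i \lambda_i = 1$ and partial permutations $P_i$ such that $A = \sum_i \lambda_i P_i$. First I would turn each partial permutation into a concrete deterministic execution. Because every row of $P_i$ contains at most a single $1$, each box is designated in at most one round; reading off ``visit box $x$ in round $t$ whenever $P_i(x,t)=1$'' therefore never repeats a box. In the rounds whose column of $P_i$ is entirely zero the player simply (re)visits a box it has already seen, so that such a round contributes no new first-visit. The resulting execution $\sigma_i$ visits each designated box exactly once and at its designated time, so every visit it records is a first visit, and its probability matrix --- which for a deterministic execution is just the $0/1$ indicator of first-visits --- is exactly $P_i$.

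Next I would define the strategy $S$ that first draws an index $i$ with probability $\lambda_i$ and then plays $\sigma_i$. Its probability matrix is computed by averaging the deterministic first-visit matrices over the draw of $i$: by linearity of expectation, $S(x,t) = \sum_i \lambda_i \cdot [\sigma_i \text{ first-visits } x \text{ at } t] = \sum_i \lambda_i P_i(x,t) = A(x,t)$. Thus $A$ is the probability matrix of $S$, which proves the first assertion.

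For the efficiency claims I would appeal to the constructive version of the decomposition quoted after Theorem~\ref{thm:birkhoff2}: Birkhoff's procedure, in the generalization of~\cite{horn} (which enlarges the matrix by only a constant factor), produces the $P_i$ and $\lambda_i$ in polynomial time, with the number of terms bounded by the number of positive entries of $A$, hence polynomial in $M$ and $T$. Each $\sigma_i$ is obtained from $P_i$ by a single scan, so the whole representation of $S$ has polynomial size. The induced search algorithm then samples $i$ once (feasible in polynomial time, as there are polynomially many weights) and afterwards follows $\sigma_i$ round by round, spending polynomial time per round.

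The one step that needs genuine care is the construction of $\sigma_i$ from $P_i$, specifically the treatment of all-zero columns: to keep the first-visit matrix equal to $P_i$ these rounds must not introduce a spurious first visit, which is why the player revisits an already-seen box rather than a fresh one. This is harmless precisely when a box has already been visited; I would note that in the decomposition of a matrix whose first column sums to $1$ every $P_i$ has a nonzero first column, so some box is always available to revisit. Once this book-keeping is in place, the remainder is the routine linearity computation together with the cited polynomial bound on the number of partial permutations.
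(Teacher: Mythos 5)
Your argument is the same as the paper's: decompose $A$ into a convex combination of partial permutations via Theorem~\ref{thm:birkhoff2}, realize each partial permutation as a deterministic execution, mix them with the Birkhoff weights, and invoke the polynomial bound on the number of terms for efficiency. That is all correct in outline, and the linearity computation and complexity accounting are fine.

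The one place your write-up goes wrong is exactly the step you flagged: the treatment of all-zero columns of a $P_i$. Your fix (revisit an already-seen box) requires that some box has already been designated, and you justify this by saying that every $P_i$ inherits a nonzero first column from $A$. But $A$ is only doubly-\emph{sub}stochastic, so its first column may sum to strictly less than $1$, in which case some $P_i$ in the decomposition can have an all-zero first column (indeed an all-zero prefix of columns), and there is then no previously-seen box to revisit; visiting a fresh box instead would corrupt the first-visit matrix. The paper sidesteps this entirely by letting the deterministic strategy simply \emph{not visit any box} in such a round, and your proof is repaired the same way (or by noting that a round with no new first visit can be spent arbitrarily once at least one box has been seen, and must be an explicit ``no visit'' before that). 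With that adjustment the proof coincides with the paper's.
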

\begin{proof}
First note that the claim is true if $A$ is a partial permutation. The  strategy in this case will be a deterministic  strategy, which may sometimes choose not to visit any box. In the general case,
Theorem \ref{thm:birkhoff2} states that there exist $\theta_1,\theta_2,\ldots,\theta_k$, such that $\sum_{i=1}^k \theta_i =1$ and partial permutations $A_1,A_2,\ldots,A_k$, such that $A=\sum_{i=1}^k \theta_i A_i$. As mentioned, each $A_i$ is the probability matrix of some  strategy $B_i$. Define the following  strategy $B$ as follows: with probability $\theta_i$ run  strategy $B_i$. 
Then, the probability matrix of $B$ is $\sum_i \theta_i A_i = A$, as required.
\end{proof}
We will therefore view our game as a game of doubly-substochastic matrices.

\subsubsection{Greediness}

Informally, a strategy is greedy at a round if its utility in this round is the maximum possible in this round.
Formally, given a profile $\prof$ and some strategy $A$, we say that $A$ is \emph{greedy} w.r.t.\ $\prof$ at time $t$ if for any strategy $B$ such that for every $x$ and $s<t$, $B(x,s) = A(x,s)$, we have $U_\prof(A,t) \geq U_ \prof(B,t)$.
We say $A$ is greedy w.r.t.\ $\prof$ if it is greedy w.r.t.\ $\prof$ for each $t\leq T$. A strategy $A$ is called {\em self-greedy} (or {\em sgreedy} for short) if it is greedy w.r.t.\ the profile with $k-1$ players playing~$A$.

\subsubsection{Evaluating Policies}

Let $(C,f,T)$ be a configuration. 
Denote by $\text{Nash}(C,f,T)$  the set of equilibria for this configuration, and by $\text{S-Nash}(C,f,T)$ the subset of symmetric ones. Let  ${\mathcal{P}(T)}$  be the set of all profiles of $T$-round strategies.
We are interested in the following~measures.

\begin{itemize}
\item 
The {\em Price of Anarchy (PoA)} is 
\[
\PoA(C,f,T):=
 \F{\max_{\prof\in {\mathcal{P}(T)}}{\success}(\prof)}{\min_{\prof \in \text{Nash}(C,f,T)}{\success}(\prof)}
.\]

\item
The {\em  Price of Symmetric Stability (PoSS)} is 
\[
\PoSS(C,f,T):= \F{\max_{\prof\in {\mathcal{P}(T)}}{\success}(\prof)}{\max_{A \in \text{S-Nash}(C,f,T)}{\success}(A^k)}
.\]

\item 
The {\em  Price of Symmetric Anarchy  (PoSA)} is 
\[
\PoSA(C,f,T)= \F{\max_{\prof\in {\cal{P}(T)}}{\success}(\prof)}{\min_{A \in \text{S-Nash}(C,f,T)}{\success}(A^k)}
.\]
\end{itemize}



\subsection{On the Difficulty of the Multi-Round Game}

The setting of multi-rounds poses several challenges that do not exist in the single round game. An important one is the fact that, in contrast to the single round game, the multi-round game is not a potential game. Indeed, being a potential game has several implications, and a significant one is that such a game always has a pure equilibrium. However, we show that multi-round games do not always have pure equilibria and hence are not potential games. 

\begin{claim}
The single-round game is a potential game, yet the multi-round game is not.
\end{claim}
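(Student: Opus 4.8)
The plan is to treat the two directions separately, using the standard fact that a finite exact potential game---and hence its mixed extension---always admits a pure Nash equilibrium.

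For the single-round direction ($T=1$), I would observe that the game is exactly a weighted singleton congestion game: the boxes are the resources, and a player that picks box $x$ together with $\ell-1$ others receives $f(x)\,C(\ell)$. The natural candidate is the Rosenthal-type potential
\[
\Phi(x_1,\dots,x_k) = \sum_{x} \sum_{\ell=1}^{n_x} f(x)\,C(\ell),
\]
where $n_x$ is the number of players choosing box $x$. A direct check shows that when one player switches from $y$ to $x$, both its utility and $\Phi$ change by exactly $f(x)C(n_x+1)-f(y)C(n_y+1)$ (with $n_x,n_y$ counting the \emph{other} players), so $\Phi$ is an exact potential. The mixed extension then inherits an exact potential (the expectation of $\Phi$, which is multilinear in the players' distributions), so the single-round game is a potential game.

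For the multi-round direction I would exhibit one configuration with no pure Nash equilibrium and invoke the fact above. Take $k=2$ players, $T=2$ rounds, the exclusive policy $\Cex$, and $M=3$ boxes with strictly decreasing probabilities $f(1)>f(2)>f(3)>0$. A deterministic strategy is then an ordered pair $(a,b)$ of distinct boxes, and from $v_\prof(x,t)=f(x)\prod_{A\in\prof}\N{A}(x,t)$ one reads off the closed form
\[
U\big((a,b),(c,d)\big) = f(a)\,\mathbf{1}[a\ne c] + f(b)\,\mathbf{1}\big[b\notin\{c,d\}\big].
\]
The key computation is that against any opponent strategy $(c,d)$ the \emph{unique} best response is $(d,e)$, where $e$ is the remaining box, with value $f(d)+f(e)$: a round-$1$ collision makes it impossible to profit from the opponent's first box $c$, and the only way to also collect a second-round reward is to finish on the box $e$ the opponent never visits. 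I would then note that this best-response map is the rotation $(c,d)\mapsto(d,e)$, whose functional graph splits into two disjoint $3$-cycles, e.g.\ $(1,2)\to(2,3)\to(3,1)\to(1,2)$. A pure profile $(\sigma_1,\sigma_2)$ is an equilibrium only if each strategy is a best response to the other, which (best responses being unique) forces $\sigma_1$ to be a fixed point of the best-response map applied twice; but on a $3$-cycle that map has no fixed point, so no pure profile is an equilibrium.

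To close the argument I would recall that the multi-round game is the mixed extension of the finite game on deterministic executions (partial permutations, via Theorem~\ref{thm:birkhoff2}), and that $\success$ and the utilities are multilinear in the players' matrices; hence any exact potential would be multilinear too and attain its maximum at a pure profile, which would be a pure Nash equilibrium. Since the example above has none, the multi-round game is not a potential game. The main obstacle is the middle step---pinning down the best response in closed form and recognizing the cyclic, rock--paper--scissors-like structure it induces through the tension between grabbing high-probability boxes early and avoiding collisions; the surrounding reductions are routine bookkeeping.
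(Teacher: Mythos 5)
Your proposal is correct, and while the single-round half is essentially identical to the paper's (the same Rosenthal-type potential $\sum_x\sum_{\ell=1}^{n_x}f(x)C(\ell)$, extended multilinearly to mixed strategies), the multi-round half takes a genuinely different route. The paper's counterexample uses $M=3$, $T=2$, $k=2$ with the \emph{uniform} distribution and works for \emph{every} admissible policy $C$ (using only $C(2)<1$), ruling out pure equilibria by a short two-case analysis of player 1's second-round choice; your counterexample fixes the exclusive policy and strictly decreasing $f$, derives the closed form $U((a,b),(c,d))=f(a)\mathbf{1}[a\ne c]+f(b)\mathbf{1}[b\notin\{c,d\}]$, and identifies the best-response map $(c,d)\mapsto(d,e)$ as a union of two $3$-cycles with no fixed point of $BR^2$ --- a cleaner, rock--paper--scissors-style structural explanation of \emph{why} no pure equilibrium exists, at the cost of covering only one policy (which suffices for the claim as stated, but is weaker than what the paper actually proves). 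Two small points to tighten: (i) you silently restrict deterministic strategies to ordered pairs of \emph{distinct} boxes, whereas a deterministic strategy may revisit a box or skip a round; this is harmless because such strategies are never best responses against any opponent in this configuration (there is always an unvisited box of positive value to move the wasted round to), but it should be said, since the no-fixed-point argument needs \emph{both} players' strategies to lie in the six-element set on which $BR$ is the stated rotation. (ii) Your closing step --- that an exact potential on the mixed extension must be multilinear and hence maximized at a pure profile, which would be a pure equilibrium --- is actually more careful than the paper's one-line appeal to ``potential games have pure equilibria,'' and is worth keeping.
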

\begin{proof}
For the single round, assume that $\prof$ is a deterministic profile, and let $x_\prof$ be the number of players that choose box $x$ in $\prof$. Denote,
\[\Phi(\prof)
=
\sum_x f(x) \sum_{\ell=1}^{x_\prof} C(\ell),\]
where we use the convention that $\sum_{\ell=1}^0 C(\ell)=0$.
If a player changes strategy and chooses (deterministically) some box $y$ instead of box $x$, then the change in its utility is $f(y) C(y_\prof+1) - f(x) C(x_\prof)$. This is also the change that $\Phi(\prof)$ sees. This extends naturally to mixed strategies, and so the single round game is a potential game. 

On the other hand, the multi-round game does not always have a pure equilibrium, and so is not a potential game. For example, the following holds for any policy $C$. Consider the case of a three boxes ($M=3$), two rounds ($T=2$), two players ($k=2$), and all boxes have an equal probability of holding the treasure ($f(x) = 1/3$). Note that $C(2) < 1$ since here $k=2$ and we assumed $C\not\equiv 1$.

Assume there is some deterministic profile that is at equilibrium, and w.l.o.g.\ assume player 1's first pick is box 1.
There are two cases:
\begin{enumerate}
\item
Player 1 picks it again in the second round. Player 2's strictly best response is to pick box 2 and then 3 (or the other way around). In this case, player 1 would earn more by first picking box 3 (box 2) and then box 1. In contradiction.
\item 
Player 1 picks a different box in the second round. W.l.o.g.\ assume it is box 2.
Player 2's strictly best response is to first take box 2 and then take box 3. However, player 1 would then prefer to start with box 3 and then box 1. Again a contradiction.
\qedhere
\end{enumerate}
\end{proof}

Another important difference
is that for policies that incur high levels of competition (such as the exclusive policy),  profiles that maximize the success probability are at equilibrium in the single round case, whereas they are not in the multi-round game.

In the single-round game, when $M\geq k$, the success probability is maximized when each player exclusively visits one box in $1,2,\ldots, k$ with probability 1. 
Under the exclusive policy, for example, such a profile is also at equilibrium.  
In fact, if $f(k)\geq f(1)/2$ then the same is true also for the sharing policy. 

For the multi-round setting, when $M\geq Tk$, an optimal scenario is also achieved by a deterministic profile, e.g., when player $i$ visits box $i+(t-1)k$ in round~$t$. However, this profile would typically not be an equilibrium, even under the exclusive policy.
Indeed, when $f(x)$ is strictly decreasing, player 2 for example, can gain more by stealing box $k+1$ from player 1 in the first round, then safely taking box 2 in the second round, and continuing from there as scheduled originally. This shows that in the multi-round game, the best equilibrium has only sub-optimal success probability.

\subsection{Our Results}

\subsubsection{Equilibrium Robustness}

We first provide a  simple, yet general, robustness result, that holds for symmetric (approximate) equilibria in a family of games, termed {\em monotonously scalable}. 
Informally, these are games in which 
the sum of utilities of players can only increase when more players are added, yet for each player, its individual utility can only decrease.  Our search game with the sharing policy is one such example. 
\begin{restatable}{theorem}{ThmRobustGeneral}\label{thm:robustgeneral}
Consider a symmetric monotonously scalable  game. If $A$ is a symmetric $(1+\epsilon)$-equilibrium for $k$ players, then it is an  $(1+\epsilon)(1+\ell/k)$-equilibrium when played by $k+\ell$ players. 
\end{restatable}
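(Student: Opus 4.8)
The plan is to chain three inequalities, invoking each of the two defining monotonicity properties of a monotonously scalable game exactly once. Write $U_{A^m}(B)$ for the utility of a single player playing $B$ against $m$ players all playing $A$, so that in the notation of the excerpt this is $U_{\prof^{-A}}(B)$ with $\prof^{-A}=A^m$. By symmetry, when all $m+1$ players play $A$ each one earns $U_{A^m}(A)$, so the total utility of that profile is $(m+1)U_{A^m}(A)$. The two properties I will use are: (i) \emph{individual monotonicity} -- adding opponents can only hurt a fixed player, i.e.\ $U_{A^{m'}}(B)\le U_{A^m}(B)$ whenever $m'\ge m$, for every strategy $B$; and (ii) \emph{total monotonicity} -- the sum of utilities is non-decreasing in the number of players, i.e.\ $(m+1)U_{A^m}(A)$ is non-decreasing in $m$. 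The goal is to show that for every strategy $B$, $U_{A^{k+\ell-1}}(B)\le (1+\epsilon)(1+\ell/k)\,U_{A^{k+\ell-1}}(A)$, which is precisely the $(1+\epsilon)(1+\ell/k)$-equilibrium condition for $k+\ell$ players.

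Fix an arbitrary deviation $B$. First I apply individual monotonicity (i) to $B$, increasing the opponent count from $k-1$ to $k+\ell-1$, to get $U_{A^{k+\ell-1}}(B)\le U_{A^{k-1}}(B)$. Next, since $A$ is a symmetric $(1+\epsilon)$-equilibrium for $k$ players, any best response against $k-1$ copies of $A$ gains at most a $(1+\epsilon)$ factor over $A$ itself, so $U_{A^{k-1}}(B)\le (1+\epsilon)\,U_{A^{k-1}}(A)$. Finally I relate the $k$-player symmetric utility to the $(k+\ell)$-player one via total monotonicity (ii): from $(k+\ell)\,U_{A^{k+\ell-1}}(A)\ge k\,U_{A^{k-1}}(A)$ I obtain $U_{A^{k-1}}(A)\le (1+\ell/k)\,U_{A^{k+\ell-1}}(A)$. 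Composing the three inequalities yields $U_{A^{k+\ell-1}}(B)\le(1+\epsilon)(1+\ell/k)\,U_{A^{k+\ell-1}}(A)$, as required.

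The only genuine difficulty is conceptual rather than computational: the two monotonicity hypotheses must be applied to the right objects. Property (i) is used for the \emph{deviator} $B$ -- a fixed, possibly non-$A$ strategy -- facing more and more copies of $A$, whereas property (ii) is used for the fully \emph{symmetric} profile in which everyone, including the $\ell$ added players, plays $A$. Keeping the opponent bookkeeping straight ($k-1$ versus $k+\ell-1$ opponents, and the factor $k$ versus $k+\ell$ that converts the total-utility inequality into the multiplicative $(1+\ell/k)$ loss) is where a careless argument could slip. I also need to confirm that the formal definition of \emph{monotonously scalable} really delivers (i) for \emph{every} strategy $B$, not merely for $B=A$; if the decrease were only guaranteed along the all-$A$ profile, the first step would fail for genuine deviations. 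Assuming the definition matches the informal description, the argument uses no further structure of the search game -- neither $f$, nor $T$, nor the particular congestion policy -- which is exactly what makes the robustness statement general.
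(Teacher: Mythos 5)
Your proof is correct and is essentially the paper's own argument: the same three-inequality chain, using individual monotonicity on the deviator $B$, the $k$-player equilibrium property, and the identity $\sigma(A^{m+1})=(m+1)U_{A^{m}}(A)$ together with monotonicity of $\sigma$ to convert between the $k$- and $(k+\ell)$-player symmetric utilities. Your one caveat is resolved by the paper's definition, since property (2) is stated for any player of any profile $\prof\subseteq\prof'$, so taking $\prof=\set{B}\cup A^{k-1}$ and $\prof'=\set{B}\cup A^{k+\ell-1}$ gives $U_{A^{k+\ell-1}}(B)\le U_{A^{k-1}}(B)$ for an arbitrary deviation $B$.
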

Theorem \ref{thm:robustgeneral} is applicable in fault tolerant contexts. Consider a monotonously scalable game with $k+k'$ players out of which at most $k'$ may fail. Let $A_k$ be a  symmetric (approximate) equilibrium designed 
for $k$ players and assume that its social utility is high compared to the optimal profile with $k$ players. 
The theorem implies that if players play $A_k$, then regardless of which    $\ell\leq k'$ players fail (or decline to participate), the incentive to switch 
strategy would be very small,
as long as  $k'\ll k$. Moreover, due to symmetry, if the social utility of the game is monotone, then the social utility of $A_k$ when played with $k$ players is guaranteed when playing with more.
Thus, in such cases we obtain both group robustness and equilibrium~robustness. 

\subsubsection{General Congestion Policies}
Coming back to our search game, we consider general policies, focus on symmetric profiles, and specifically, on the properties of sgreedy strategies. 
\begin{restatable}{theorem}{thmGreeeeed}\label{thm:greeeeed} 
For every policy $C$ there exists a  non-redundant sgreedy strategy. Moreover, all such strategies are equivalent and are symmetric $(1+C(k))$-equilibria.
\end{restatable}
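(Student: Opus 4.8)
My plan is to build a non-redundant sgreedy strategy one round at a time, and then handle equivalence and the equilibrium bound separately.

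\emph{Existence and equivalence.} Fix the columns for rounds $s<t$, so the remaining mass $n_x:=\N{A}(x,t-1)$ of each box is determined. Writing $a=A(x,t)$, the per-box value $v_A(x,t)=f(x)\sum_{\ell}C(\ell+1)\binom{k-1}{\ell}a^{\ell}(n_x-a)^{k-1-\ell}$ equals $f(x)\,n_x^{\,k-1}\,\mathrm{E}[C(1+B)]$ with $B\sim\mathrm{Bin}(k-1,a/n_x)$; since $C$ is non-increasing this is a continuous, non-increasing function of $a$ on $[0,n_x]$, running from $f(x)n_x^{k-1}$ at $a=0$ down to $C(k)f(x)n_x^{k-1}$ at $a=n_x$. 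Greediness at round $t$ asks that $A(\cdot,t)$ maximize the linear form $\sum_x B(x,t)\,v_A(x,t)$ over doubly-substochastic columns with the coefficients frozen at $A$, a self-consistent water-filling problem. I would introduce a threshold $\lambda$, set each box to the largest $a\le n_x$ with $v_A(x,t)\ge\lambda$, and fix $\lambda$ by the intermediate value theorem so the column sums to $\min(1,\sum_x n_x)=1$ (for $t\le M$). The resulting column is both self-consistent and LP-optimal, so iterating over $t$ yields a doubly-substochastic matrix, realized by a strategy via Corollary~\ref{cor:mat2alg}; unit column sums force non-redundancy. For equivalence I would note $v_A(x,t)$ is \emph{strictly} decreasing whenever $C$ strictly drops somewhere on $\{1,\dots,k\}$ (the non-degenerate case, guaranteed by $C(1)=1$ and $C\not\equiv1$ when that drop occurs by round $k$), so the threshold and every entry are uniquely determined by the history, and induction on $t$ shows any two sgreedy strategies share one matrix.

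\emph{Reduction for the equilibrium bound.} The weights $w(x,t):=v_A(x,t)$ depend only on the profile $A^{k-1}$, so a deviator's utility $U_{A^{k-1}}(B)=\sum_{x,t}B(x,t)w(x,t)$ is \emph{linear} in the matrix $B$; by Theorem~\ref{thm:birkhoff2} its maximum over doubly-substochastic $B$ is attained at a partial permutation $B^\star$, a deterministic strategy visiting distinct boxes. Water-filling gives a clean picture of $A$: with threshold $\lambda_t$, every visited box has $w(x,t)\ge\lambda_t$ (equality on ``interior'' boxes $0<A(x,t)<n_x$; and $w(x,t)=C(k)f(x)n_x^{k-1}\ge\lambda_t$ on ``saturated'' boxes $A(x,t)=n_x$), every unvisited box has $w(x,t)\le\lambda_t$, and the column sums to $1$. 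This yields the identity $U_{A^{k-1}}(A)=\sum_t\lambda_t+\sum_{x\in F}\nu_x\delta_x$, where $F$ is the set of boxes saturated at their (unique) round $t_x$, $\nu_x:=n_x(t_x)$, and $\delta_x:=w(x,t_x)-\lambda_{t_x}\ge0$. The dual point $q_t=\lambda_t$, $p_x=\max_t(w(x,t)-\lambda_t)_+$ is feasible and bounds $U_{A^{k-1}}(B^\star)\le\sum_t\lambda_t+\sum_{x\in F}\delta_x$.

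Subtracting, the entire excess is $U_{A^{k-1}}(B^\star)-U_{A^{k-1}}(A)\le\sum_{x\in F}(1-\nu_x)\delta_x$. Boxes saturated in a single shot ($\nu_x=1$) drop out, matching the fact that the exclusive policy has $C(k)=0$, hence $\delta_x=0$, and is therefore an \emph{exact} equilibrium. It then suffices to prove the per-box inequality $(1-\nu_x)\delta_x\le C(k)\,\Phi_x$, where $\Phi_x:=\sum_{s\le t_x}A(x,s)w(x,s)$ is the lifetime contribution of box $x$ to $U_{A^{k-1}}(A)$; summing over $x\in F$ then bounds the excess by $C(k)\,U_{A^{k-1}}(A)$, giving the claimed $(1+C(k))$-equilibrium. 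The leverage is that a box in $F$ absorbs total mass exactly $1$ over its lifetime, so the value it banks while being \emph{partially} filled at earlier rounds (interior entries carrying the higher thresholds $\lambda_s\ge\lambda_{t_x}$, back when $n_x$ was still large) dominates the small factor-$C(k)$ value it finally yields upon saturation.

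The main obstacle is precisely this last per-box charging. Bounding $\delta_x\le w(x,t_x)=C(k)f(x)\nu_x^{k-1}$ and lower-bounding $\Phi_x$ only by the saturating round gives the weaker factor $2$; extracting the sharp $C(k)$ requires quantifying the interior contributions $\sum_{s<t_x}A(x,s)\lambda_s$ accumulated before $t_x$. Making this precise — plausibly via a telescoping or integral estimate of $\Phi_x$ against $(1-\nu_x)\nu_x^{k-1}$, supported by monotonicity of the thresholds $\lambda_1\ge\lambda_2\ge\cdots$ — is where the real work lies, whereas the existence, realizability, and uniqueness parts are comparatively routine.
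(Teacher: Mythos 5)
Your existence and uniqueness arguments are essentially the paper's own (Lemma \ref{lem:existence}): the same round-by-round water-filling with an intermediate-value choice of the threshold so the column sums to $1$, and the same strict-monotonicity argument (Lemma \ref{lem:alg-mono}) forcing any two non-redundant sgreedy matrices to coincide. Where you genuinely diverge is the equilibrium bound. The paper bounds the deviator crudely by $U_A(B)\le\sum_t \maxv(t)$ and then telescopes over the fill rounds using $U_A(A,s)\ge\maxv(t)$ for $s<t$, leaving a single residual term $\maxv(t_0)\le C(k)U_A(A,1)$. Your LP-duality accounting --- the dual point $q_t=\lambda_t$, $p_x=\max_t(w(x,t)-\lambda_t)_+$, reducing the whole excess to $\sum_{x\in F}(1-\nu_x)\delta_x$ --- is a different and in fact tighter decomposition, and it localizes the loss to the filled boxes in a transparent way.

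However, as written the proof has a gap exactly where you say ``the real work lies'': the per-box inequality $(1-\nu_x)\delta_x\le C(k)\,\Phi_x$ is asserted but not proved, and you state that your available estimates only yield a factor $2$. Since this inequality is the entire content of the $(1+C(k))$ constant, the theorem is not established by what you wrote. The inequality is true, and the missing step is short; it is precisely the observation the paper uses in its Item 4. If $C(k)=0$ then $\delta_x\le w(x,t_x)=C(k)f(x)\nu_x^{k-1}=0$ and there is nothing to prove. Otherwise, for any round $s<t_x$ with $A(x,s)>0$, keeping only the $\ell=0$ term gives $w(x,s)=v_A(x,s)\ge f(x)C(1)\N{A}(x,s)^{k-1}\ge f(x)\nu_x^{k-1}$, because $\N{A}(x,s)\ge\N{A}(x,t_x-1)=\nu_x$; on the other hand $\delta_x\le w(x,t_x)=C(k)f(x)\nu_x^{k-1}$ since the box is filled at $t_x$. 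So every unit of pre-fill mass earns at least $\delta_x/C(k)$, whence $\Phi_x\ge\sum_{s<t_x}A(x,s)w(x,s)\ge(1-\nu_x)\delta_x/C(k)$, which is exactly the bound you need (no telescoping or integral estimate over the thresholds $\lambda_s$ is required). With that line inserted your argument closes and is arguably cleaner than the paper's; without it, the central quantitative claim is unproven.
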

When $C(k)=0$ this shows that a non-redundant sgreedy strategy is actually a symmetric equilibrium. We next claim  that this is the only symmetric equilibrium (up to equivalence).

\begin{restatable}{claim}{lemGreedyCk}\label{lem:greedyCk}
For any policy such that $C(k)=0$, 
all symmetric equilibria are equivalent. 
\end{restatable}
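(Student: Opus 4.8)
The plan is to deduce the claim from Theorem~\ref{thm:greeeeed}: since that theorem asserts that all non-redundant sgreedy strategies share a single matrix, it suffices to prove that every symmetric equilibrium $A$ is (equivalent to) a non-redundant sgreedy strategy. The starting point is that the best-response problem is linear. When the other $k-1$ players all play $A$, the value matrix $v_A$ is a fixed nonnegative matrix, and the utility $U_A(B)=\sum_{x,t}B(x,t)\,v_A(x,t)=\langle B,v_A\rangle$ is a linear functional of $B$ over the polytope of doubly-substochastic matrices. Hence $A$ is a symmetric equilibrium if and only if $A$ attains $\max_B\langle B,v_A\rangle$. Crucially, in this comparison $v_A$ is held fixed, so the fixed-point nature of the game (that $v_A$ is itself generated by $A$) does not enter until the very end.

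First I would establish non-redundancy. Because $C(1)=1$, the $\ell=0$ summand of $v_A(x,t)$ equals $f(x)\,\N{A}(x,t)^{k-1}$, which is strictly positive on every box that has not yet surely been visited by round $t$. Consequently, if some column $t$ (with unsaturated boxes still available) carried total mass strictly below $1$, then replacing $A$ by $A+\varepsilon\,\SM{y,t}$ for a small $\varepsilon>0$ and an available box $y$ would keep the matrix doubly-substochastic while strictly increasing $\langle B,v_A\rangle$, contradicting optimality. Thus every symmetric equilibrium fills each column as long as boxes remain, i.e.\ is non-redundant.

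The heart of the argument is to show that the global maximizer $A$ must in fact be greedy round by round with respect to $v_A$, which is exactly the statement that $A$ is sgreedy. I would argue by an exchange argument: if $A$ failed to be greedy at the first offending round $t_0$, there would be boxes $y,z$ with $v_A(y,t_0)>v_A(z,t_0)$ such that $A$ uses $z$ at round $t_0$ while $y$ still has capacity, and rerouting mass so as to visit $y$ at round $t_0$ while deferring the visit of $z$ to a later round yields a matrix $B$ with $\langle B,v_A\rangle>\langle A,v_A\rangle$. The condition $C(k)=0$ is what makes such deferrals harmless: since every surviving summand of $v_A$ carries a factor $\N{A}(x,t)$, the value of a box collapses to $0$ once it is saturated, so pushing visits toward the tail never destroys value that could not otherwise be recovered.

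The step I expect to be the main obstacle is making this exchange rigorous for a general policy with $C(k)=0$ rather than only for the exclusive policy. For the exclusive policy $v_A(x,t)=f(x)\,\N{A}(x,t)^{k-1}$ is manifestly non-increasing in $t$, and greedy optimality is the classical fact that scheduling items of time-decreasing value is solved greedily; for general $C$, however, the summands $A(x,t)^\ell$ with $\ell\ge 1$ can spoil monotonicity of $v_A$ in $t$, so the reshuffle must be chosen and bounded more carefully. A second subtlety is upgrading equality of optimal utilities to equality of matrices: when $f$, and hence $v_A$, has ties, the optimal face of the linear program may contain several matrices, and here the fixed-point coupling (that $v_A$ is itself generated by $A$) must finally be invoked to rule out all but the greedy matrix, yielding uniqueness up to equivalence. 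Once $A$ is shown to be non-redundant and sgreedy, Theorem~\ref{thm:greeeeed} immediately gives that it is equivalent to the unique non-redundant sgreedy strategy, completing the proof.
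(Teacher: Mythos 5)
Your overall strategy is the same as the paper's: show that any symmetric equilibrium is non-redundant and sgreedy, then invoke the uniqueness of non-redundant sgreedy strategies from Theorem~\ref{thm:greeeeed} (Lemma~\ref{lem:existence}). However, the central exchange argument has a genuine gap, and it sits exactly where the hypothesis $C(k)=0$ does its work. Suppose $A(z,t)>0$, $v_A(y,t)>v_A(z,t)$ and $\N{A}(y,t)>0$. If row $y$ of $A$ sums to strictly less than $1$, the plain swap $A+\epsilon(\SM{y,t}-\SM{z,t})$ already contradicts equilibrium, and $C(k)=0$ is never used. The problematic case is when row $y$ already sums to $1$: then no mass can be added anywhere in row $y$ without first being removed from it, and your proposed fix --- deferring the visit of $z$ to a later round --- relaxes the wrong constraint (it frees space in column $t$ and preserves row $z$, but does nothing for row $y$). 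The move your own intuition points to, but your exchange does not implement, is the paper's: let $t'>t$ be the round at which $y$ is filled; there $v_A(y,t')=f(y)\,C(k)\,A(y,t')^{k-1}=0$, so the three-term alteration $B=A+\epsilon(-\SM{z,t}+\SM{y,t}-\SM{y,t'})$ is doubly-substochastic for small $\epsilon$ and strictly increases utility by $\epsilon(v_A(y,t)-v_A(z,t))$. Without this step the argument does not go through.

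Two smaller points. Your worry that the summands $A(x,t)^{\ell}$ with $\ell\ge1$ might spoil monotonicity of $v_A$ in $t$ is unfounded: Lemma~\ref{lem:mono} establishes $v_A(x,t+1)\le v_A(x,t)$ for an arbitrary congestion policy, so this is not an obstacle. Also, your non-redundancy argument (add $\epsilon$ at $(y,t)$ for an ``available'' $y$) silently assumes some row has slack; when $\success(A)=1$ every row is full and one must instead shift mass from a later visit of the same box into the redundant column --- this is exactly Lemma~\ref{lem:sym-non-redundant}, which you should cite rather than re-derive. Finally, your closing ``second subtlety'' about ties in the linear program is a non-issue: once $A$ is shown non-redundant and sgreedy, the uniqueness clause of Theorem~\ref{thm:greeeeed} does all the remaining work.
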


Theorem \ref{thm:greeeeed} is non-constructive because it requires calculating the inverse of non-trivial functions. Therefore, we resort to an approximate solution.

\begin{restatable}{theorem}{thmSgreedyComplete}\label{thm:SgreedyComplete}
Given $\theta>0$, there exists an algorithm that takes as input a configuration, and produces a symmetric $(1+C(k))(1+\theta)$-equilibrium. The algorithm runs in  polynomial time in $T$, $k$, $M$, $\log(1/\theta)$, $\log(1/(1 - C(k)))$, and $\log(1/f(M))$.
\end{restatable}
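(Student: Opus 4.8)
The plan is to make the existence proof of Theorem~\ref{thm:greeeeed} constructive by replacing the exact (transcendental) inversions it performs with binary search, and then to argue that bounded precision suffices. Recall that an sgreedy strategy is built one row at a time: having fixed rows $1,\dots,t-1$, and hence the remaining mass $r_x := \N{A}(x,t-1)$ of every box, the row $A(\cdot,t)$ must be a best response to the values $v_A(\cdot,t)$ that it itself induces. Because raising $A(x,t)$ can only lower $v_A(x,t)$ (extra congestion with one's own copies only decreases the reward, as $C$ is non-increasing), this fixed point has a water-filling form: there is a threshold $\lambda(t)$ so that each box either gets $A(x,t)=0$ (when $f(x)\,r_x^{k-1}\le\lambda(t)$), is saturated with $A(x,t)=r_x$, or gets an interior mass at which $v_A(x,t)=\lambda(t)$; and $\lambda(t)$ is the unique value for which the total allocated mass equals the non-redundant budget $\min(1,\sum_x r_x)$. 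For the exclusive policy this recovers exactly the description of $\Astar$, with $\N{\Astar}(x,t)=\min(1,\alpha(t)q(x))$. The only obstruction to computing the row exactly is inverting $v_A(x,t)=\lambda$ for $A(x,t)$ and solving the budget equation for $\lambda(t)$.

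The algorithm therefore computes each row by a nested binary search. The outer loop searches for $\lambda(t)$; since the total allocated mass is monotone non-increasing in $\lambda$, every probe halves the search interval. For a fixed $\lambda$, the inner loop computes, for each box independently, the mass $A(x,t)$ solving $v_A(x,t)=\lambda$ — again a one-dimensional monotone equation in $A(x,t)\in[0,r_x]$, solved by binary search and clamped to $[0,r_x]$. Both searches are run to a common absolute precision $\eta$ fixed below. After $T$ rounds this yields a doubly-substochastic matrix, which by Corollary~\ref{cor:mat2alg} is the matrix of an efficiently implementable strategy.

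For the running time I must bound the number of bits. The values $v_A(x,t)$ lie in $[\,f(M)\,C(k)\,r_x^{k-1},\;f(x)\,r_x^{k-1}\,]$, so the outer search lives in an interval whose smallest relevant scale is controlled by $f(M)$ and whose relative resolution must reach $\theta$; this is where $\log(1/f(M))$ and $\log(1/\theta)$ enter. The delicate factor is $\log(1/(1-C(k)))$: as $C(k)\to 1$ the policy approaches the degenerate $C\equiv 1$, the map $A(x,t)\mapsto v_A(x,t)$ becomes nearly constant (its total variation over $[0,r_x]$ is proportional to $1-C(k)$), and pinning the interior masses down to a fixed target accuracy requires resolving this nearly-flat function, costing $\log(1/(1-C(k)))$ bits. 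Everything else is polynomial in $T$, $k$, and $M$: the per-round work is $O(M)$ box-solves, each a binary search, repeated over $T$ rounds.

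The main obstacle is the error-propagation analysis that turns a per-round precision $\eta$ into the global guarantee, and two effects must be controlled. First, $v_A$ depends on the entire matrix, so an error in an early row perturbs the targets used in later rows; I would show these perturbations do not amplify, using the monotonicity and the explicit Lipschitz dependence of $v_A$ on the matrix entries. Second, I must convert approximate per-round optimality into an approximate \emph{equilibrium}: I would prove a robustness lemma stating that if the computed $A$ is within a factor $1+\theta_0$ of sgreedy in every round, then $\max_B U_A(B)\le(1+C(k))(1+\theta_0)\,U_A(A)$, obtained by re-running the accounting behind Theorem~\ref{thm:greeeeed} with its inequalities loosened by $1+\theta_0$. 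Choosing $\eta$, and hence $\theta_0$, as a sufficiently small polynomial function of $\theta$, $1-C(k)$ and $f(M)$ then yields a $(1+C(k))(1+\theta)$-equilibrium while keeping the bit-complexity within the stated bounds. I expect this conditioning step — certifying that the flatness measured by $1-C(k)$ and the spread measured by $f(M)$ enter only logarithmically — to be the technically sensitive part.
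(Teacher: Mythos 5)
Your algorithmic skeleton is the same as the paper's: build the matrix row by row, with an outer binary search on the per-round threshold value and an inner binary search inverting the monotone map $A(x,t)\mapsto v_A(x,t)$ for each box, then invoke Corollary~\ref{cor:mat2alg} to implement the matrix as a strategy. Your proposed ``robustness lemma'' is also the right second half: the paper proves exactly such a statement (Lemma~\ref{lem:sgreedyUse}), re-running the four-item accounting of Theorem~\ref{thm:greeeeed} with slack. However, the first of your two ``effects to control'' --- showing that errors in early rows, propagated into the targets of later rows, ``do not amplify'' --- is the step that fails, and the paper explicitly warns against it: if you measure the quality of the computed matrix by its distance to the exact sgreedy matrix, the per-round errors compound exponentially in $T$, and no polynomial per-round precision rescues this. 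The fix is to never compare with the exact sgreedy matrix at all. One defines $\epsilon$-sgreediness \emph{intrinsically} (Definition~\ref{def:eps-sgreedy}: whenever $A(x,t)>0$ and $v_A(y,t)>v_A(x,t)+\epsilon$ then $\N{A}(y,t)=0$, where $v_A$ is computed from the matrix the algorithm actually produced) together with $\delta$-redundancy (each column sums to at least $1-\delta$). These are properties the binary searches can certify round by round with no cross-round error propagation, and Lemma~\ref{lem:sgreedyUse} shows they alone imply $U_A(B)\le \frac{1+C(k)}{1-\delta}U_A(A)+(T+1)\epsilon$. Your proposal gestures at this (``within a factor $1+\theta_0$ of sgreedy in every round'') but does not commit to it, and the Lipschitz-propagation argument you sketch in its place is the dead end.

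Two further quantitative points you would need. First, the approximation in the paper is additive, not multiplicative, so converting the $(T+1)\epsilon$ slack into the factor $(1+\theta)$ requires a lower bound on $U_A(A)$; the paper gets $U_A(A)\ge f(2)/2^{k+1}$ from the first round and chooses $\epsilon=\frac{\theta}{2(T+1)}\cdot\frac{f(2)}{2^{k+1}}$ accordingly. Second, your claim that the flatness of $p\mapsto v_A(x,t)$ is ``proportional to $1-C(k)$'' understates the problem: the correct two-sided bound (Lemma~\ref{lem:bounds}) is
\[
\F{1-C(k)}{k-1}\,\epsilon^{k-1}\;\le\;\phi(p)-\phi(p+\epsilon)\;\le\;4^{k-1}\epsilon ,
\]
so the function can have a degree-$(k-1)$ tangency and the inner search needs threshold precision of order $\frac{f(x)(1-C(k))}{k-1}\bigl(\frac{\delta}{4M}\bigr)^{k-1}$. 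This costs $O\bigl(\log(1/f(M))+\log(1/(1-C(k)))+k\log(M/\delta)\bigr)$ bisection steps --- still polynomial, but the extra factor of $k$ in the exponent is where the lower bound of Lemma~\ref{lem:bounds} (proved via the Bernoulli-tail inequality of Claim~\ref{clm:binomLower}) is genuinely needed, and your sketch does not supply it.
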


\subsubsection{The Exclusive Policy}
Recall that the exclusive policy is defined by $\Cex(1)=1$ and $\Cex(\ell)=0$ for every $\ell>1$. 
Recall also that $A^\star$ is the  symmetric strategy that, as established in \cite{JACM}, gives the optimal success probability among symmetric strategies (see Section \ref{sec:Astar}). 
We show that $A^\star$ is a non-redundant and sgreedy strategy w.r.t.\ the exclusive policy. Hence, Theorem \ref{thm:greeeeed}  implies the following.

\begin{restatable}{theorem}{thmAstar}\label{thm:Astar}
Under the exclusive policy, Strategy $\Astar$ of \cite{JACM} is a symmetric equilibrium.
\end{restatable}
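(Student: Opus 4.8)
The plan is to reduce the statement entirely to Theorem~\ref{thm:greeeeed}. Since $k\ge 2$ (the game is multi-player, and indeed $q(x)=f(x)^{-1/(k-1)}$ is only defined then), we have $\Cex(k)=0$, so that theorem already asserts that every non-redundant sgreedy strategy is a symmetric $(1+\Cex(k))=1$-equilibrium, i.e.\ a genuine symmetric equilibrium, and moreover that all such strategies are equivalent. Hence it suffices to verify two facts about $\Astar$: that it is \emph{non-redundant}, and that it is \emph{sgreedy} with respect to $\Cex$. The conclusion is then immediate, as $\Astar$ falls among the strategies covered by Theorem~\ref{thm:greeeeed}.

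Non-redundancy I would read off directly from the matrix. Using $\Astar(x,t)=\N{\Astar}(x,t-1)-\N{\Astar}(x,t)$ together with the normalization $\sum_x\N{\Astar}(x,t)=M-t$, the round-$t$ column sum telescopes to $\sum_x\Astar(x,t)=(M-(t-1))-(M-t)=1$ for every $t\le M$. Since $\N{\Astar}(\cdot,t)$ is non-increasing in $t$, each $\Astar(x,t)\ge0$ and mass only ever flows from ``unvisited'' to ``visited''; together with full column mass this says $\Astar$ deterministically commits to a fresh box each round and never revisits, which is exactly non-redundancy.

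The heart of the argument is sgreediness. For $\Cex$ in a symmetric profile one has $v_{\Astar}(x,t)=f(x)\N{\Astar}(x,t)^{k-1}$, so I must show that among all matrices $B$ agreeing with $\Astar$ before round $t$, the row $\Astar(\cdot,t)$ maximizes $U_{\Astar}(B,t)=\sum_x B(x,t)\,v_{\Astar}(x,t)$. The key computation is that the value is \emph{constant and maximal} on precisely the boxes $\Astar$ visits. On an unsaturated box, where $\N{\Astar}(x,t)=\alpha(t)q(x)<1$, the factor $q(x)^{k-1}=f(x)^{-1}$ cancels the leading $f(x)$ and gives $v_{\Astar}(x,t)=f(x)(\alpha(t)q(x))^{k-1}=\alpha(t)^{k-1}$; on a saturated box, where $\N{\Astar}(x,t)=1$ forces $\alpha(t)q(x)\ge1$ and hence $f(x)\le\alpha(t)^{k-1}$, the value is $f(x)\le\alpha(t)^{k-1}$. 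Thus $\alpha(t)^{k-1}$ is the maximum attainable value. Since any $B$ has column sum $\sum_x B(x,t)\le1$, this gives $U_{\Astar}(B,t)\le\alpha(t)^{k-1}$. On the other hand $\Astar(x,t)>0$ forces $\N{\Astar}(x,t)<\N{\Astar}(x,t-1)\le1$, so all of $\Astar$'s round-$t$ mass sits on value-$\alpha(t)^{k-1}$ boxes, and with full budget $1$ it achieves $U_{\Astar}(\Astar,t)=\alpha(t)^{k-1}$. Equality establishes greediness at every round, hence sgreediness.

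The one genuinely delicate point is the saturated/unsaturated dichotomy, and in particular the boundary boxes that switch from $\N{\Astar}(\cdot,t-1)=1$ to $\N{\Astar}(\cdot,t)<1$: I must confirm $\Astar$ puts positive mass only on boxes that are unsaturated \emph{at time $t$} (value $\alpha(t)^{k-1}$, not the smaller $f(x)$), which is precisely what $\Astar(x,t)>0\Rightarrow\N{\Astar}(x,t)<1$ guarantees. Everything else is the two routine cancellations above, and the remaining work---that this forces a symmetric equilibrium---is supplied by Theorem~\ref{thm:greeeeed}.
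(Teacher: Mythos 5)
Your proposal is correct and follows essentially the same route as the paper: the paper also derives Theorem~\ref{thm:Astar} by showing (Lemma~\ref{lem:astargreedy}) that $\Astar$ is non-redundant via the telescoping column sum and sgreedy via the identity $v_{\Astar}(x,t)=f(x)\min(1,\alpha(t)q(x))^{k-1}=\min(f(x),\alpha(t)^{k-1})$, with positive mass only on boxes of value $\alpha(t)^{k-1}$, and then invokes Theorem~\ref{thm:greeeeed} with $\Cex(k)=0$. The only cosmetic difference is that you verify greediness directly from the definition rather than through the characterization in Lemma~\ref{lem:greedydef}.
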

Claim \ref{lem:greedyCk} together with the fact that $\Astar$ has the highest success probability among symmetric profiles,  implies that both the PoSS and the PoSA of $\Cex$ are optimal (and equal) on any $f$ and $T$ when compared to any other  policy.
The next theorem considers general equilibria.

\begin{restatable}{theorem}{thmNashismore}\label{thm:nashismore}
Consider the exclusive policy. For any profile $\Pnash$ at equilibrium and any symmetric profile $A$, $\success(\Pnash)\geq \success(A)$.
\end{restatable}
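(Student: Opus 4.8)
The plan is to reduce the statement to a comparison with the single symmetric profile $\Astar^k$ and then to an inequality between miss probabilities. Since \cite{JACM} establishes that $\Astar$ maximizes $\success(A^k)=\sum_x f(x)\B{1-\N{A}(x,T)^k}$ over \emph{all} symmetric strategies $A$, we have $\success(A^k)\le\success(\Astar^k)$ for every symmetric $A$, so it suffices to prove $\success(\Pnash)\ge\success(\Astar^k)$ for an arbitrary equilibrium $\Pnash=\set{A_1,\dots,A_k}$. Writing $h_i(x)=\N{A_i}(x,T)$ and $g^\star(x)=\N{\Astar}(x,T)$, and using the form of $\Cex$, this is exactly the miss-probability inequality
\[
\sum_x f(x)\prod_{i=1}^{k} h_i(x)\ \le\ \sum_x f(x)\,g^\star(x)^{k}.
\]

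I would first record three structural facts special to $\Cex$. (i) Since the reward counts only \emph{unique} first finders, the total utility $\sum_i U_{\Pnash^{-A_i}}(A_i)$ equals the probability that the treasure is found by a unique first finder, which is at most the probability it is found at all; hence $\sum_i U_{\Pnash^{-A_i}}(A_i)\le\success(\Pnash)$. (ii) Because $\N{A_j}(x,t)\ge\N{A_j}(x,T)$ for every $t\le T$, each player's utility dominates its marginal contribution to the success probability: $U_{\Pnash^{-A_i}}(A_i)\ge\success(\Pnash)-\success(\Pnash^{-A_i})$. (iii) Under $\Cex$ the value $v_{\Pnash^{-A_i}}(x,t)=f(x)\prod_{j\ne i}\N{A_j}(x,t)$ does not depend on $A_i$, so in equilibrium each $A_i$ is a maximum-weight schedule against fixed weights; in particular no player ever visits, with positive probability, a box another player has already surely taken. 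I would then bring in the variational characterization of $\Astar$: the vector $g^\star$ minimizes the convex objective $\sum_x f(x)g(x)^k$ over $\set{g:\,0\le g(x)\le1,\ \sum_x g(x)=M-T}$, and its optimality conditions make the marginal value $f(x)g^\star(x)^{k-1}$ constant across all boxes that $\Astar$ actually searches. The intended combination is to feed each player's equilibrium condition against the deviation ``play $\Astar$'', namely $U_{\Pnash^{-A_i}}(A_i)\ge U_{\Pnash^{-A_i}}(\Astar)$, into the sandwich from (i)--(ii), forcing the equilibrium coverage to be at least as value-balanced as the allocation realized by $\Astar$.

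The main obstacle is the final step: turning these per-player deviation inequalities into the displayed product bound. The difficulty is intrinsic. Every symmetric inequality available for a product of probabilities---AM--GM, convexity of $t\mapsto t^k$, Cauchy--Schwarz---bounds $\prod_i h_i(x)$ \emph{above} by a symmetric aggregate of the $h_i$, and since each such aggregate is itself at least $\sum_x f(x)g^\star(x)^k$ (as $g^\star$ is the minimizer), these bounds point the wrong way and are tight precisely on the diagonal $h_1=\dots=h_k=g^\star$, i.e.\ at $\Pnash=\Astar^k$. The true content of the inequality is therefore the \emph{negative correlation} among the events ``player $i$ misses box $x$'': by (iii), wherever one player is likely to take a box the others see little value there and steer away, so the joint miss probability is pushed strictly below the symmetric-optimum value. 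Capturing this quantitatively, rather than through a symmetric relaxation, is the crux; note that mere aggregate-coverage arguments cannot suffice, since in a mixed equilibrium collisions still occur.

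A further technical entanglement is the multi-round timing: utilities weight box $x$ by the \emph{intermediate} non-visit products $\prod_{j\ne i}\N{A_j}(x,t)$ at the visit time $t$, whereas the target inequality only involves the final products at $T$; fact (ii) is the lever that bridges this gap. I expect to resolve both issues by an exchange argument on the maximum-weight schedules of (iii): starting from $\Pnash$ and repeatedly reallocating a player's latest visit toward a high-value box that is under-covered relative to $\Astar$'s balanced allocation, each move weakly decreases $\sum_x f(x)\prod_i \N{A_i}(x,T)$ while driving the aggregate coverage toward $g^\star$, with the flatness of the marginal values $f(x)g^\star(x)^{k-1}$ guaranteeing that the process terminates exactly at the symmetric-optimum value. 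This would yield the displayed inequality, and with it the theorem.
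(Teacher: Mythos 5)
You have correctly identified the right \emph{shape} of the argument---an exchange process that compares the equilibrium's coverage box-by-box against a reference symmetric profile and uses the best-response structure of $\Pnash$ to control each exchange---but the proposal stops exactly where the proof has to begin: the exchange argument is the entire content of the theorem, and it is left at ``I expect to resolve both issues by\dots''. Worse, the one quantitative claim you do make about it has the wrong sign. Moving mass of a player $B$ from an over-covered box $x$ to an under-covered box $y$ at round $t$ changes the miss probability $\sum_z f(z)\prod_i\N{A_i}(z,T)$ by $\epsilon\bigl(v_B(x,T)-v_B(y,T)\bigr)$, where $v_B(z,T)=f(z)\prod_{j\neq B}\N{A_j}(z,T)$; the equilibrium condition forces this to be \emph{non-negative}, i.e.\ each such move \emph{increases} the miss probability (decreases success). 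That is precisely what makes the argument work: the chain reads $\success(\Pnash)\geq\success(\prof_1)\geq\cdots\geq\success(\Pfinal)$, and one then shows $\success(\Pfinal)\geq\success(A^k)$ because $\Pfinal$ has no under-covered box. Your sketch asserts that the moves \emph{decrease} the miss probability and terminate at the symmetric optimum, which would yield $\success(\Pnash)\leq\success(\Astar^k)$---the reverse of the theorem.

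Two concrete ingredients are missing, and neither is routine. First, to get the sign above you need a weak greediness property of equilibria (if $B(x,t)>0$ and $y$ is not full for $B$, then $v_B(x,t)\geq v_B(y,t)$), \emph{and} you must perform the exchange at the \emph{latest} round $t$ at which any player still visits $x$, so that $\N{A_j}(x,t)=\N{A_j}(x,T)$ for all $j$ and hence $v_B(x,t)=v_B(x,T)$, while $v_B(y,T)\leq v_B(y,t)$; without this timing choice the round-$t$ comparison does not transfer to the round-$T$ products, which is exactly the ``multi-round entanglement'' you flagged but did not resolve. Second, the terminal step needs an actual argument: if the final non-redundant profile has no over-covered box, then $\N{A}(x,T)^k\leq\prod_i\N{A_i}(x,T)\leq\bigl(\tfrac1k\sum_i\N{A_i}(x,T)\bigr)^k$ summed over $x$ against the column-sum constraints forces every box to be exactly saturated, so a profile with no under-covered box has success at least $\success(A^k)$. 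This works for an arbitrary symmetric $A$, so your preliminary reduction to $\Astar^k$ via the optimality result of \cite{JACM} is unnecessary (and forgoes the independent optimality proof of $\Astar$ that the theorem is meant to yield). Your facts (i) and (ii), the variational characterization of $g^\star$, and the deviation ``play $\Astar$'' end up playing no role in the argument that actually closes the gap.
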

Observe that, 
as $A^\star$ is a symmetric  equilibrium, 
Theorem \ref{thm:nashismore} provides an alternative proof for the optimality of $A^\star$ (established in \cite{JACM}). Interestingly, this alternative proof is based on game theoretic considerations, which is a very rare approach in optimality proofs.

Combining Theorems \ref{thm:Astar} and \ref{thm:nashismore}, we obtain:
\begin{corollary}\label{cor:POA}
For any $f$ and $T$,
 $\PoA(\Cex, f,T) = \PoSA(\Cex, f ,T)$.
Moreover, 
for any policy $C$,  $\PoA(\Cex,f,T)\leq \PoA(C,f,T)$.
 \end{corollary}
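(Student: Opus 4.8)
The plan is to prove both statements by collapsing every equilibrium value under $\Cex$ to the single number $\success(\Astar)$, and then sandwiching $\PoA(C)$ for an arbitrary policy between $\PoSA(C)$ and this common value. Observe first that all four prices appearing in the statement share the same numerator, $\max_{\prof \in \mathcal{P}(T)} \success(\prof)$, so throughout only the denominators need to be compared.

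For the equality $\PoA(\Cex, f, T) = \PoSA(\Cex, f, T)$, I would show that
\[
\min_{\prof \in \text{Nash}(\Cex,f,T)} \success(\prof)
\;=\; \success(\Astar)
\;=\; \min_{A \in \text{S-Nash}(\Cex,f,T)} \success(A^k).
\]
The right-hand equality is immediate: by Theorem~\ref{thm:Astar} the strategy $\Astar$ is a symmetric equilibrium, so $\text{S-Nash}(\Cex,f,T)$ is nonempty, and since $\Cex(k)=0$, Claim~\ref{lem:greedyCk} forces all symmetric equilibria to be equivalent and hence to share the success value $\success(\Astar)$. For the left-hand equality, the inclusion $\text{S-Nash}(\Cex,f,T) \subseteq \text{Nash}(\Cex,f,T)$ (every $A^k$ with $A$ a symmetric equilibrium is a profile at equilibrium) gives $\min_{\text{Nash}} \success \leq \success(\Astar)$; the reverse bound $\min_{\text{Nash}} \success \geq \success(\Astar)$ is exactly Theorem~\ref{thm:nashismore} applied with the symmetric profile $\Astar$, since it asserts $\success(\Pnash) \geq \success(\Astar)$ for every equilibrium $\Pnash$. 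The two denominators therefore coincide, proving Part~1.

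For the inequality $\PoA(\Cex,f,T) \leq \PoA(C,f,T)$, I would chain
\[
\PoA(C,f,T) \;\geq\; \PoSA(C,f,T) \;\geq\; \PoSA(\Cex,f,T) \;=\; \PoA(\Cex,f,T).
\]
The first inequality holds for \emph{every} policy because $\text{S-Nash}(C,f,T) \subseteq \text{Nash}(C,f,T)$, so minimizing over the larger set can only shrink the denominator and thus enlarge the ratio. The second inequality is the optimality of the symmetric prices of $\Cex$ already noted in the text: since $\Astar$ attains the maximum success among \emph{all} symmetric profiles (established in \cite{JACM}), any symmetric equilibrium of $C$ is in particular a symmetric profile and so has success at most $\success(\Astar)$; hence $\min_{A \in \text{S-Nash}(C,f,T)} \success(A^k) \leq \success(\Astar)$, which yields $\PoSA(C,f,T) \geq \bigl(\max_{\prof} \success(\prof)\bigr)/\success(\Astar) = \PoSA(\Cex,f,T)$. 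The final equality is Part~1.

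The proof involves no real computation; the only care needed is bookkeeping the directions of the $\min$/$\max$ inequalities and ensuring the equilibrium sets are nonempty so the ratios are well defined. Nonemptiness of $\text{S-Nash}(\Cex,f,T)$ is supplied by $\Astar$ via Theorem~\ref{thm:Astar}, and the genuinely nontrivial ingredient — the lower bound on the success of \emph{arbitrary}, possibly asymmetric, equilibria — is precisely Theorem~\ref{thm:nashismore}, which is the engine of Part~1; everything else reduces to a set inclusion or to the maximality of $\Astar$ among symmetric profiles.
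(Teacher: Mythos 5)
Your proof is correct and follows essentially the same route as the paper: part~1 combines Theorem~\ref{thm:Astar}, Claim~\ref{lem:greedyCk} and Theorem~\ref{thm:nashismore} to identify both denominators with $\success(\Astar)$, and part~2 uses the optimality of $\Astar$ among symmetric profiles exactly as the paper does (the paper merely skips the intermediate step $\PoA(C)\geq\PoSA(C)$ and compares $\min_{\text{Nash}(C)}\success$ to $\success(\Astar)$ directly). The one point you leave implicit is that your chain through $\PoSA(C,f,T)$ needs $\text{S-Nash}(C,f,T)$ to be nonempty for an \emph{arbitrary} policy $C$, not just for $\Cex$; the paper supplies this by citing Nash's theorem that every symmetric game has a symmetric equilibrium, and you should invoke the same fact to make the inequality $\min_{A\in\text{S-Nash}(C,f,T)}\success(A^k)\leq\success(\Astar)$ legitimate.
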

At first glance 
the effectiveness of   $C_{ex}$ might not seem so surprising. Indeed, it seems natural that high levels of competition would  incentivize players to disperse.
However, it is important to note that $\Cex$ is not extreme in this sense, as one may allow congestion policies to also have negative values upon collisions.
 Moreover, one could potentially define more complex kinds of policies, e.g., policies that depend on time, and reward early finds more. However, the fact that $\Astar$ is optimal among all symmetric profiles combined with the fact that any symmetric policy has a symmetric equilibrium \cite{nash} implies that no symmetric reward mechanism can improve either the PoSS, the PoSA, or the PoA of the exclusive~policy.  

We proceed to show a tight upper bound on the PoA of $\Cex$. Note that as $k$ goes to infinity the bound converges to $e/(e-1)\approx 1.582$.
\begin{restatable}{theorem}{thmPrice}
\label{thm:price}
 For every $T$, $\sup_{f}\PoA(\Cex,f,T)  =(1-(1-1/k)^k)^{-1}.$
\end{restatable}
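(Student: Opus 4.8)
The plan is to collapse the price of anarchy to a single explicit ratio, bound that ratio from above by testing $\Astar$ against one cheap competitor, and then match the bound with the uniform distribution. Throughout assume $k\ge 2$ (for $k=1$ the player simply optimizes and $\PoA=1=(1-(1-1/k)^k)^{-1}$). First, by Corollary~\ref{cor:POA} we have $\PoA(\Cex,f,T)=\PoSA(\Cex,f,T)$. Since $\Cex(k)=0$, Claim~\ref{lem:greedyCk} says all symmetric equilibria are equivalent, and Theorem~\ref{thm:Astar} exhibits $\Astar$ as one of them; hence the denominator of $\PoSA$ is exactly $\success((\Astar)^k)$, so
\[
\PoA(\Cex,f,T)=\frac{\max_{\prof}\success(\prof)}{\success((\Astar)^k)}.
\]
Writing $n:=\min(Tk,M)$, a deterministic profile that jointly covers the $n$ highest-probability boxes gives $\max_\prof\success(\prof)\ge\sum_{x=1}^n f(x)$; conversely, $k$ players in $T$ rounds cover at most $Tk$ distinct boxes, so $\sum_x\bigl(1-\prod_{A}\N{A}(x,T)\bigr)\le Tk$, and maximizing $\sum_x f(x)\,\mathrm{cov}_x$ over $\mathrm{cov}_x\in[0,1]$ with $\sum_x \mathrm{cov}_x\le Tk$ places all weight on the top $n$ boxes. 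Thus $\max_\prof\success(\prof)=\sum_{x=1}^n f(x)$.

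For the upper bound — the crux — I would resist computing $\success((\Astar)^k)$ in closed form (which would require inverting $\alpha(T)$) and instead lower-bound it by the value of one convenient competitor. Let $B$ be the symmetric strategy whose probability matrix is $B(x,s)=\tfrac{1}{kT}$ for $x\le n$ and $0$ otherwise. Its row sums equal $1/k\le 1$ and its column sums equal $n/(kT)\le 1$, so $B$ is doubly-substochastic and, by Corollary~\ref{cor:mat2alg}, is realized by a genuine strategy, with $\N{B}(x,T)=1-1/k$ for $x\le n$. Because $\Astar$ maximizes the success probability among all symmetric strategies \cite{JACM},
\[
\success((\Astar)^k)\ \ge\ \success(B^k)=\sum_{x=1}^n f(x)\bigl(1-(1-\tfrac1k)^k\bigr)=\bigl(1-(1-\tfrac1k)^k\bigr)\,\max_\prof\success(\prof),
\]
and substituting into the ratio above gives $\PoA(\Cex,f,T)\le(1-(1-1/k)^k)^{-1}$ for every $f$ and $T$.

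For the matching lower bound, I would take $f$ uniform over exactly $M=Tk$ boxes. Then $\max_\prof\success(\prof)=\sum_{x=1}^{Tk}f(x)=1$, and since $f$ is constant the description of $\Astar$ forces $\N{\Astar}(x,T)=1-1/k$ for all $x$ (this is the constant value consistent with $\sum_x\N{\Astar}(x,T)=M-T$), whence $\success((\Astar)^k)=1-(1-1/k)^k$. On this instance $\PoA$ equals $(1-(1-1/k)^k)^{-1}$ exactly, so the supremum is in fact attained. Combining the two directions yields $\sup_f\PoA(\Cex,f,T)=(1-(1-1/k)^k)^{-1}$.

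The only real obstacle is the temptation to evaluate $\success((\Astar)^k)$ explicitly. The idea that sidesteps it is that any feasible ``load'' $w_x=1-\N{}(x,T)$ with $w_x\in[0,1]$ and $\sum_x w_x\le T$ is realizable as a strategy, so spreading a total budget $T$ as $w_x=1/k$ over the top $n$ boxes is legal and makes each such box discovered with probability exactly $1-(1-1/k)^k$; since $\Astar$ can only do better, the bound follows, and uniform placement shows it is tight. The remaining points — the coverage characterization of $\max_\prof\success$ and the doubly-substochastic check for $B$ — are routine given the results quoted above.
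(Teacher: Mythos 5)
Your proof is correct and follows essentially the same route as the paper's: reduce $\PoA(\Cex,f,T)$ to the ratio $\max_\prof\success(\prof)/\success((\Astar)^k)$, lower-bound the denominator by a symmetric strategy that covers each of the top $\min(kT,M)$ boxes with per-player probability $1/k$, and match the bound on the uniform distribution over $kT$ boxes. The only cosmetic differences are that the paper uses the non-redundant uniform strategy $\Aunif$ (coverage $T/m\geq 1/k$) as the competitor and verifies it is itself the equilibrium in the tight instance, whereas you use the constant matrix $B(x,s)=1/(kT)$ and compute $\Astar$ directly for uniform $f$.
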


Concluding the results on the exclusive policy, we study the robustness of 
$\Astar$.
Let $\Astar_k$ denote  algorithm $\Astar$ when set to work for $k$ players. Unfortunately,  for any $\epsilon$, there are cases where $\Astar_k$ is not a $(1+\epsilon)$-equilibrium even when played by $k+1$ players.
However, as indicated below, $\Astar$ is robust to failures under reasonable assumptions regarding the distribution~$f$.

\begin{restatable}{theorem}{thmAstarRobust}\label{thm:AstarRobust}
 If $\F{f(1)}{f(M)}\leq  (1+\epsilon)^\F{k-1}{k'}$, then
$\Astar_k$ is a $(1+\epsilon)$-equilibrium when played by $k+k'$ players. 
\end{restatable}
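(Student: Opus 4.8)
The plan is to compare, round by round, the utility of an arbitrary deviation $B$ with the utility of $\Astar_k$ itself, both evaluated against the other $k+k'-1$ players playing $\Astar_k$. Write $N(x,t):=\N{\Astar_k}(x,t)=\min(1,\alpha(t)q(x))$ with $q(x)=f(x)^{-1/(k-1)}$ as in Section~\ref{sec:Astar}, and let $v(x,t)=f(x)\,N(x,t)^{k+k'-1}$ be the value of box $x$ at round $t$ against the profile $(\Astar_k)^{k+k'-1}$ under $\Cex$ (the exclusive-policy formula $v_\prof(x,t)=f(x)\prod_{A\in\prof}\N{A}(x,t)$). For any strategy $B$ the utility is $U(B,t)=\sum_x B(x,t)v(x,t)$, so it suffices to prove $U(B,t)\le(1+\epsilon)\,U(\Astar_k,t)$ for each $t$ and then sum over $t$.

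First I would record two elementary bounds. Since the probability matrix of $B$ is doubly-substochastic, its $t$-th column sums to at most $1$, giving $U(B,t)\le\max_x v(x,t)$. On the other side, the $t$-th column of $\Astar_k$ sums to exactly $1$, because $\sum_x N(x,t)=M-t$ yields $\sum_x\Astar_k(x,t)=\sum_x\bigl(N(x,t-1)-N(x,t)\bigr)=1$ for $t\le M$; and every box with $\Astar_k(x,t)>0$ is \emph{uncapped}, i.e.\ $N(x,t)=\alpha(t)q(x)<1$ (if $N(x,t)=1$ then, as $N(x,\cdot)$ is non-increasing, $N(x,t-1)=1$ too, so $\Astar_k(x,t)=0$). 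For an uncapped box $v(x,t)=\alpha(t)^{k+k'-1}f(x)^{-k'/(k-1)}\ge\alpha(t)^{k+k'-1}f(1)^{-k'/(k-1)}=v(1,t)$ since $f(x)\le f(1)$; hence $U(\Astar_k,t)\ge v(1,t)$.

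The heart of the argument is the pointwise bound $\max_x v(x,t)\le(1+\epsilon)\,v(1,t)$, which combined with the two displays gives the per-round inequality. I would prove it by cases on the maximizing box $x$. If $x$ is uncapped then $v(x,t)/v(1,t)=(f(1)/f(x))^{k'/(k-1)}$. If $x$ is capped then $v(x,t)=f(x)$ and $\alpha(t)q(x)\ge1$ forces $\alpha(t)^{k+k'-1}\ge f(x)^{(k+k'-1)/(k-1)}$, so
\[
\frac{v(x,t)}{v(1,t)}\le\frac{f(x)}{f(x)^{(k+k'-1)/(k-1)}f(1)^{-k'/(k-1)}}=\left(\frac{f(1)}{f(x)}\right)^{k'/(k-1)}.
\]
In both cases $f(x)\ge f(M)$ gives a ratio at most $(f(1)/f(M))^{k'/(k-1)}$, which the hypothesis $f(1)/f(M)\le(1+\epsilon)^{(k-1)/k'}$ bounds by $1+\epsilon$. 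Then $U(B,t)\le\max_x v(x,t)\le(1+\epsilon)v(1,t)\le(1+\epsilon)U(\Astar_k,t)$, and summing over $t$ completes the proof (rounds $t>M$ contribute zero on both sides).

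I expect the main obstacle to be the capped-box case. The most tempting deviation against $k+k'-1$ copies of $\Astar_k$ is to grab a box the $\Astar_k$-players are currently avoiding, where $N=1$ and the reward $f(x)$ is undiluted. The non-obvious point is that $\Astar_k$ caps a box only once $f(x)$ is small relative to $\alpha(t)^{k-1}$, and the capping inequality $\alpha(t)^{k-1}\ge f(x)$ converts the raw value $f(x)$ back into the same $(f(1)/f(x))^{k'/(k-1)}$ form, so capped boxes are never more attractive than the extreme uncapped box $x=1$. Verifying this reduction, and confirming that $\Astar_k$ places mass only on uncapped boxes of value at least $v(1,t)$, are the steps requiring care; the remainder is bookkeeping.
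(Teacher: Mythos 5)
Your proposal is correct and follows essentially the same route as the paper: both reduce to the per-round ratio $\max_x v(x,t)/v(1,t)$, split into capped and uncapped boxes, and use the capping condition $\alpha(t)q(x)\ge 1$ to convert the capped value $f(x)$ into the same $(f(1)/f(x))^{k'/(k-1)}$ form. The only cosmetic differences are that you prove the round-by-round inequality $U(B,t)\le(1+\epsilon)U(\Astar_k,t)$ directly instead of aggregating via the paper's mediant inequality (Lemma~\ref{lem:simpleLemma}), and you apply the capping bound to the capped box itself rather than to the threshold box $W_t+1$.
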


\subsubsection{The Sharing Policy}

Another important policy to consider is the sharing policy. This policy naturally arises in some circumstances, and may be considered as a less harsh alternative to the exclusive one.  Although not optimal, it follows from Vetta \cite{vetta2002nash} that its PoA is at most 2  (see Appendix~\ref{apx:vetta}).  Furthermore, as this policy yields a monotonously scalable game, a symmetric equilibrium under it is also robust. Therefore, the existence of a symmetric profile which is both robust and has a reasonable success probability is guaranteed.  

Unfortunately,  we did not manage to 
find a polynomial algorithm that generates a symmetric equilibrium for this policy.  However, Theorem \ref{thm:SgreedyComplete} gives a symmetric $(1+\theta)(1+1/k)$-equilibrium in polynomial time for any $\theta > 0$. This strategy is also robust thanks to Theorem \ref{thm:robustgeneral}.
Moreover, the proof in  \cite{vetta2002nash} regarding the PoA can be extended to hold for approximate equilibria. 
In particular, if $\prof$ is some $(1+\epsilon)$-equilibrium w.r.t.\  the sharing policy, then for every $f$ and $T$,
$\success(\prof)\geq \R{2+\epsilon} \max_{\prof'\in \mathcal{P}(T)}\success(\prof')$. The proof of this claim appears in Appendix~\ref{apx:vetta}).

\subsection{Related Works}
Fault tolerance has been a major topic in distributed computing for several decades, and in recent years more attention has been given to these concepts in game theory
 \cite{Halpern1,Halpern2}.  
For example, Gradwohl and Reingold   studied conditions under which games are robust to faults,  showing that equilibria in anonymous games are fault tolerant if they are ``mixed enough''~\cite{OmerGame}.

Restricted to  a single round the search problem becomes a coverage problem, which has been investigated in several papers.  For example, Collet and Korman studied in \cite{collet}  (one-round) coverage while restricting attention to symmetric profiles only. The main result therein is that the exclusive policy yields the best coverage among symmetric profiles.
 Gairing \cite{Gairing} also considered the single round setting, but studied the optimal PoA of a more general family of  games called {\em covering games} (see also \cite{Welfare,Ramaswamy}). 
Motivated by policies for research grants, Kleinberg and Oren \cite{kleinberg2011mechanisms} considered a  one-round model similar to that in \cite{collet}. Their focus however was on pure strategies only.
The aforementioned papers give a good understanding of coverage games in the single round setting. As mentioned, however,  the multi-round setting studied here is substantially more complex than the  single-round setting. 

The multi-armed bandit problem \cite{gittins2011multi,slivkins2019introduction} is commonly recognized as a central exemplar of the exploration versus
exploitation trade-off. In the classic version, an agent (gambler) chooses sequentially between alternative arms (machines), each of which generates rewards according to an unknown distribution. The objective of the agent is to maximize the sum of rewards earned during the execution. In each round, the dilemma is between ``exploitation'', i.e, choosing the arm with the highest expected payoff, and ``exploration'', i.e, choosing an arm in order to 
acquiring knowledge that can be used to make better-informed decisions in the future. Most of the literature concerning the multi-armed bandit problem considers a single agent. This drastically differs from our setting that concerns multiple competing players. Furthermore, in our setting, information about the quality of boxes (arms) in known in advance, and there is no use in returning to a box (arm) after visiting it. 

The area of ``incentivized bandit exploration'', introduced by Kremer et al. \cite{kremer2014implementing},  studies incenticizing schemes in the context of the tradeoff between exploration and exploitation \cite{mansour2015bayesian,mansour2016bayesian,chen2018incentivizing}. 
In this version of multi-armed bandits problem, the action decisions are
controlled by multiple self-interested agents, based on recommendations given by a centralized algorithm. 
Typically, in each round, an agent arrives, chooses an arm to pull among several alternatives, receives a reward, and leaves forever. Each agent ``lives'' for a single round and therefore pulls the arm with the highest current
estimated payoff, given its knowledge. The centralized algorithm controls the flow of information, which in turn  can incentivize the agents to explore. 
 This is different from in our scenario in which agents remain throughout rounds, and know their entire history. Moreover, in our setting, the central entity controls only the congestion policy, and does not manipulate the information agents have throughout the execution. Within the scope of 
incentivized bandit exploration, related models have been studied in 
 \cite{crowdsourcing,frazier2014incentivizing} while considering time-discounted utilities, and in \cite{kannan2017fairness} studying the notion of fairness.
A extensive review of this line of research can be found in Chapter 11 of Slivkins
\cite{slivkins2019introduction}.

The settings of selfish routing, job scheduling, and congestion games
\cite{monderer1996potential,rosenthal1973congestion} all bear
similarities to the search game studied here, however, the social welfare measurements of  success probability or running time are very
different from the measures studied in these frameworks, such as
makespan or latency
\cite{makespan1,makespan2,GameTheory,makespan4}. 




\section{Robustness in Symmetric  Monotonously Scalable Games}\label{sec:robust}

Consider a symmetric game where the number of players is not fixed.
Let $U_\prof(A)$ denote the utility that a player playing $A$ gets if the other players play according to $\prof$ and let  $\sigma(\prof) = \sum_{A \in \prof} U_{\prof^{-A}}(A)$. 
We say that such a game is \emph{monotonously scalable} if:

\begin{enumerate}
\item Adding more players can only increase the sum of utilities, i.e., if $\prof \subseteq \prof'$ then $\sigma(\prof) \leq \sigma(\prof')$.
\item
Adding more players can only decrease the individual utilities, i.e., if $\prof \subseteq \prof'$ then for all $A \in \prof$, $U_{\prof^{-A}}(A) \geq U_{\prof'^{-A}}(A)$.
\end{enumerate}

\ThmRobustGeneral*
\begin{proof}
On the one hand by symmetry, 
\[
U_{A^{k+\ell-1}}(A) = \F{\sigma(A^{k+\ell})}{k+\ell} \geq
\F{\sigma(A^k)}{k+\ell}
,\]
where the last step is because $\sigma$ is non-decreasing.
On the other hand, if $B$ is some other strategy,
\[
U_{A^{k+\ell-1}}(B) \leq U_{A^{k-1}}(B) \leq (1+\epsilon)U_{A^{k-1}}(A) = 
(1+\epsilon)\F{\sigma(A^k)}{k}
.\]
The first inequality is because $U$ is non-increasing, and the second is because $A$ is a $(1+\epsilon)$-equilibrium for $k$ players.
Therefore, what a player can gain by switching from $A$ to $B$ is at most a multiplicative factor of  $(1+\epsilon)(k+\ell)/k = (1+\epsilon)(1 + \ell/k)$. 
\end{proof}
An example of such a game is our setting with the sharing policy. Note however,  that our game with the exclusive policy
does not satisfy the first property, as adding more players can actually deteriorate the sum of utilities.
Another example is a  generalization known as \emph{covering games}  \cite{Gairing}. This sort of game is the same as our game in a single-round version, except that each player chooses not necessarily one element, but a set of elements, from a  prescribed set of sets. Again, to be a monotonously scalable game, the congestion policy should be the sharing policy. Note that one may consider a multi-round version of these games, which will be monotonously scalable as well.

\section{General Policies}\label{sec:general}

A first useful observation is that if a box has some probability of not being chosen, then it has a positive value. This is clear from the definition of utility, from the fact that $C(1) = 1$ and  because $C$ is non-negative.
\begin{observation}\label{obs:non-zero}
If for all $A\in\prof$, $\N{A}(x,t) > 0$, then
$v_\prof(x,t) > 0$.
\end{observation}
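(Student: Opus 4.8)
The plan is to isolate the single summand in the definition of $v_\prof(x,t)$ that is forced to be strictly positive, and then argue that every remaining summand is non-negative, so that the whole expression is strictly positive. The natural candidate is the $\ell = 0$ term, which captures the event that box $x$ has not been visited before round $t$ and is visited by none of the players of $\prof$ at round $t$.

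First I would expand the defining formula
\[
v_\prof(x,t) = f(x) \sum_{\ell=0}^{k-1} C(\ell+1) \sum_{\stackrel{I \subseteq \prof}{|I| = \ell}} \prod_{A \in I} A(x,t) \prod_{A \notin I} \N{A}(x,t)
\]
and inspect the term $\ell = 0$. The only subset of $\prof$ of size $0$ is $I = \emptyset$, so the inner product over $A \in I$ is an empty product equal to $1$, while the product over $A \notin I$ ranges over all of $\prof$. Using $C(0+1) = C(1) = 1$, this term equals $f(x) \prod_{A \in \prof} \N{A}(x,t)$.

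Next I would invoke the sign information: $f(x) > 0$ by the standing assumption on the distribution $f$, and each factor $\N{A}(x,t)$ is strictly positive by the hypothesis of the observation, so the $\ell = 0$ term is strictly positive. For every $\ell \geq 1$, the coefficient $C(\ell+1)$ is non-negative by assumption on $C$, and every product of entries of the (substochastic, hence non-negative) matrices $A$ and $\N{A}$ is non-negative; hence those terms contribute a non-negative amount. Adding a strictly positive term to a sum of non-negative terms gives $v_\prof(x,t) > 0$.

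Because the argument rests purely on sign considerations, there is no real obstacle here. The only point meriting care is the bookkeeping of the empty-set term, namely that $|I| = 0$ forces $I = \emptyset$ and collapses $\prod_{A \in I} A(x,t)$ to $1$, so that the full uncovered-box probability $\prod_{A \in \prof} \N{A}(x,t)$ survives with coefficient $C(1) = 1$; this is exactly what makes the hypothesis $\N{A}(x,t) > 0$ for all $A \in \prof$ translate into strict positivity of the value.
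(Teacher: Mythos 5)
Your proof is correct and is exactly the argument the paper sketches: it isolates the $\ell=0$ term $f(x)\prod_{A\in\prof}\N{A}(x,t)$, which is strictly positive since $C(1)=1$, $f(x)>0$, and each $\N{A}(x,t)>0$ by hypothesis, and observes that the remaining terms are non-negative because $C$ is non-negative. The paper states this in one line without expanding the sum; your write-up is simply a more detailed version of the same reasoning.
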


\subsection{Non-Redundancy and Monotonicity}

A doubly-substochastic matrix $A$ is called \emph{non-redundant at time $t$} if $\sum_x A(x,t) = 1$. It is \emph{non-redundant} if it is non-redundant for every $t \leq M$.
In the algorithmic view, as $\sum_x A(x,t)$ is the probability that a new box is opened at time $t$, then a strategy's matrix is non-redundant iff it never checks a box twice, unless it already checked all~boxes.

\begin{observation} \label{obs:nonRedundantM}
If $A$ is non-redundant then for all $t \geq M$, and for all $x$, $\N{A}(x,t) = 0$.
\end{observation}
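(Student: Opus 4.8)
The plan is to reduce the claim to a single statement about row sums and then settle it by a double-counting argument. The key preliminary observation is that $\N{A}(x,t)$ is non-increasing in $t$: since $\N{A}(x,t) = 1 - \sum_{s \leq t} A(x,s)$ and every entry $A(x,s)$ is non-negative, passing from $t$ to $t+1$ can only subtract a non-negative quantity. Consequently, it suffices to prove $\N{A}(x,M) = 0$ for every $x$; the conclusion for all $t \geq M$ then follows immediately, because $0 \leq \N{A}(x,t) \leq \N{A}(x,M) = 0$. So the whole observation collapses to showing that each row, summed over the first $M$ columns, equals exactly $1$, i.e.\ $\sum_{s \leq M} A(x,s) = 1$ for all $x$.

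To establish this, I would compute the total mass $S = \sum_{x} \sum_{t \leq M} A(x,t)$ in two ways. Summing by columns and invoking non-redundancy (which gives $\sum_x A(x,t) = 1$ for each $t \leq M$) yields $S = \sum_{t \leq M} 1 = M$. Summing by rows instead gives $S = \sum_x \bigl( \sum_{t \leq M} A(x,t) \bigr)$, and here double-substochasticity bounds each row: $\sum_{t \leq M} A(x,t) \leq \sum_{t \leq T} A(x,t) \leq 1$, where the first inequality uses $A \geq 0$ (the extra columns $M < t \leq T$ only add non-negative terms) and the second is the row constraint of a doubly-substochastic matrix. Since there are exactly $M$ rows, the row decomposition gives $S \leq M$.

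Combining the two evaluations, $M = S \leq M$ forces equality, and with each of the $M$ row sums bounded above by $1$ and their total equal to $M$, every row sum must attain its maximum: $\sum_{t \leq M} A(x,t) = 1$ for all $x$. Hence $\N{A}(x,M) = 1 - 1 = 0$, and monotonicity finishes the argument.

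There is no genuine difficulty here; the statement is a clean pigeonhole-type consequence of the definitions. The only point that requires mild care is the step bounding the truncated row sum $\sum_{t \leq M} A(x,t)$ by the full row sum $\sum_{t \leq T} A(x,t)$, which is where non-negativity of the entries is used to discard the columns beyond $M$ before applying the doubly-substochastic row constraint; getting the direction of this inequality right is the one place to avoid a slip.
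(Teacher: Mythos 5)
Your argument is correct. The paper states this as an observation without any proof, and your double-counting argument (column sums equal $1$ by non-redundancy, row sums are at most $1$ by double-substochasticity, so all $M$ row sums over the first $M$ columns must equal exactly $1$, whence $\N{A}(x,M)=0$ and monotonicity of $\N{A}(x,\cdot)$ finishes) is precisely the justification the authors are implicitly relying on.
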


\begin{lemma}
\label{lem:non-redundant-0}
If a profile $\prof$ is at equilibrium and $\success(\prof)<1$ then every player is non-redundant. 
\end{lemma}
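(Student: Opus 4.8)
The plan is to argue by contradiction using a one-entry exchange argument, and the engine that makes it work is that, against a \emph{fixed} opposing profile, a player's utility is \emph{linear} in its own probability matrix. Indeed, $v_{\prof^{-A}}(x,t)$ depends only on the players in $\prof^{-A}$ and not on $A$ itself, so $U_{\prof^{-A}}(B) = \sum_t \sum_x B(x,t)\, v_{\prof^{-A}}(x,t)$ is a linear functional of the matrix $B$. Consequently, to refute the equilibrium condition for player $A$ it suffices to add a positive amount of probability mass to a single entry $(x,t)$ with $v_{\prof^{-A}}(x,t) > 0$, provided the result stays doubly-substochastic (and hence is a legal strategy by Corollary~\ref{cor:birkhoff}).

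First I would extract the box that the hypothesis $\success(\prof) < 1$ hands us. Since $\success(\prof) = \sum_x f(x)\bigl(1 - \prod_{A'\in\prof}\N{A'}(x,T)\bigr) < 1$ and $f(x) > 0$ for all $x$, the complementary sum $\sum_x f(x)\prod_{A'\in\prof}\N{A'}(x,T)$ is strictly positive, so there is a box $x_0$ with $\prod_{A'\in\prof}\N{A'}(x_0,T) > 0$; that is, every player leaves $x_0$ unvisited through round $T$ with positive probability, $\N{A'}(x_0,T) > 0$ for all $A'\in\prof$.

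Next comes the exchange itself. Suppose, for contradiction, that some player $A$ is redundant at some round $t \le T$, i.e.\ the column sum $\sum_x A(x,t) < 1$ (there is column slack). Because $\N{A'}(x_0,\cdot)$ is non-increasing in its time argument, $\N{A'}(x_0,t) \ge \N{A'}(x_0,T) > 0$ for every $A'\in\prof^{-A}$, so Observation~\ref{obs:non-zero} yields $v_{\prof^{-A}}(x_0,t) > 0$. Moreover $\sum_s A(x_0,s) = 1 - \N{A}(x_0,T) < 1$, so row $x_0$ also has slack. Setting $\delta = \min\bigl(1 - \sum_x A(x,t),\, \N{A}(x_0,T)\bigr) > 0$ and $B = A + \delta\,\SM{x_0,t}$, the matrix $B$ remains non-negative with all row and column sums at most $1$, hence doubly-substochastic, and by linearity $U_{\prof^{-A}}(B) = U_{\prof^{-A}}(A) + \delta\, v_{\prof^{-A}}(x_0,t) > U_{\prof^{-A}}(A)$, contradicting that $\prof$ is at equilibrium. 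Thus $\sum_x A(x,t) = 1$ for every $t \le T$ and every $A\in\prof$, which is exactly non-redundancy.

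The hard part is not any single calculation but identifying a deviation that is simultaneously \emph{feasible} and \emph{profitable}: adding mass to a column with slack is only possible at a box whose row still has spare mass, and it is only profitable at a box of positive value against $\prof^{-A}$. The key observation is that the single globally-uncovered box $x_0$ delivered by $\success(\prof)<1$ plays both roles at once — its row slack in $A$ and its positive value under $\prof^{-A}$ both stem from the fact that $x_0$ is uncovered — which is precisely why the hypothesis $\success(\prof)<1$ is indispensable. A minor bookkeeping point is the index range: the statement quantifies over $t \le M$, but a $T$-round matrix is supported on $t \le T$, so the content of the argument is exactly the range $t \le T$.
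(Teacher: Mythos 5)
Your proof is correct and follows essentially the same route as the paper's: extract an uncovered box $x_0$ from $\success(\prof)<1$, note it simultaneously provides row slack and (via Observation~\ref{obs:non-zero} and monotonicity of $\N{A'}(x_0,\cdot)$) positive value, and add mass to the slack column at that box to contradict the equilibrium condition. The only cosmetic difference is that you pin down an explicit $\delta$ where the paper takes ``$\epsilon$ small enough.''
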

\begin{proof}
As $\success(\prof)<1$, there is some $x$
such that $\prod_{B\in\prof} \N{B}(x,T) > 0$, and so for all $B \in \prof$, $\N{B}(x,T) > 0$. Fix such an $x$. 
Assume that some player plays a redundant matrix $A$. This means that there is some time $s$ where 
$\sum_y A(y,s) < 1$. 
Define a new matrix 
$A' = A + \epsilon \SM{x,s}$, for a small $\epsilon > 0$. 
Taking it small enough will ensure that $A'$ is doubly-substochastic since (1) in column $s$, there is space because of the redundancy of $A$ at time $s$, and (2) in row $x$ there is space because $\sum_t A(x,t) = 1 - \N{A}(x,T) < 1$. 

Therefore our player can play according to $A'$ instead of $A$, and 
\[
U_{\prof^{-A}}(A') - U_{\prof^{-A}}(A) =
\epsilon \cdot v_{\prof^{-A}}(x,s)
.\]
Recall that by how we chose $x$, 
for all $B \in \prof^{-A}$, $\N{B}(x,T) > 0$, and as $\N{B}$ is weakly decreasing in $t$,  $\N{B}(x,s) > 0$. By Observation \ref{obs:non-zero},
$v_{\prof^{-A}}(x,s) > 0$, and so the utility strictly increases,
in contradiction to $\prof$ being at equilibrium.
\end{proof}

We will later see that in the symmetric case the condition in the lemma is not needed. However, the following example shows it is necessary in general.
Let $M=k$, $T>1$, and assume that for every $1\leq i\leq k$, player $i$ goes to box $i$ in every round. Under the exclusive policy, this strategy is an equilibrium, whereas each player is clearly redundant. 

Turning to the symmetric case, recall that $v_A(x,t)$ is the value of box $x$ at time $t$ when playing against $k-1$ players playing according to $A$. The following monotonicity lemmas hold for any congestion policy $C$. It basically says that, in a symmetric profile, if the probability of checking box $x$ at time $t$ is increased, then its value decreases at this time.
\begin{lemma}
\label{lem:alg-mono}
Consider two doubly-substochastic matrices $A$ and $B$, a box $x$ and a time $t$. If $A(x,t) > B(x,t)$, and for all $s<t$, $A(x,s) = B(x,s)$ then $v_A(x,t) < v_B(x,t)$. 
\end{lemma}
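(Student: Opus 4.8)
The plan is to reduce $v_A(x,t)$ to a binomial expectation and then prove a monotonicity statement about that expectation. Write $p = A(x,t)$ and $q = B(x,t)$, so $p > q$ by hypothesis, and set $r := 1 - \sum_{s<t} A(x,s)$. Since $A$ and $B$ agree on every entry $(x,s)$ with $s<t$, this $r$ is common to both, and $\N{A}(x,t) = r-p$ while $\N{B}(x,t) = r-q$. I would first note that $r>0$: otherwise $0 \le p \le r = 0$ would force $p=q=0$, contradicting $p>q$; likewise $\N{A}(x,t)\ge 0$ gives $p\le r$, so $\rho_A := p/r \in (0,1]$ and $\rho_B := q/r \in [0,1)$ with $\rho_B < \rho_A$. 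Dividing the symmetric value formula by the common positive factor $f(x)\,r^{k-1}$ and factoring $r^{k-1}$ out of each $p^\ell(r-p)^{k-1-\ell}$ turns it into a binomial sum:
\[
v_A(x,t) = f(x)\, r^{k-1} \sum_{\ell=0}^{k-1} C(\ell+1)\binom{k-1}{\ell}\rho_A^\ell (1-\rho_A)^{k-1-\ell} = f(x)\, r^{k-1}\, \expct{C(L_A+1)},
\]
where $L_A \sim \mathrm{Bin}(k-1,\rho_A)$, and symmetrically $v_B(x,t) = f(x)\,r^{k-1}\,\expct{C(L_B+1)}$ with $L_B \sim \mathrm{Bin}(k-1,\rho_B)$. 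Thus the claim reduces to showing $g(\rho) := \expct{C(L+1)}$, for $L \sim \mathrm{Bin}(k-1,\rho)$, is strictly decreasing across $(\rho_B,\rho_A)$ (here $k\ge 2$, since for $k=1$ the value is the constant $f(x)$ and the statement is vacuous).

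I would then establish the monotonicity via the standard binomial derivative identity. Differentiating $g$ and regrouping the two resulting sums using $\ell\binom{k-1}{\ell}=(k-1)\binom{k-2}{\ell-1}$ and $(k-1-\ell)\binom{k-1}{\ell}=(k-1)\binom{k-2}{\ell}$ telescopes everything into
\[
g'(\rho) = (k-1)\sum_{j=0}^{k-2}\big(C(j+2)-C(j+1)\big)\binom{k-2}{j}\rho^j(1-\rho)^{k-2-j}.
\]
Since $C$ is non-increasing, each coefficient $C(j+2)-C(j+1)$ is $\le 0$, so $g'\le 0$ and $g$ is non-increasing, already giving $v_A(x,t)\le v_B(x,t)$. (One could equivalently avoid calculus by coupling $L_A \ge L_B$ almost surely and using that $C$ is non-increasing.)

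For the strict inequality I would invoke the non-degeneracy of the policy. Because $C(1)=1$, $C$ is non-increasing, and $C\not\equiv 1$ on the relevant range, there is some index $j\in\{0,\dots,k-2\}$ with $C(j+2)<C(j+1)$, which makes the corresponding term of $g'(\rho)$ strictly negative for every $\rho\in(0,1)$. As observed above, $(\rho_B,\rho_A)$ is a nonempty open interval contained in $(0,1)$, so $g$ is strictly decreasing across it, whence $g(\rho_A)<g(\rho_B)$ and therefore $v_A(x,t) < v_B(x,t)$. I expect this last step to be the only delicate point: the weak monotonicity is immediate, but promoting it to a strict inequality requires pinning down a genuine strict decrease of $C$ among the arguments $1,\dots,k$ actually attained by $\ell+1$ — precisely what the standing assumption that $C$ is not identically $1$ (together with $C(1)=1$) guarantees.
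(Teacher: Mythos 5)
Your proof is correct and follows essentially the same route as the paper: both factor out $f(x)\,\N{A}(x,t-1)^{k-1}$, reduce the claim to strict monotonicity of the binomial expectation $\Psi(\rho)=\sum_\ell C(\ell+1)\binom{k-1}{\ell}\rho^\ell(1-\rho)^{k-1-\ell}$, and derive strictness from $C(1)=1$, $C$ non-increasing, and $C\not\equiv 1$. The only difference is cosmetic — you establish $\Psi'<0$ via the binomial derivative identity, while the paper uses an Abel-summation rewriting in terms of the binomial CDF — and your explicit verification that the strict drop of $C$ occurs at an index in $\{0,\dots,k-2\}$ is exactly the point the paper also relies on.
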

\begin{proof}
Let us consider a simple yet similar scenario: We toss $k-1$ i.i.d coins, each gives $1$ w.p.\ $p$, and $0$ otherwise. 
When the sum is $\ell$ the gain is $C(\ell +1)$, where $C$ is the congestion policy of the underlying game. The expected gain is then:
\[
\Psi(p) = 
\sum_{\ell=0}^{k-1} C(\ell+1) \binom{k-1}{\ell} p^\ell (1-p)^{k-1-\ell}
.\]
\begin{claim}
$\Psi$ is strictly decreasing in $p$.
\end{claim}
\begin{proof}
Denote by $R(\ell) = \binom{k-1}\ell p^\ell (1-p)^{k-1-\ell}$ the probability that the sum is $\ell$, and by 
$Q(\ell) = \sum_{s=0}^\ell R(s)$ the probability that the sum is at most $\ell$.
\begin{align*}
\Psi(p) 
& = 
C(1)R(0) + C(2)R(1) + \ldots + C(\ell)R(\ell-1)
\\ & = 
C(1)Q(0) + C(2)(Q(1) - Q(0)) + \ldots + C(k)(Q(k-1) - Q(k -2)) 
\\ & =
(C(1) - C(2))Q(0) + \ldots + (C(k-1) - C(k))Q(k - 2) + C(k)Q(k-1).
\end{align*}
By the interpretation of $Q(\ell)$, it is clear that it is strictly decreasing in $p$, except for $Q(k-1)$ which is always equal to 1.
By the properties of congestion policies, for all $\ell \leq k-1$,  $C(\ell) - C(\ell+1) \geq 0$. 
Moreover, since $C\not\equiv 1$, then at least one $C(\ell) - C(\ell+1)$ is strictly positive, concluding the claim.
\end{proof}
Now, back to our (symmetric) game, recall that the value is defined as:
\[
v_A(x,t) 
= 
f(x) \sum_{\ell=0}^{k-1}  C(\ell+1)
\binom{k-1}\ell A(x,t)^\ell \N{A}(x,t)^{k-\ell-1}.
\]
Note that:
\[
\N{A}(x,t) + A(x,t) = \N{A}(x,t-1)
.\]
And so:
\begin{align*}
v_A(x,t) & =
f(x) \cdot \B{\N{A}(x,t-1)}^{k-1} \sum_{\ell=0}^{k-1}  C(\ell+1)
\binom{k-1}\ell \BF{A(x,t)}{\N{A}(x,t-1)}^\ell \BF{\N{A}(x,t)}{\N{A}(x,t-1)}^{k-\ell-1}.
\\
& = 
f(x) \cdot \B{\N{A}(x,t-1)}^{k-1} 
\Psi\BF{A(x,t)}{\N{A}(x,t-1)}.
\end{align*}
By the conditions of the lemma, $A$ and $B$ behave the same w.r.t $x$, up to, and including time $t-1$, and therefore: \[\N{A}(x,t-1) = \N{B}(x,t-1).\] 
Using this to compare $v_A(x,t)$ and $v_B(x,t)$ according to the equation above, and using the fact that $\Psi$ is strictly decreasing, we get the result.
\end{proof}

\begin{lemma}
\label{lem:mono}
Let $A$ be doubly-substochastic. For every $x$ and $t$, 
$v_A(x,t+1) \leq v_A(x,t)$. Moreover, if $A(x,t+1) > 0$ then
the inequality is strict.
\end{lemma}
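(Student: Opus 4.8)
The plan is to \emph{sandwich} both $v_A(x,t)$ and $v_A(x,t+1)$ between a single common quantity, namely $f(x)\,\N{A}(x,t)^{k-1}$, which is precisely the value the box would carry under the exclusive policy. The guiding observation is an asymmetry between the two sums: in $v_A(x,t)$ the no-collision term ($\ell=0$) already contributes exactly $f(x)\,\N{A}(x,t)^{k-1}$ and the remaining terms only add to this, whereas in $v_A(x,t+1)$ the \emph{whole} sum is held down by this same quantity because every coefficient $C(\ell+1)$ is at most $C(1)=1$. Identifying this intermediate quantity is really the only creative step; everything after it is a one-line consequence of $C(1)=1$, $C(\ell)\ge 0$, $C(\ell)\le 1$, and the binomial theorem.

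For the lower bound I would expand $v_A(x,t)=f(x)\sum_{\ell=0}^{k-1} C(\ell+1)\binom{k-1}{\ell}A(x,t)^\ell\,\N{A}(x,t)^{k-1-\ell}$ and retain only the $\ell=0$ summand, which equals $f(x)\,\N{A}(x,t)^{k-1}$ since $C(1)=1$; all discarded terms are non-negative, so $v_A(x,t)\ge f(x)\,\N{A}(x,t)^{k-1}$. For the upper bound on $v_A(x,t+1)$ I would invoke the splitting $\N{A}(x,t)=A(x,t+1)+\N{A}(x,t+1)$ (the same identity used in the proof of Lemma~\ref{lem:alg-mono}, shifted by one index). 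Bounding each coefficient $C(\ell+1)\le 1$ turns $\sum_{\ell}\binom{k-1}{\ell}A(x,t+1)^\ell\,\N{A}(x,t+1)^{k-1-\ell}$ into $\bigl(A(x,t+1)+\N{A}(x,t+1)\bigr)^{k-1}=\N{A}(x,t)^{k-1}$ by the binomial theorem, giving $v_A(x,t+1)\le f(x)\,\N{A}(x,t)^{k-1}$. Chaining the two bounds yields $v_A(x,t+1)\le f(x)\,\N{A}(x,t)^{k-1}\le v_A(x,t)$, which is the weak inequality.

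For the strict part when $A(x,t+1)>0$, I would focus on the top term $\ell=k-1$ of the binomial comparison: its coefficient on the left is $C(k)$ against $1$ on the right, and its summand is $A(x,t+1)^{k-1}$. As observed in the proof of the Claim inside Lemma~\ref{lem:alg-mono}, the non-degeneracy assumption $C\not\equiv 1$ forces $C(k)<C(1)=1$ (by telescoping the positive differences). Hence, whenever $A(x,t+1)>0$ this single term is a strict overestimate, so $v_A(x,t+1)<f(x)\,\N{A}(x,t)^{k-1}\le v_A(x,t)$ and strictness propagates through the chain.

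The only subtlety requiring care is the degenerate case $\N{A}(x,t)=0$: there $A(x,t+1)\le \N{A}(x,t)=0$ forces $A(x,t+1)=0$, so the binomial form still applies (with everything equal to $0$) and only the non-strict claim is needed, which is precisely what the argument delivers. This is also why I prefer the binomial-theorem version of the upper bound over dividing through by $\N{A}(x,t)$ to form the function $\Psi$ of Lemma~\ref{lem:alg-mono}: it avoids ever dividing by a possibly-zero $\N{A}(x,t)$, while still drawing its strictness from the same fact $C(k)<1$.
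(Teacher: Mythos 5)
Your proof is correct, and while it shares with the paper the pivotal intermediate quantity $f(x)\,\N{A}(x,t)^{k-1}$ and the identical lower bound $v_A(x,t)\ge f(x)\,\N{A}(x,t)^{k-1}$ (keep only the $\ell=0$ term and use $C(1)=1$, $C\ge 0$), the other half takes a genuinely different route. The paper first disposes of the case $A(x,t+1)=0$, where $v_A(x,t+1)$ collapses exactly to $f(x)\,\N{A}(x,t+1)^{k-1}=f(x)\,\N{A}(x,t)^{k-1}$, and then handles $A(x,t+1)>0$ by comparing $A$ to the auxiliary strategy $B$ that never visits $x$ after round $t$ and invoking Lemma~\ref{lem:alg-mono} to get the strict drop. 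You instead bound $v_A(x,t+1)$ directly for arbitrary $A(x,t+1)$ by replacing every $C(\ell+1)$ with $1$ and recognizing the binomial expansion of $\bigl(A(x,t+1)+\N{A}(x,t+1)\bigr)^{k-1}=\N{A}(x,t)^{k-1}$, extracting strictness from the single top term, whose coefficient is $C(k)<1$ (which does follow from $C$ non-increasing, $C(1)=1$ and $C\not\equiv 1$ on $\{1,\dots,k\}$, exactly as the paper itself argues inside the Claim of Lemma~\ref{lem:alg-mono}). This makes the lemma self-contained --- it no longer depends on Lemma~\ref{lem:alg-mono} and its $\Psi$/$Q$ telescoping --- at the mild cost of re-deriving, in the coefficient comparison $C(k)<C(1)$, the same non-degeneracy fact that powers the strictness of $\Psi$ there. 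Your closing remark is also apt: the paper's route must ensure $\N{A}(x,t)>0$ before Lemma~\ref{lem:alg-mono} (whose proof divides by that quantity) can be applied --- which is why its second case is stated under that hypothesis --- whereas your binomial form needs no such precaution and absorbs the degenerate case $\N{A}(x,t)=0$ for free.
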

\begin{proof}
First Assume $A(x,t+1) = 0$.
The value at round $t+1$ is:
\[
v_A(x,t+1) 
= 
f(x) \sum_{\ell=0}^{k-1} C(\ell+1) \binom{k-1}\ell A(x,t+1)^\ell \N{A}(x,t+1)^{k-\ell-1}
= 
f(x) C(1) \N{A}(x,t+1)^{k-1}
\]
On the other hand,
\[
v_A(x,t) 
= 
f(x) \sum_{\ell=0}^{k-1} C(\ell+1) \binom{k-1}\ell A(x,t)^\ell \N{A}(x,t)^{k-\ell-1}
\geq
f(x) C(1) \N{A}(x,t)^{k-1}
,\]
because $C$ is non-negative. As $\N{A}(x,t) = \N{A}(x,t+1)$, we get that $v_A(x,t+1) \leq v_A(x,t)$, as required.

Next, assume $\N{A}(x,t) > 0$. Consider  strategy $A$ and let $B$ be the same as $A$ except that for every $s > t$, $B(x,s)=0$. 
By the above, $v_B(x,t+1) \leq v_B(x,t)=v_A(x,t)$. By Lemma \ref{lem:alg-mono}, \[v_A(x,t+1)<v_B(x,t+1),\] and we conclude. 
\end{proof}

Using the above, we prove a stronger result than Lemma \ref{lem:non-redundant-0} for the symmetric case:
\begin{restatable}{lemma}{LemSymNonRedundant}
\label{lem:sym-non-redundant}
If $A$ is a symmetric equilibrium then it is non-redundant.
\end{restatable}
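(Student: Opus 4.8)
The plan is to split on whether $\success(A^k)=1$. If $\success(A^k)<1$, the statement is already contained in Lemma~\ref{lem:non-redundant-0}: since $A$ is a symmetric equilibrium the profile $A^k$ is at equilibrium, and as its success probability is below $1$ every player---hence $A$ itself---is non-redundant. So the whole difficulty is the boundary case $\success(A^k)=1$, in which \emph{every} box is visited with probability $1$ by round $T$ (because $f>0$ forces $\N{A}(x,T)=0$ for all $x$). First I would assume towards a contradiction that $A$ is redundant, and let $s\le M$ be the smallest round with $\sum_x A(x,s)<1$; by minimality every earlier column sums to exactly $1$.

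The key structural step is to locate a box not yet certainly visited at round $s$. A short counting argument does this: $\sum_x \N{A}(x,s) = M - \sum_{u\le s}\sum_x A(x,u) = M - \big((s-1)+\sum_x A(x,s)\big) > M-s \ge 0$, so there is a box $x_0$ with $\N{A}(x_0,s)>0$. Since $\success(A^k)=1$ gives $\N{A}(x_0,T)=0$, the leftover mass $\N{A}(x_0,s)=\sum_{t>s}A(x_0,t)$ is positive, so $A(x_0,t)>0$ for some $t>s$. I would then deviate by moving an $\epsilon$ of first-visit mass from the late round $t$ to the redundant round $s$, i.e.\ consider $A' = A + \epsilon\,(\SM{x_0,s}-\SM{x_0,t})$. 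For small $\epsilon$ this stays doubly-substochastic---column $s$ has slack by redundancy, column $t$ only loses mass, and row $x_0$ is unchanged---so by Corollary~\ref{cor:birkhoff} it is a legitimate strategy. Because the other $k-1$ players still play $A$, the box values $v_A(\cdot,\cdot)$ are untouched, and the utility change is exactly $\epsilon\big(v_A(x_0,s)-v_A(x_0,t)\big)$.

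It remains to see this change is strictly positive, which is where Lemma~\ref{lem:mono} does the work: $v_A(x_0,\cdot)$ is non-increasing, and since $A(x_0,t)>0$ the step from $t-1$ to $t$ is strict, so chaining $v_A(x_0,s)\ge v_A(x_0,t-1)>v_A(x_0,t)$ yields $v_A(x_0,s)>v_A(x_0,t)$. Hence $U_{A^{k-1}}(A')>U_{A^{k-1}}(A)$, contradicting that $A$ is a symmetric equilibrium. I expect the main obstacle to be precisely the $\success(A^k)=1$ regime: the perturbation used in Lemma~\ref{lem:non-redundant-0} simply \emph{adds} probability to a box, which is impossible here since no row has any slack. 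The resolution is to reallocate rather than add, and the crucial gain comes not from any box having positive leftover value but from the strict time-monotonicity of the value---moving an inevitable visit earlier is always profitable. Care is needed only in extracting the strict inequality from Lemma~\ref{lem:mono} and in checking that the perturbed matrix remains doubly-substochastic.
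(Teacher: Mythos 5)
Your proof is correct and follows essentially the same route as the paper: reduce to the case $\success(A^k)=1$ via Lemma~\ref{lem:non-redundant-0}, locate a redundant column $s$ and a box with leftover mass at time $s$ that is visited later, shift an $\epsilon$ of mass from the later round to round $s$, and invoke the strict case of Lemma~\ref{lem:mono}. The only (immaterial) difference is bookkeeping: you find the box via the column count $\sum_x \N{A}(x,s) > M-s$, while the paper sums all columns up to $\min\{M,T\}$ and extracts a row with total mass below $1$.
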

\begin{proof}
Because of Lemma \ref{lem:non-redundant-0} it is sufficient to consider only the case where $\success(A) = 1$.  Let $T' = \min\{M, T\}$,  
and assume by contradiction that $A$ is redundant. Thus there is some $t \leq T'$ where $\sum_x A(x,t) < 1$. 
Hence, 
\[
\sum_{s\leq T'} \sum_x A(x,s) < T'.
\]
Therefore, there is some $x$ such that $\sum_{s\leq T'} A(x,s) < 1$ and so $\sum_{s\leq t} A(x,s) < 1$. 
As $\success(A)=1$, there is some $t' > t$ such that $A(x,t') > 0$. 
Define: 
\[
A' = A + \epsilon(\SM{x,t} - \SM{x,t'}).
\]
Taking $\epsilon > 0$ small enough, $A'$ is doubly-substochastic. Also, 
\[
U_A(A') - U_A(A) = \epsilon(v_A(x,t)- v_A(x,t')),
\]
which is strictly positive
by Lemma \ref{lem:mono}.
This contradicts the fact that $A$ is an equilibrium.
\end{proof}

\subsection{Greedy Strategies}

The following notation will aid us in what follows.
For a profile $\prof$ and a strategy $A$, denote:
\[
v_{\prof}(A,t) = \min_x\stset{v_\prof(x,t)}{A(x,t) > 0}.
\]
It is the value of the least valuable box that strategy $A$ has a positive probability of choosing.
Clearly, if $A$ is non-redundant then $U_\prof(A,t) \geq v_\prof(A,t)$. For symmetric profiles, we denote $v_A(t)$ as shorthand for $v_{A^{k-1}}(A, t)$.

A simple and useful observation is that if a greedy strategy has some probability of leaving a box unopened by (and including) time $t$, then the box's value cannot be too large at this time. Specifically, it is at most that of any box that has a positive probability of being chosen at the same time. 
\begin{observation} \label{obs:greedy}
Let $A$ be greedy at time $t$ w.r.t.\ profile $\prof$. If $\N{A}(x,t) > 0$ then
$v_\prof(x,t) \leq v_\prof(A,t)$. \end{observation}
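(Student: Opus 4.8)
The final statement to prove is Observation~\ref{obs:greedy}: if $A$ is greedy at time $t$ w.r.t.\ profile $\prof$ and $\N{A}(x,t) > 0$, then $v_\prof(x,t) \leq v_\prof(A,t)$.

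\medskip

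The plan is to argue by contradiction via an exchange argument, exploiting exactly the slack that the hypothesis $\N{A}(x,t) > 0$ provides. Suppose for contradiction that $v_\prof(x,t) > v_\prof(A,t)$. By the definition of $v_\prof(A,t) = \min_y\stset{v_\prof(y,t)}{A(y,t) > 0}$, there is some box $y$ with $A(y,t) > 0$ attaining this minimum, so $v_\prof(y,t) = v_\prof(A,t) < v_\prof(x,t)$. Intuitively, $A$ is wasting probability mass on the low-value box $y$ at time $t$ while leaving mass available on the higher-value box $x$; shifting a little mass from $y$ to $x$ at time $t$ should strictly raise the round-$t$ utility, contradicting greediness.

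\medskip

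The key steps, in order, are as follows. First I would construct the competitor strategy $B$ by perturbing $A$ only at time $t$: set $B = A + \epsilon(\SM{x,t} - \SM{y,t})$ for a small $\epsilon > 0$. I must check that $B$ is doubly-substochastic for $\epsilon$ small enough — in column $t$ the total is unchanged (we add $\epsilon$ at $x$ and subtract it at $y$, so the column sum is preserved, and $A(y,t) > 0$ guarantees room to subtract); in row $x$ there is room to add because $\N{A}(x,t) > 0$ means $\sum_{s \le t} A(x,s) < 1$, i.e.\ the row-$x$ partial sum through time $t$ is strictly below $1$. Second, and crucially, I must ensure $B$ is admissible as a competitor in the greediness comparison: greediness at time $t$ quantifies over strategies $B$ agreeing with $A$ on all rounds $s < t$, and our perturbation touches only round $t$, so this is satisfied. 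Third, I would compute the change in round-$t$ utility. Since the perturbation is confined to time $t$ and the value functions $v_\prof(\cdot,t)$ are evaluated against the fixed external profile $\prof$ (and are therefore unaffected by how the single player $A$ distributes its own mass at round $t$, as $v_\prof(x,t)$ depends only on $\prof$, not on $A$), the first-order change is
\[
U_\prof(B,t) - U_\prof(A,t) = \epsilon\bigl(v_\prof(x,t) - v_\prof(y,t)\bigr) > 0,
\]
which is strictly positive by the contradiction hypothesis. This contradicts greediness of $A$ at time $t$.

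\medskip

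I expect the main obstacle to be the bookkeeping in the utility-difference computation, specifically confirming that $v_\prof(x,t)$ and $v_\prof(y,t)$ are genuinely independent of the perturbation and therefore act as fixed coefficients. This holds precisely because $v_\prof$ is defined as the value against the external profile $\prof$ — the player whose strategy we are varying is external and does not enter $\prof$ — so $U_\prof(B,t) = \sum_z B(z,t)\,v_\prof(z,t)$ is exactly linear in the row $B(\cdot,t)$, and the telescoping of the $\pm\epsilon$ terms is clean. The two feasibility checks for double-substochasticity are where the two hypotheses ($\N{A}(x,t) > 0$ for the row-$x$ slack, and $A(y,t) > 0$ from the minimizing box for the subtractability at $y$) are each used exactly once, so I would make sure to flag both explicitly.
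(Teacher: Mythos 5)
Your argument is the same exchange argument the paper uses, and almost all of it goes through verbatim: the choice of a minimizing box $y$ with $A(y,t)>0$, the linearity of $U_\prof(\cdot,t)$ in the row $B(\cdot,t)$ (since the values $v_\prof(\cdot,t)$ depend only on the external profile $\prof$), and the strictly positive utility difference $\epsilon(v_\prof(x,t)-v_\prof(y,t))$ contradicting greediness. However, there is one concrete flaw in your feasibility check. You define $B = A + \epsilon(\SM{x,t} - \SM{y,t})$ keeping all columns $s>t$ of $A$ intact, and you justify the row-$x$ slack by noting that $\N{A}(x,t)>0$ gives $\sum_{s\le t}A(x,s)<1$. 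But double-substochasticity constrains the \emph{full} row sum $\sum_{s\le T}B(x,s)$, and $A$ may place additional mass on box $x$ at rounds after $t$ so that $\sum_{s\le T}A(x,s)=1$ already (e.g.\ $A(x,t)=0.3$, $A(x,t+1)=0.7$). In that case no $\epsilon>0$ makes your $B$ doubly-substochastic. The paper avoids this by letting $B$ coincide with $A + \epsilon(\SM{x,t} - \SM{y,t})$ only on the first $t$ columns and setting $B$ to zero thereafter; this is legitimate because greediness at time $t$ only requires the competitor to agree with $A$ on rounds $s<t$ and only compares utilities at round $t$, so truncating the later columns costs nothing. With that one-line repair to the construction of $B$, your proof coincides with the paper's.
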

\begin{proof}
Assume otherwise, that is, $v_\prof(x,t) > v_\prof(A,t)$.
Take some $y$ s.t.\ $v_\prof(y,t) = v_\prof(A,t)$, and let $B$ coincide with $A + \epsilon(\delta_{x,t} - \delta_{y,t})$ for the first $t$ columns, and $0$ for the rest. Taking $\epsilon > 0$ small enough makes $B$ doubly-substochastic. But
\[
U_\prof(B,t) - U_\prof(A,t) = 
\epsilon\B{ v_\prof(x,t) - v_\prof(y,t)
} > 0,
\]
contradicting the greediness of $A$.
\end{proof}

\subsubsection{An Equivalent Definition of Greediness}

\begin{lemma}\label{lem:greedydef}
A non-redundant strategy $A$ is greedy w.r.t.\ $\prof$ at time $t$ iff for every $x$ and $y$, if $A(x,t)>0$ and $v_\prof(x,t)<v_\prof(y,t)$ then $\N{A}(y,t)=0$. 
\end{lemma}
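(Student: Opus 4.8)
The statement to prove is Lemma~\ref{lem:greedydef}, which gives an equivalent characterization of greediness for non-redundant strategies in symmetric profiles. Let me sketch a proof plan.

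\bigskip

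The plan is to prove both directions of the iff, relying on Observation~\ref{obs:greedy} and the definition of $v_\prof(A,t)$ as the value of the least valuable box that $A$ chooses with positive probability at time $t$.

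For the forward direction, suppose $A$ is greedy w.r.t.\ $\prof$ at time $t$, and suppose $A(x,t)>0$ with $v_\prof(x,t)<v_\prof(y,t)$ for some $y$. I want to conclude $\N{A}(y,t)=0$. Since $A(x,t)>0$, by the definition of $v_\prof(A,t)$ we have $v_\prof(A,t)\le v_\prof(x,t)<v_\prof(y,t)$. Now I would argue by contradiction: if $\N{A}(y,t)>0$, then Observation~\ref{obs:greedy} applied to $y$ would force $v_\prof(y,t)\le v_\prof(A,t)$, directly contradicting $v_\prof(A,t)<v_\prof(y,t)$. Hence $\N{A}(y,t)=0$, as desired. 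This direction is essentially a direct application of Observation~\ref{obs:greedy}.

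For the reverse direction, suppose the stated condition holds: whenever $A(x,t)>0$ and $v_\prof(x,t)<v_\prof(y,t)$, then $\N{A}(y,t)=0$. I must show $A$ is greedy at time $t$, i.e.\ that no strategy $B$ agreeing with $A$ before time $t$ can have strictly larger round-$t$ utility $U_\prof(B,t)=\sum_z B(z,t)v_\prof(z,t)$. The key structural fact is that the condition says $A$ concentrates its round-$t$ mass on the boxes of highest value subject to the already-committed past columns. Concretely, let $v^\ast=v_\prof(A,t)$ be the minimum value over boxes $A$ chooses; the hypothesis implies that every box $y$ with $v_\prof(y,t)>v^\ast$ must already be fully committed ($\N{A}(y,t)=0$, so $B(y,t)\le 0$ for any admissible $B$ agreeing with $A$ up to $t-1$), so $B$ cannot place mass on strictly higher-value boxes than $A$ does. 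Since $A$ is non-redundant it uses a total mass of $1$ in column $t$ (subject to feasibility), and any competing $B$ also has column-sum at most $1$; I would compare $U_\prof(B,t)$ and $U_\prof(A,t)$ by noting that moving any probability mass from the boxes $A$ uses to boxes it leaves room for can only go to boxes of value at most $v^\ast$, hence cannot increase the round-$t$ utility. Making this an exchange/rearrangement argument yields $U_\prof(A,t)\ge U_\prof(B,t)$.

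The main obstacle I anticipate is the reverse direction, specifically turning the pointwise condition into a clean optimality inequality. The subtlety is bookkeeping the feasibility constraints: $B$ must agree with $A$ on columns $s<t$, so the available ``room'' in each row $y$ at time $t$ is exactly $\N{A}(y,t-1)$, and the hypothesis precisely says the high-value rows have no remaining room. I expect the cleanest route is to show that $A$'s column-$t$ distribution is a maximizer of $\sum_z B(z,t) v_\prof(z,t)$ over all feasible column vectors — this is a fractional assignment / greedy-on-sorted-values argument, where sorting boxes by $v_\prof(\cdot,t)$ and filling from the top subject to row and column capacity gives the optimum, and the hypothesis guarantees $A$ realizes exactly such a top-filling. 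Care is needed to handle ties at the threshold value $v^\ast$, but ties do not affect the utility since all tied boxes contribute the same value per unit mass.
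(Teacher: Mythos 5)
Your proof follows essentially the same route as the paper's: the forward direction is the same one-line application of Observation~\ref{obs:greedy}, and the reverse direction is the same partition of boxes into those with value above the threshold $v_\prof(A,t)$ (which the hypothesis forces to be saturated) and the rest, followed by the exchange argument that any feasible $B$ can only redirect mass toward boxes of value at most $v_\prof(A,t)$ while the non-redundant $A$ places a full unit of mass at value at least $v_\prof(A,t)$. One small slip worth fixing: $\N{A}(y,t)=0$ does not give $B(y,t)\le 0$ but rather $B(y,t)\le \N{A}(y,t-1)=A(y,t)$ (since $B$ is only constrained to agree with $A$ on columns $s<t$), which is the fact you actually use afterwards — $B$ cannot place \emph{more} mass than $A$ on a high-value box — and is how the paper phrases it.
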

\begin{proof}
If $A$ is greedy at time $t$, then assume by contradiction, that there are some $x$ and $y$ where $A(x,t)>0$, $v_\prof(x,t) < v_\prof(y,t)$, and yet $\N{A}(y,t) > 0$. By Observation \ref{obs:greedy}, 
$v_\prof(y,t) \leq v_\prof(A,t) \leq v_\prof(x,t)$, in contradiction.

Conversely, if the condition in the lemma is true, then let $B$ be some matrix as required by the definition of greediness. 
We partition the set of boxes into two sets. {\em High-value} boxes are those $x$ such that $v_\prof(x,t)>v_\prof(A,t)$, and {\em low-value} boxes are all the rest. By the condition in the lemma, we know that for every high-value box $x$, $\N{A}(x,t)=0$, and hence $A(x,t)$ is set to its maximal quantity $\N{A}(x,t-1)$. Therefore, for each such box, its contribution to the utility of $A$ is at least as large as its contribution to $B$'s utility, i.e.,
\[
A(x,t)v_\prof(x,t)\geq B(x,t)v_\prof(x,t).
\]
The total contribution of high-value boxes to the utility of $A$ is therefore also 
at least as large as their contribution to $B$'s utility. 

On the other hand, the total contribution of low-value boxes to the utility of $B$ is at most $v_\prof(A,t)$ times the probability that $B$ goes to a low-value box in round $t$. Since $v_\prof(A,t)$ is the lowest value that $A$ checks, and since $A$ is non-redundant, we get that in total, 
\[
\sum_x A(x,t) v_\prof(x,t) \geq \sum_x B(x,t) v_\prof(x,t),
\]
as required. 
\end{proof}

\subsubsection{Existence and Uniqueness of SGreedy Strategies}

Recall that a strategy is denoted \emph{sgreedy} if it greedy w.r.t\ $k-1$ copies of itself.

\begin{lemma}\label{lem:existence}
For every policy $C$ there exists a  non-redundant sgreedy strategy $A$. Moreover, all such strategies are equivalent. 
\end{lemma}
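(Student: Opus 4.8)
The plan is to build the matrix one column at a time for existence, and then to prove uniqueness by induction on the columns. The engine of both halves is the reformulation used in the proof of Lemma~\ref{lem:alg-mono}: $v_A(x,t) = f(x)\,(\N{A}(x,t-1))^{k-1}\,\Psi\!\left(A(x,t)/\N{A}(x,t-1)\right)$, where $\Psi$ is continuous and strictly decreasing with $\Psi(0)=C(1)=1$ and $\Psi(1)=C(k)$. The point is that once the first $t-1$ columns are fixed, the quantity $\N{A}(x,t-1)$ is determined for every $x$, and $v_A(x,t)$ becomes a continuous, strictly decreasing function of the single variable $A(x,t)\in[0,\N{A}(x,t-1)]$, ranging from the ``empty'' value $f(x)(\N{A}(x,t-1))^{k-1}$ down to $C(k)$ times that ``full'' value. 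This turns the choice of the $t$-th column into a water-filling problem and resolves the apparent circularity that $v_A(x,t)$ depends on the very quantity $A(x,t)$ being chosen.

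For existence I would induct on $t$, maintaining non-redundancy (it suffices to treat $t\le M$, since beyond that all boxes are exhausted by Observation~\ref{obs:nonRedundantM}). Given the first $t-1$ columns and a threshold $\lambda\ge 0$, set $A(x,t)=0$ if the empty value of $x$ lies below $\lambda$, set $A(x,t)=\N{A}(x,t-1)$ if even the full value of $x$ is at least $\lambda$, and otherwise let $A(x,t)$ be the unique mass for which $v_A(x,t)=\lambda$ (well defined and continuous in $\lambda$ by strict monotonicity of $\Psi$). The total assigned mass is continuous and non-increasing in $\lambda$, tends to $0$ as $\lambda\to\infty$, and tends to $\sum_x \N{A}(x,t-1)=M-(t-1)\ge 1$ as $\lambda\to 0^+$; by the intermediate value theorem there is a threshold $\lambda_t$ giving total mass exactly $1$, keeping the matrix non-redundant and doubly-substochastic. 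By construction every chosen box has value at least $\lambda_t$ while every box that is not fully exhausted has value at most $\lambda_t$, so any box whose value strictly exceeds that of a chosen box must be fully exhausted. This is exactly the condition of Lemma~\ref{lem:greedydef}, so the resulting $A$ is non-redundant and sgreedy.

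For uniqueness I would compare two non-redundant sgreedy strategies $A$ and $B$ by induction, showing that agreement on columns $1,\dots,t-1$ forces agreement on column $t$; non-redundancy gives $\sum_x A(x,t)=\sum_x B(x,t)=1$, so if the columns differed there would be a box $x$ with $A(x,t)>B(x,t)$ and a box $y$ with $A(y,t)<B(y,t)$. Because the two matrices agree before $t$, Lemma~\ref{lem:alg-mono} yields $v_A(x,t)<v_B(x,t)$ and $v_B(y,t)<v_A(y,t)$. The strict inequalities together with the shared values $\N{A}(\cdot,t-1)=\N{B}(\cdot,t-1)$ give $A(x,t)>0$, $B(y,t)>0$, $\N{A}(y,t)>0$, and $\N{B}(x,t)>0$. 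Greediness of $A$ and of $B$, read through Lemma~\ref{lem:greedydef}, then forces $v_A(x,t)\ge v_A(y,t)$ and $v_B(y,t)\ge v_B(x,t)$. Chaining these four facts, $v_A(y,t)\le v_A(x,t)<v_B(x,t)\le v_B(y,t)<v_A(y,t)$, a contradiction. Hence the columns coincide, and by induction $A=B$ as matrices, i.e.\ all non-redundant sgreedy strategies are equivalent.

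The main obstacle is the existence half: one must verify that the per-round water-filling is simultaneously well defined and genuinely greedy, which hinges entirely on the monotone reformulation through $\Psi$, and one must confirm that the intermediate-value argument delivers exactly unit mass so that non-redundancy---required to invoke Lemma~\ref{lem:greedydef}---is preserved at every round. Once these monotonicity facts are in hand, the uniqueness half is a short and clean inequality chain.
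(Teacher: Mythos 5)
Your proof is correct and follows essentially the same route as the paper's: existence by a per-column water-filling construction whose threshold is located via the intermediate value theorem, resting on the continuity and strict monotonicity of $v_A(x,t)$ in $A(x,t)$ (the $\Psi$ reformulation of Lemma~\ref{lem:alg-mono}), and uniqueness by taking the first differing column and deriving the same four-term contradiction chain $v_A(y,t)\le v_A(x,t)<v_B(x,t)\le v_B(y,t)<v_A(y,t)$ from non-redundancy and greediness. No gaps.
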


The proof of existence constructs a sgreedy matrix $A$ inductively on $t$.
Intuitively, for time $t$, we examine a candidate value $w$ and find the $t$-th column of $A$ which satisfies $w=v_A(t)$, where we recall that $v_A(t)
= \min_x\stset{v_A(x,t)}{A(x,t) > 0} 
$.
This is done according to Lemma \ref{lem:greedydef}, which gives conditions for a non-redundant $A$ to be  sgreedy. 
The boundary cases are easy.  Specifically, if by setting $A(x,t) = 0$ we get $v_A(x,t) \leq w$ then we keep $A(x,t) = 0$. This is because adding more probability to $x$ would only reduce  the value $v_A(x,t)$, and we cannot have a box in the support that is below $v_A(t)$. Similarly, if by setting $A(x,t)=\N{A}(x,t-1)$ we have $v_A(x,t) \geq w$, then we keep 
$A(x,t)=\N{A}(x,t-1)$, which means that the box is filled. Otherwise, we need to show that there is a value for $A(x,t)$ that  gives $v_A(x,t)=w$. The existence of such a value is shown using the fact that $v_A(x,t)$ is a monotone and continuous function.

However, the sum of these $A(x,t)$'s over the $x$'s needs to be 1 for $A$ to be doubly-substochastic and non-redundant.
We therefore prove that by continuity, there exists a $w$ giving the desired sum of 1.
\begin{proof}\emph{(of Lemma \ref{lem:existence})}
As our construction will be non-redundant, for any $t > M$, all boxes will already be checked, and so setting $A(x,t) = 0$ is fine.
Fix a time $t\leq \min\{T,M\}$.  
Assume that we already defined all the $A(x,s)$ for $s<t$ such that $A$ is non-redundant and sgreedy for all such $s$. We will find values for $A(x,t)$ that are the same, and thus prove the lemma by induction.

Consider $v_A(x,t)$ as a function of $A(x,t)$, with $\N{A}(x,t-1) = 1 - \sum_{s<t} A(x,s)$ fixed. By Lemma~\ref{lem:mono}, it is a strictly decreasing function, it is continuous, and is defined between $0$ and $\N{A}(x,t-1)$. This means it has a continuous inverse function with this range. Denote by $A_w(x,t)$ the extended inverse. That is, given $w$, it is the $A(x,t)$ such that setting it gives $v_A(x,t) = w$ if such $A(x,t)$ exists. If this is not possible, it means either $w$ is too large, and so $A_w(x,t) = 0$, or too small, and then $A_w(x,t) = \N{A}(x,t-1)$.

Next, we choose $w$ so that $\sum_x A_w(x,t)=1$. This would be needed to ensure that $A$ is doubly-substochastic and non-redundant. For that, we note that $\sum_x A_0(x,t) = \sum_x \N{A}(x,t-1) \geq 1$, as $t -1 < M$. On the other hand, $1$ is larger than all $f(x)$, and so is an upper bound on the values boxes can take. Therefore, $\sum_x A_1(x,t) = 0$. 
By the fact that the $A_w(x,t)$ are a continuous function of $w$, this means there is some $w$ such that $\sum_x A_w(x,t)=1$. 
Moreover, it is easy to see (no matter what $w$ is) that these $A_w(x,t)$'s satisfy the greediness condition of Lemma \ref{lem:greedydef}. 
In other words, setting $A(x,t) = A_w(x,t)$, extends $A$ to time $t$ in that it is sgreedy and non-redundant, as required.

Next we prove uniqueness. 
Assume by contradiction that $A \neq B$ and both are sgreedy and non-redundant. Take the first column $t \geq 1$ where they differ. 
As the matrices are non-redundant, each column sums to 1, and therefore there is some $x$ where 
$A(x,t) > B(x,t)$ and some $y$ where 
$B(y,t) > A(y,t)$, which
by Lemma \ref{lem:alg-mono}, imply that $v_A(x,t) < v_B(x,t)$ and 
$v_B(y,t) < v_A(y,t)$. Now, 
\[
v_A(t) \leq v_A(x,t) < v_B(x,t) \leq v_B(t)
,\]
where the last inequality is because
$B(x,t) < A(x,t) \leq \N{A}(x,t-1) = \N{B}(x,t-1)$, and so Observation~\ref{obs:greedy} applies.
On the other hand, in the same manner,
\[
v_B(t) \leq v_B(y,t) < v_A(y,t) \leq v_A(t),
\] in contradiction.
\end{proof}

\subsubsection{SGreedy Strategies are Approximate Equilibria}

Now we can prove one of our main theorems:
\thmGreeeeed*
Let us first define:
\begin{definition} \label{def:fill}
We say that $A$ {\em fills} box $x$ at round $t$ if $A(x,t)>0$ and $\N{A}(x,t)=0$. 
\end{definition}
Intuitively, the proof that a non-redundant sgreedy strategy is an approximate equilibrium proceeds as follows. Clearly, for a given round $t$, any player $B$ playing against $k-1$ copies of $A$ cannot gain more than  $\maxv(t)=\max_x  v_A(x,t)$. 
Furthermore, for rounds where no box is filled by $A$, it really can't do any better than $A$, since, as shown in Item 1 in the proof below, the utility of $A$ in these cases is equal to
$\maxv(t)$. 

We further show that this is also true for the special case of the first round, no matter whether a box is filled or not at that round (Item 2).
For the other rounds, it turns out that $A$'s utility is always at least as large as the maximum value of any later round (Item 3).

Overall, using a telescopic argument, the sum of maximum values over the rounds is at most that of $A$'s utilities, plus the max value at the first round greater than 1 where a box is filled.
To upper bound this additive term, we prove (Item 4) that the value of a box when it is filled is at most $C(k)$ times its value at the first round, and so at most $C(k)$ times the total utility of $A$.

\begin{proof}\emph{(of Theorem \ref{thm:greeeeed})}
The existence of a strategy $A$ that is non-redundant and sgreedy is what was proved in Lemma \ref{lem:existence}. We prove here that such a strategy is a $(1+C(k))$-equilibrium. Consider a strategy $B$. We compare the utility of $B$ to that of $A$ when both play against $k-1$ players playing $A$. By non-redundancy, all of $v_A(x,t)$ are $0$ when $t>M$, and so we can assume $T\leq M$.

Since the utility of $B$ in any round $t$ is a convex combination of $v_A(x,t)$, we have
$U_A(B,t) \leq \maxv(t)$.
The following four claims hold for any round $t$:

\begin{enumerate}
\item 
If $A$ does not fill any box at round $t$ then 
$U_A(A, t) = \maxv(t)$. \\ This is because $U_A(A,t)$ is a convex combination of $v_A(x,t)$ for the boxes where $A(x,t) > 0$, which by the characterization of greediness in Lemma \ref{lem:greedydef}, all have the same value at time $t$.
\item
$U_A(A, 1) = \maxv(1)$. \\
Why? if no box is filled in round 1,  then Item 1 applies. Otherwise, for some box $x$, $A(x,1) = 1$, and all other boxes have $A(\cdot, 1) = 0$. The result follows again by Lemma \ref{lem:greedydef}.
\item
For any $s<t$, $U_A(A,s) \geq \maxv(t)$. \\
We prove this by showing that for every $x$, 
$U_A(A,s) \geq v_A(x,t)$.
If $v_A(x,t) = 0$, then the claim is clear. 
Otherwise, $A(x,t) > 0$ or $\N{A}(x,t) > 0$ or both. Either way, $\N{A}(x,s) > 0$. Therefore, as $A$ is sgreedy, for every $y$ such that $A(y,s)>0$, 
$
v_A(y,s) \geq v_A(x,s) \geq v_A(x,t)
$. The last inequality follows from monotonicity, i.e., Lemma \ref{lem:mono}. As $v_A(A,s)$ is a convex combination of such $y$'s we conclude.
\item
If $A$ fills box $x$ at time $t>1$ then $v_A(x,t) \leq C(k) v_A(x,1)$.\\
To see why, first note that
$
v_A(x,1) \geq f(x) C(1) \N{A}(x,1)^{k-1} =
f(x) \N{A}(x,1)^{k-1}$.
On the other hand, since $\N{A}(x,t) = 0$, we have
$
v_A(x,t) = f(x) C(k) A(x,t)^{k-1} \leq
f(x) C(k) \N{A}(x,1)^{k-1}$,
because $A(x,t) \leq \N{A}(x,t-1) \leq \N{A}(x,1)$. 
Combining the above two inequalities gives the result.
\end{enumerate}
Denote by $X_\text{fill}$ the set of rounds for which there is at least one box $x$ that is filled by $A$. Let $X_\text{no-fill}$ be the rest of the rounds, except for $t=1$ which is in neither. 
\[
U_A(B) 
\leq \sum_t \maxv(t)
=
\sum_{t \in X_\text{no-fill} \cup \set{1}} \maxv(t)
+
\sum_{t \in X_\text{fill}} \maxv(t).
\]
We want to show that this is not much larger than $U_A(A) = \sum_t U_A(A, t)$.
By Items 1 and 2 above, for each $t \in X_{\text{no-fill}} \cup \set{1}$, $U_A(A,t) = \maxv(t)$.
Therefore, the first sum is exactly $\sum_{t \in X_\text{no-fill} \cup \set{1}} U_A(A, t)$.

Regarding the second sum, denote $X_\text{fill} = \set{t_0, t_1, \ldots, t_k}$ in order. 
By Item 3, for every $i \geq 1$, 
\[
\maxv(t_i) \leq  U_A(A, t_{i-1})
.\]
Therefore,
\[
\sum_{t \in X_\text{fill}} \maxv(t)
\leq
\maxv(t_0) + \sum_{t \in X_\text{fill} \setminus \set{t_k}} U_A(A,t) 
\leq
\maxv(t_0) + \sum_{t \in X_\text{fill}} U_A(A,t) 
.\]
To sum up,
\[
U_A(B) 
\leq 
\sum_t U_A(A,t) + \maxv(t_0)
= U_A(A) + \maxv(t_0)
\]
Finally, we bound the effect of adding $\maxv(t_0)$. Since $t_0 \in X_\text{fill}$, then $t_0 > 1$.  
By Items 4 and 2:
\[
\maxv(t_0) 
= \max_x v_A(x,t_0) 
\leq \max_x C(k) v_A(x,1) 
= C(k) U_A(A,1)
\leq
C(k) U_A(A)
.\qedhere\]
\end{proof}
In Appendix \ref{exm:non-eq} we provide an example showing that in the sharing policy, a non-redundant sgreedy strategy is not necessarily at equilibrium. On the other hand, it is worth noting that for any policy, the existence of a symmetric equilibrium follows from \cite{nash}, and
for $C(k)=0$ we can get a full characterization of such equilibria:

\subsubsection{Symmetric Equilibria are SGreedy when \texorpdfstring{\boldmath{$C(k)=0$}}{C(k)=0}}

Theorem \ref{thm:greeeeed} tells us that
if $C(k)=0$ then there exists a non-redundant sgreedy strategy, and it is a symmetric equilibrium. We next claim that this is the only symmetric equilibrium up to equivalence.

\lemGreedyCk*
\begin{proof}
Let $A$ be some symmetric equilibrium. We will show it is non-redundant and sgreedy, and so by Theorem \ref{thm:greeeeed} we get that it is unique up to equivalence. By Lemma \ref{lem:sym-non-redundant}, $A$ is non-redundant, and we may assume that $T \leq M$. 
If $A$ is not sgreedy, then let $x,y$ and $t$ consist of a counter example, i.e.,
$A(x,t) > 0$ and yet $v_A(x,t) < v_A(y,t)$, where $\N{A}(y,t) > 0$.
There are two cases, and in both we will construct some doubly-substochastic matrix $B$ such that $U_A(B) > U_A(A)$, thus contradicting the claim that $A$ is a symmetric equilibrium.\\

\noindent {\bf Case 1.} $\sum_{s\leq T} A(y,s) < 1$.
In this case,  
let $B = A + \epsilon(\SM{y,t} - \SM{x,t})$.
Taking a small enough $\epsilon$  ensures that $B$ is a doubly-substochastic matrix, since (1) $y$'s row sums to strictly less than $1$ in $A$,  (2) the sum of column $t$ is the same as in $A$, and 
(3) $x$'s row only decreased in value compared to $A$.
Lastly, \[U_A(B) - U_A(A) = \epsilon (v_A(y,t) - v_A(x,t)) > 0.\] 

\noindent{\bf Case 2.} $\sum_{s\leq T} A(y,s) = 1$. This means that there is some first $t'$, where $\N{A}(y,t') = 0$. As $\N{A}(y,t)>0$, we have $t'>t$. Now:
\[
v_A(y,t') = f(y) \sum_{\ell=0}^{k-1} C(\ell+1) \binom{k-1}\ell A(y,t')^\ell \N{A}(y,t')^{k-\ell-1} =
f(y) C(k) A(y,t')^{k-1} = 0
,\]
because $C(k) = 0$.
Define: 
\[
B = A + \epsilon(-\SM{x,t} + \SM{y,t} - \SM{y,t'})
.\]
Taking a small enough $\epsilon$, as both $A(x,t)$ and $A(y,t')$ are strictly positive, $B$ is non-negative.
As $A$ is doubly-substochastic, $B$ is doubly-substochastic as well. 
\[
U_A(B) - U_A(A) = \epsilon(v_A(y,t) - v_A(x,t) - v_A(y,t'))
.\]
However, $v_A(y,t') = 0$, and $v_A(y,t) > v_A(x,t)$, and so this quantity is strictly positive.
\end{proof}

Interestingly, this result does not extend to non-symmetric profiles even for the exclusive policy, as is demonstrated by the following example of a non-greedy non-redundant equilibrium.
Consider three players and two rounds. $f(1) = f(2) = f(3) = (1-\epsilon)/3, f(4) = \epsilon$, for some small positive~$\epsilon$.
 Player 1 plays first 4 and then 1. Player 2 plays 2 and then 3, and player 3 plays 3 and then 2. This can be seen to be an equilibrium, yet player 1 is not greedy.

\subsection{Constructing Approximate Equilibria}

Theorem \ref{thm:greeeeed} is non-constructive in general, and does not allow us to actually find the desired sgreedy strategy in polynomial time.
Its possible to approximate the sgreedy strategy, step by step, but the errors accumulate exponentially and so the utilities for the players become skewed. Therefore, unless exponential time is invested, the resulting strategy is far from being an equilibrium. 

We therefore forget the true sgreedy strategy, and define   approximate greediness and non-redundancy. This gives a cheaper way of reaching an approximate equilibrium.
Our goal here is to prove:
\thmSgreedyComplete*

\begin{definition}\label{def:eps-sgreedy}
We say $A$ is {\em $\epsilon$-sgreedy} at time $t$ if whenever $A(x,t) > 0$ and $v_A(y,t) > v_A(x,t) + \epsilon$ then $\N{A}(y,t) = 0$. It is $\epsilon$-sgreedy if it is $\epsilon$-sgreedy for each time $t$.
\end{definition}
\begin{definition}
Let $0\leq \delta\leq 1$. We say $A$ is {\em $\delta$-redundant} if for every $t \leq M$, $\sum_x A(x,t) \geq 1 - \delta$. \end{definition}
We then prove the following two lemmas (in subsections  \ref{sec:SgreedyUse} and \ref{sec:lemGreedyCreate} below)
\begin{restatable}{lemma}{lemSgreedyUse}\label{lem:sgreedyUse}
If $A$ is $\delta$-redundant and  $\epsilon$-sgreedy then for any $B$, 
\[
U_A(B) \leq \F{1+C(k)}{1-\delta}U_A(A) + (T+1)\epsilon.
\]
\end{restatable}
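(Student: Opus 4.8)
The plan is to follow the skeleton of the proof of Theorem~\ref{thm:greeeeed}, replacing its four exact claims by approximate versions that absorb the $\epsilon$-sgreediness (Definition~\ref{def:eps-sgreedy}) and $\delta$-redundancy slack, and---crucially---to carry a factor of $(1-\delta)$ on every maximum-value term until a single division at the very end. Writing $\maxv(t)=\max_x v_A(x,t)$, the starting point is unchanged: since $U_A(B,t)=\sum_x B(x,t)\,v_A(x,t)$ with $\sum_x B(x,t)\le 1$, we have $U_A(B)\le\sum_t\maxv(t)$, so it suffices to bound $(1-\delta)\sum_t\maxv(t)$ and then divide by $1-\delta$.

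First I would prove analogues of the four items, each in the normalized form $(1-\delta)\maxv(\cdot)\le U_A(A,\cdot)+(1-\delta)\epsilon$. For a round $t$ in which $A$ fills no box, $\epsilon$-sgreediness forces every box in the support and every unfilled box to have value at most $v_A(t)+\epsilon$, where $v_A(t)$ is the least support value; $\delta$-redundancy gives $U_A(A,t)\ge (1-\delta)\,v_A(t)$, whence $(1-\delta)\maxv(t)\le U_A(A,t)+(1-\delta)\epsilon$. Round $t=1$ obeys the same bound: if no box is filled this is the previous case, and if some box $x$ has $A(x,1)=1$ then $\epsilon$-sgreediness forces $v_A(y,1)\le v_A(x,1)+\epsilon$ for all $y$, so $\maxv(1)\le U_A(A,1)+\epsilon$ and the normalized bound follows. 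For the cross-round estimate, given $s<t$ let $x^\star$ attain $\maxv(t)$ (assuming $\maxv(t)>0$, else trivial); monotonicity (Lemma~\ref{lem:mono}) gives $v_A(x^\star,s)\ge v_A(x^\star,t)$ and, since $\N{A}(x^\star,s)>0$, $\epsilon$-sgreediness at round $s$ makes every support box of round $s$ have value at least $\maxv(t)-\epsilon$; combined with $\delta$-redundancy this yields $(1-\delta)\maxv(t)\le U_A(A,s)+(1-\delta)\epsilon$. Finally, the purely algebraic fill bound $v_A(x,t)\le C(k)\,v_A(x,1)$ for a box filled at $t>1$ carries over verbatim from Item~4 of Theorem~\ref{thm:greeeeed}; applying $\epsilon$-sgreediness at the first fill round $t_0>1$ between its filled box and the maximizer then gives $\maxv(t_0)\le C(k)\maxv(1)+\epsilon$.

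Next I would assemble these by the same telescoping as in Theorem~\ref{thm:greeeeed}. Partition the rounds into $\{1\}$, the non-fill rounds $X_{\text{no-fill}}$, and the fill rounds $X_{\text{fill}}=\{t_0<\dots<t_m\}$, each $t_i>1$. Bounding $(1-\delta)\maxv(t_i)$ by $U_A(A,t_{i-1})+(1-\delta)\epsilon$ for $i\ge1$, the first fill round $t_0$ via the fill bound together with the round-$1$ estimate, and every remaining round by the normalized no-fill/round-$1$ bound, the $U_A(A,\cdot)$ contributions telescope to at most $(1+C(k))U_A(A)$ (the surplus $C(k)U_A(A,1)\le C(k)U_A(A)$ coming from $t_0$), while the error terms contribute exactly $(T+1)(1-\delta)\epsilon$: one unit for round $1$, one per no-fill round, $m$ units from the telescoped fill rounds, and two from $t_0$, summing to $|X_{\text{no-fill}}|+m+3=(T-1)+2=T+1$ units of $(1-\delta)\epsilon$. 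Dividing the resulting inequality $(1-\delta)\sum_t\maxv(t)\le(1+C(k))U_A(A)+(T+1)(1-\delta)\epsilon$ by $1-\delta$ gives the claim.

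The main obstacle is bookkeeping rather than any single hard step: a naive argument divides each individual bound by $1-\delta$ and produces an error of $(T+1)\epsilon/(1-\delta)$, which is too large to match the stated bound. Keeping the factor $(1-\delta)$ attached to every $\maxv$ term and dividing only once at the end is exactly what turns this into the clean $(T+1)\epsilon$. A secondary point needing care is that $\epsilon$-sgreediness is a one-sided condition, so I must apply it in both directions---comparing each support box to the maximizer and vice versa---to obtain the two-sided ``values within $\epsilon$'' control on which the no-fill and cross-round estimates rely.
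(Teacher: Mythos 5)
Your proposal follows the paper's proof almost step for step: the same four per-round claims (no-fill, round~1, cross-round, and the fill bound $v_A(x,t)\le C(k)v_A(x,1)$), the same partition into $\set{1}\cup X_\text{no-fill}\cup X_\text{fill}$, and the same telescoping with the single extra $\maxv(t_0)$ term charged against $C(k)U_A(A,1)$. Your ``normalized form'' $(1-\delta)\maxv(t)\le U_A(A,t)+(1-\delta)\epsilon$ is algebraically identical to the paper's claims $U_A(A,t)\ge(1-\delta)(\maxv(t)-\epsilon)$, so the bookkeeping concern you flag as the main obstacle is not a real divergence: the paper also divides term by term and still gets $(T+1)\epsilon$, precisely because the $\epsilon$ sits inside the $(1-\delta)(\cdot)$ factor. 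Your error count of $T+1$ units and the $(1+C(k))U_A(A)$ telescope both check out.

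There is one genuine (though small and easily repaired) gap: you never reduce to the case $T\le M$. The hypothesis of $\delta$-redundancy only constrains rounds $t\le M$, so for $t>M$ your no-fill estimate $(1-\delta)\maxv(t)\le U_A(A,t)+(1-\delta)\epsilon$ can fail outright --- e.g.\ if $A(\cdot,t)=0$ for all $x$ at some $t>M$ (as in the matrix actually constructed in Lemma~\ref{lem:greedyCreate}), then $U_A(A,t)=0$ while $\maxv(t)$ can still be of order $f(1)\N{A}(x,M)^{k-1}$, which need not be below $\epsilon$ when $\delta>0$. The paper closes this by first observing that, since $v_A(x,\cdot)$ is non-increasing (Lemma~\ref{lem:mono}), one may assume w.l.o.g.\ that $B$ is non-redundant and hence gains nothing after round $M$, so the whole argument can be run with $T$ replaced by $\min\set{T,M}$, which only strengthens the bound. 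You need to add this reduction (or otherwise handle rounds $t>M$, say by charging them to earlier rounds via your cross-round estimate); with it, your proof is complete and coincides with the paper's.
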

In a sense, this lemma is a generalization of Theorem \ref{thm:greeeeed} which says that for $\delta=0$ and $\epsilon=0$, $A$ is a $(1+C(k))$-equilibrium. 

\begin{restatable}{lemma}{lemGreedyCreate}\label{lem:greedyCreate}
There is an algorithm that 
given a configuration, $\epsilon > 0$ and $\delta > 0$,
finds a matrix that is $\delta$-redundant and $\epsilon$-sgreedy.
The algorithm runs in polynomial time in parameters  
$T$, $k$, $M$, $\log(1/\epsilon)$, $\log(1/\delta)$, $\log(1/(1 - C(k)))$, and $\log(1/f(M))$.
\end{restatable}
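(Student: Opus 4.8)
The plan is to make the inductive, column-by-column construction of Lemma \ref{lem:existence} effective by replacing its two non-constructive inversions with binary search, and then to budget the incurred errors against $\epsilon$ and $\delta$. Recall that to extend a non-redundant sgreedy matrix to column $t$, that proof fixes the already-computed columns (hence fixes each $\N{A}(x,t-1)$), and for a threshold $w$ sets $A(x,t)=A_w(x,t)$, the value of $A(x,t)$ for which $v_A(x,t)=w$ (clamped to $0$, or to $\N{A}(x,t-1)$, when no such value exists), and finally selects the unique $w$ with $\sum_x A_w(x,t)=1$. Both the inner inversion $w\mapsto A_w(x,t)$ and the outer choice of $w$ rely on exact root-finding, and these are exactly what I would approximate.

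First I would handle the inner step. Fixing $w$, for each box $x$ I classify it from two polynomial evaluations: the value of $v_A(x,t)$ at $A(x,t)=0$, namely $f(x)\N{A}(x,t-1)^{k-1}$, and its value at $A(x,t)=\N{A}(x,t-1)$, namely $C(k)f(x)\N{A}(x,t-1)^{k-1}$. If the former is $\le w$ the box is set to $0$; if the latter is $\ge w$ the box is filled, i.e.\ $A(x,t)=\N{A}(x,t-1)$; otherwise it is a \emph{partial} box. For partial boxes I binary-search $A(x,t)\in[0,\N{A}(x,t-1)]$ for a value with $|v_A(x,t)-w|\le\epsilon/2$; this is legitimate because $v_A(x,t)$ is continuous and strictly decreasing in $A(x,t)$ (the claim that $\Psi$ is decreasing, inside the proof of Lemma \ref{lem:alg-mono}), and a Lipschitz bound on $v_A(x,t)$ in $A(x,t)$ — the coefficients of $\Psi$ are at most $\binom{k-1}{\ell}\le 2^{k-1}$ since $C(\ell)\le 1$ — shows $O(k+\log(1/\epsilon))$ iterations suffice. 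A direct case analysis of Definition \ref{def:eps-sgreedy} (splitting on whether the two boxes $x,y$ are partial, filled, or zero) then shows the resulting column is $\epsilon$-sgreedy: filled boxes carry value $\ge w$, partial boxes value in $[w-\epsilon/2,\,w+\epsilon/2]$, and zero boxes value $\le w$, so every unfilled box $y$ has $v_A(y,t)\le v_A(x,t)+\epsilon$ for every $x$ in the support.

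Next I would choose $w$. Since $\sum_x A_w(x,t)$ is continuous and non-increasing in $w$, equals $\sum_x\N{A}(x,t-1)\ge M-(t-1)\ge 1$ at $w=0$ (as $t\le M$, using that earlier column sums are $\le 1$) and equals $0$ at $w=1$, I binary-search $w\in[0,1]$, maintaining $w_{lo}$ with column sum $\ge 1$ and $w_{hi}$ with column sum $\le 1$, and I output the column computed at $w_{hi}$. Its true column sum is then $\le 1$, so the matrix is column-substochastic (and row-substochastic, since we always cap $A(x,t)\le\N{A}(x,t-1)$). It remains to force the sum to be $\ge 1-\delta$, giving $\delta$-redundancy: it suffices that, once $w_{hi}-w_{lo}$ is small, the sums at $w_{lo}$ and $w_{hi}$ differ by at most $\delta$, whence $\mathrm{sum}(w_{hi})\ge \mathrm{sum}(w_{lo})-\delta\ge 1-\delta$. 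Running this over all $t\le M$ multiplies the cost by $T$.

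The main obstacle is precisely this last modulus-of-continuity estimate: I must lower-bound the derivative of $v_A(x,t)$ in $A(x,t)$ on the partial range, so that $\sum_x A_w(x,t)$ moves by at most $\delta$ once $w$ is confined to an interval of polynomially small width. Writing $v_A(x,t)=f(x)\N{A}(x,t-1)^{k-1}\,\Psi(A(x,t)/\N{A}(x,t-1))$, the derivative is governed by $|\Psi'|$, and $\Psi$ flattens exactly as the policy degenerates toward $C\equiv 1$; quantitatively $|\Psi'|$ on the interior is bounded below in terms of the total drop $C(1)-C(k)=1-C(k)$, while the prefactor $f(x)\N{A}(x,t-1)^{k-1}$ is bounded below using $f(x)\ge f(M)$ together with a bound on how small $\N{A}(x,t-1)$ can be for a still-partial box. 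This is where $\log(1/(1-C(k)))$ and $\log(1/f(M))$ enter the bit count. The residual care is bookkeeping: the inner search returns $A_w(x,t)$ only to value-precision $\epsilon/2$, so the computed column sums carry an additive error that must be kept below $\delta/2$ by taking a few extra bits; and since each column is recomputed from the current $\N{A}(\cdot,t-1)$ rather than chased toward an approximate target, these per-column errors do not compound beyond the stated polynomial dependence.
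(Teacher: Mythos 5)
Your proposal follows essentially the same route as the paper's proof: the same three-way classification of boxes into zero/filled/partial, an inner binary search on $A(x,t)$ justified by the monotonicity and a Lipschitz upper bound, and an outer binary search on $w$ whose termination rests on a quantitative lower bound involving $1-C(k)$ and $f(M)$ (the paper's Lemma~\ref{lem:bounds}). The only caveat is in the step you flag as the main obstacle: there is no uniform pointwise lower bound on $|\Psi'|$ (it can vanish at the endpoints), so the correct form of the estimate is the finite-difference bound $\phi(p)-\phi(p+\epsilon)\geq \frac{1-C(k)}{k-1}\epsilon^{k-1}$, which weakens the outer search to $O\bigl(k\log(1/\delta)+k\log M+\log(1/f(M))+\log(1/(1-C(k)))\bigr)$ steps but still yields the stated polynomial bound.
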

Using these two lemmas, the theorem is not difficult to prove.
\begin{proof} \emph{(of Theorem \ref{thm:SgreedyComplete})}
First, if $M=1$, the strategy at equilibrium would be to pick box $1$ at $t=1$. We therefore assume $M \geq 2$.
Set $\epsilon = \theta/2(T+1) \cdot f(2)/2^{k+1}$  and $\delta= \F{\theta/2}{1 + \theta/2}$. 
Hence, $1/(1 - \delta) = 1 + \theta/2$. 
We use Lemma \ref{lem:greedyCreate} to construct a matrix $A$ which is $\delta$-redundant and $\epsilon$-sgreedy. As $\delta = \Theta(\theta)$, this construction takes polynomial time as required.
Also, according to Corollary  \ref{cor:birkhoff}, this strategy is implementable by a polynomial algorithm in the size of the matrix. 
According to Lemma \ref{lem:sgreedyUse}, for any  strategy $B$,
\begin{equation}\label{eq:ub}
U_A(B) 
\leq (1+C(k)) \B{1 + \F{\theta}{2}} U_A(A) + \F{\theta}{2} \cdot \F{f(2)}{2^{k+1}} 
.\end{equation}
As $M \geq 2$, 
then either $A(1,1)\leq 1/2$ or $A(2,1)\leq 1/2$. As $v_A(x,1) \geq f(x) C(1) \N{A}(x,1)^{k-1} =
f(x) \N{A}(x,1)^{k-1}$, we get that $\maxv(1)\geq f(2) / 2^{k-1}$. Since $\epsilon < f(2) / 2^{k}$, and $\delta$ can be assumed to be at most $1/2$, we get that $U_A(A) \geq f(2) / 2^{k+1}$. Therefore,
\[
\F{\theta}{2} \cdot \F{f(2)}{2^{k+1}} 
\leq (1+C(k))\F{\theta}{2}U_A(A)
,\]
which combined with Eq.~\eqref{eq:ub} gives the result.
\end{proof}

\subsubsection{Approximate Redundant and Sgreedy implies Approximate Equilibrium}\label{sec:SgreedyUse}


\lemSgreedyUse*
\begin{proof}
The proof follows the same steps as that of Theorem \ref{thm:greeeeed}, except it deals with the approximate redundancy and sgreediness.
Consider a strategy $B$. We compare the utility of $B$ versus that of $A$ when both play against $k-1$ players playing $A$.

By Lemma \ref{lem:mono}, $v_A(x,t)$ is non-increasing in $t$, and so 
we can assume w.l.o.g.\ that $B$ is non-redundant. Indeed, there is no point in postponing opening a box until a later time. In particular, we can assume that $B$ does not choose any box beyond time $M$. Therefore, we shall let $A$ run at most $M$ rounds, and prove the result assuming this new $A$ and $T\leq M$. This can only decrease the r.h.s.\ in the statement of the lemma, and thus is enough.

Denote $\maxv(t) = \max_x v_A(x,t)$.
 Since the utility of $B$ in any round $t$ is a convex combination of $v_A(x,t)$, we have:
\begin{equation}\label{eq:utiB}
U_A(B,t) \leq \maxv(t).
\end{equation}
Recall from Definition \ref{def:fill}, that $A$ {\em fills} box $x$ at round $t$ if $A(x,t)>0$ and $\N{A}(x,t)=0$. The following claim 
corresponds to Item 1 in the proof of Theorem \ref{thm:greeeeed}.
\begin{claim}\label{clm:nothisand that}
If $A$ does not fill any box at round $t$ then 
$U_A(A, t) \geq (1-\delta)(\maxv(t) - \epsilon)$. \end{claim}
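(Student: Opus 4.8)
The plan is to mirror Item~1 in the proof of Theorem~\ref{thm:greeeeed}, replacing the exact identity $U_A(A,t)=\maxv(t)$ by its approximate version. The heart of the matter is to show that \emph{every} box $x$ that $A$ chooses with positive probability at round $t$ has value at least $\maxv(t)-\epsilon$; once this is in hand, the stated bound follows by averaging and invoking $\delta$-redundancy. I would first dispose of the trivial regime: if $\maxv(t)\leq\epsilon$, then $(1-\delta)(\maxv(t)-\epsilon)\leq 0\leq U_A(A,t)$ since all probabilities and values are non-negative, and there is nothing to prove. So assume $\maxv(t)>\epsilon$.

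The key claim to establish is: for every box $x$ with $A(x,t)>0$, we have $v_A(x,t)\geq\maxv(t)-\epsilon$. I would argue by contradiction: suppose $A(x,t)>0$ yet $v_A(x,t)<\maxv(t)-\epsilon$, and pick a box $y^\star$ attaining $v_A(y^\star,t)=\maxv(t)$. Then $v_A(y^\star,t)>v_A(x,t)+\epsilon$ while $A(x,t)>0$, so $\epsilon$-sgreediness (Definition~\ref{def:eps-sgreedy}) forces $\N{A}(y^\star,t)=0$. This is exactly where the no-fill hypothesis is needed: if in addition $A(y^\star,t)>0$, then $A$ would fill $y^\star$ at round $t$, contradicting the assumption; hence $A(y^\star,t)=0$. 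But with both $A(y^\star,t)=0$ and $\N{A}(y^\star,t)=0$, every summand in the defining expression for $v_A(y^\star,t)$ vanishes --- for $\ell<k-1$ the factor $\N{A}(y^\star,t)^{k-1-\ell}$ is zero, and for $\ell=k-1$ the factor $A(y^\star,t)^{k-1}$ is zero --- so $\maxv(t)=v_A(y^\star,t)=0$, contradicting $\maxv(t)>\epsilon\geq 0$. This proves the key claim.

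Finally, combining the key claim with $\delta$-redundancy gives the result, since $\maxv(t)-\epsilon\geq 0$ in this regime:
\[
U_A(A,t)=\sum_x A(x,t)\,v_A(x,t)\;\geq\;(\maxv(t)-\epsilon)\sum_x A(x,t)\;\geq\;(1-\delta)(\maxv(t)-\epsilon),
\]
using $\sum_x A(x,t)\geq 1-\delta$ in the last step.

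I expect the main obstacle to be precisely the situation where the value-maximizing box $y^\star$ has already been exhausted, i.e.\ $\N{A}(y^\star,t)=0$. In the exact setting (Lemma~\ref{lem:greedydef}) greediness pins down the support directly, but approximate greediness does not, so one cannot simply assert that $y^\star$ lies in the support. The resolution --- observing that an exhausted box with $A(y^\star,t)=0$ contributes zero value, which the no-fill hypothesis converts into a contradiction --- is the one genuinely non-mechanical point; the remaining averaging and the trivial regime $\maxv(t)\leq\epsilon$ are routine.
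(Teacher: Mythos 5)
Your proof is correct and follows essentially the same route as the paper's: show that every box in the support of $A(\cdot,t)$ has value at least $\maxv(t)-\epsilon$ by combining $\epsilon$-sgreediness with the no-fill hypothesis to force the maximizing box to have $A(y^\star,t)=\N{A}(y^\star,t)=0$ and hence zero value, then average using $\delta$-redundancy. Your explicit handling of the regime $\maxv(t)\leq\epsilon$ is a small point of extra care the paper leaves implicit, but it does not change the argument.
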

\begin{proof}
We first argue that under the assumption of the claim, for every $x$ such that $A(x,t)>0$, we have $v_A(x,t)\geq \maxv(t) - \epsilon$. 
Indeed, assume by contradiction that $v_A(x,t) < v_A(y,t) - \epsilon$ for some $y$. Since $A$ is $\epsilon$-sgreedy then $\N{A}(y,t)=0$. By our assumption, this means that $A(y,t)=0$. 
This implies that $v_A(y,t) = 0$, in contradiction.

The claim then follows by $\delta$-redundancy, as $U_A(A,t)$ is a convex combination of such $v_A(x,t)$'s where the coefficients sum to at least $1-\delta$.
\end{proof}

The conclusion in Claim \ref{clm:nothisand that} holds for the first round without any condition. Indeed,  the following claim 
corresponds to Item 2 in the proof of Theorem \ref{thm:greeeeed}.
\begin{claim}\label{clm:firstRound}
$U_A(A,1) \geq (1-\delta)(\maxv(1) - \epsilon)$.
\end{claim}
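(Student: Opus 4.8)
The plan is to mimic Item 2 in the proof of Theorem \ref{thm:greeeeed}, splitting on whether $A$ fills a box at round $1$, but now carrying the two approximation slacks $\delta$ and $\epsilon$ through the argument. If $A$ does not fill any box at round $1$, then Claim \ref{clm:nothisand that} applies verbatim with $t=1$ and immediately delivers $U_A(A,1) \geq (1-\delta)(\maxv(1) - \epsilon)$, so there is nothing further to do in that case.

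The interesting case is when $A$ fills some box $x$ at round $1$. Since $\N{A}(x,0)=1$, the filling condition $\N{A}(x,1)=0$ forces $A(x,1)=1$; as $A$ is doubly-substochastic its first column sums to at most $1$, so $A(y,1)=0$ for every $y\neq x$. Thus $A$ deterministically opens $x$ in round $1$ and $U_A(A,1)=\sum_z A(z,1)v_A(z,1)=v_A(x,1)$. It then remains only to lower bound $v_A(x,1)$ by $\maxv(1)-\epsilon$.

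For this lower bound I would invoke $\epsilon$-sgreediness with the filled box $x$, which satisfies $A(x,1)=1>0$. For any box $y\neq x$ we have $A(y,1)=0$, hence $\N{A}(y,1)=1>0$, so the contrapositive of the implication in Definition \ref{def:eps-sgreedy} forbids $v_A(y,1)>v_A(x,1)+\epsilon$, giving $v_A(y,1)\leq v_A(x,1)+\epsilon$; for $y=x$ this inequality is trivial. Taking the maximum over all $y$ yields $\maxv(1)\leq v_A(x,1)+\epsilon$, i.e.\ $v_A(x,1)\geq \maxv(1)-\epsilon$.

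Finally I would combine the pieces: $U_A(A,1)=v_A(x,1)\geq \maxv(1)-\epsilon\geq (1-\delta)(\maxv(1)-\epsilon)$, where the last step uses $0\leq 1-\delta\leq 1$. The one point I would be careful about — and essentially the only subtlety of an otherwise direct transcription of the exact argument — is the boundary where $\maxv(1)-\epsilon$ happens to be negative: there the step $\maxv(1)-\epsilon\geq(1-\delta)(\maxv(1)-\epsilon)$ should instead be read off from the non-negativity of the utility $U_A(A,1)\geq 0$, since then $(1-\delta)(\maxv(1)-\epsilon)\leq 0\leq U_A(A,1)$. Everything else is simply the exact greediness equality relaxed to the $\epsilon$-inequality and the exact normalization relaxed via $\delta$-redundancy.
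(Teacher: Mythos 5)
Your proof is correct and follows essentially the same route as the paper: case-split on whether a box is filled at round~$1$, reduce the filled case to $A(x,1)=1$, and use $\epsilon$-sgreediness to get $v_A(x,1)\geq \maxv(1)-\epsilon$. Your extra remark about the boundary case $\maxv(1)-\epsilon<0$ is a small point of care the paper glosses over, but it changes nothing of substance.
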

\begin{proof}
If the condition of Claim \ref{clm:nothisand that} does not hold, it means that there is an $x$ where $A(x,1) = 1$, and so, since $A$ is $\epsilon$-sgreedy, $v_A(x,1) \geq v_A(y,1) - \epsilon$ for all other $y$'s. Therefore, in this case $U_A(A,1) = v_A(x,1) \geq \maxv(1) - \epsilon \geq (1-\delta)(\maxv(1) - \epsilon)$. 
\end{proof}
For the other rounds, 
we cannot prove the same, but similarly to Item 3 in the proof of Theorem \ref{thm:greeeeed},
looking at previous rounds, it is true that:
\begin{claim}\label{clm:decreasing}
For any $s<t$, $U_A(A,s) \geq (1-\delta)(\maxv(t) - \epsilon)$.
\end{claim}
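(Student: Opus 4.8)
The goal is to prove Claim~\ref{clm:decreasing}: for any $s<t$, $U_A(A,s) \geq (1-\delta)(\maxv(t) - \epsilon)$. The plan is to follow the structure of Item~3 in the proof of Theorem~\ref{thm:greeeeed}, inserting the approximate-greediness slack $\epsilon$ and the $\delta$-redundancy loss where the exact argument used equalities. Concretely, I would fix the round $s<t$ and show that for every box $x$, $U_A(A,s) \geq (1-\delta)(v_A(x,t) - \epsilon)$; taking the maximum over $x$ then yields the claim since $\maxv(t) = \max_x v_A(x,t)$.

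To bound $U_A(A,s)$ from below by $(1-\delta)(v_A(x,t)-\epsilon)$ for a fixed $x$, I would first dispose of the trivial case $v_A(x,t) \leq \epsilon$, where the right-hand side is nonpositive and the inequality is immediate since $U_A(A,s)\geq 0$. Otherwise $v_A(x,t) > \epsilon \geq 0$, which forces $\N{A}(x,t) > 0$ (since $v_A(x,t)>0$ requires $x$ to be unopened with positive probability at time $t$), and hence $\N{A}(x,s) > 0$ as well because $\N{A}$ is non-increasing in $t$ and $s<t$. The idea is then to use $\epsilon$-sgreediness at round $s$ to compare the value $v_A(x,s)$ against the values of the boxes actually in $A$'s support at round $s$.

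The key step is the following chain: since $\N{A}(x,s)>0$ and $A$ is $\epsilon$-sgreedy at time $s$, Definition~\ref{def:eps-sgreedy} gives that for every $y$ with $A(y,s)>0$ we have $v_A(y,s) \geq v_A(x,s) - \epsilon$ (otherwise $\N{A}(x,s)$ would be forced to $0$). Combining this with monotonicity in time (Lemma~\ref{lem:mono}, which gives $v_A(x,s) \geq v_A(x,t)$ because $s<t$), every box $y$ in the support of $A$ at round $s$ satisfies $v_A(y,s) \geq v_A(x,t) - \epsilon$. Since $U_A(A,s)=\sum_y A(y,s) v_A(y,s)$ is a convex-type combination of these values with total weight $\sum_y A(y,s) \geq 1-\delta$ by $\delta$-redundancy, I conclude
\[
U_A(A,s) \geq (1-\delta)\bigl(v_A(x,t) - \epsilon\bigr).
\]
Maximizing over $x$ finishes the proof.

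The step I expect to require the most care is the use of $\epsilon$-sgreediness: in the exact case one concludes $v_A(y,s)\geq v_A(x,s)$ directly, but here the approximate definition only yields the bound up to an additive $\epsilon$, and I must make sure this single $\epsilon$ does not compound across the two inequalities (the sgreediness comparison and the monotonicity comparison). Because monotonicity in time is exact, only one additive $\epsilon$ appears, matching the statement; verifying that $v_A(x,t)>\epsilon$ genuinely implies $\N{A}(x,s)>0$ is the other subtle point, since it is what licenses invoking $\epsilon$-sgreediness at round $s$ for the box $x$.
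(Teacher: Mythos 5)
Your proof follows essentially the same route as the paper's: fix $x$, use $\epsilon$-sgreediness at round $s$ to get $v_A(y,s)\geq v_A(x,s)-\epsilon$ for all $y$ in the support of column $s$, chain this with monotonicity $v_A(x,s)\geq v_A(x,t)$, and finish with $\delta$-redundancy. The final chain of inequalities is correct, and your explicit handling of the case $v_A(x,t)\leq\epsilon$ is if anything slightly more careful than the paper's (which only disposes of $v_A(x,t)=0$).

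One justification you give is wrong, though the conclusion you draw from it survives. You assert that $v_A(x,t)>0$ forces $\N{A}(x,t)>0$. This is false in general: if box $x$ is \emph{filled} at time $t$ (i.e., $A(x,t)=\N{A}(x,t-1)>0$ and $\N{A}(x,t)=0$) and $C(k)>0$, then $v_A(x,t)=f(x)\,C(k)\,A(x,t)^{k-1}>0$ while $\N{A}(x,t)=0$. What is actually true, and what the paper uses, is the disjunction: $v_A(x,t)>0$ implies $A(x,t)>0$ or $\N{A}(x,t)>0$, and \emph{either} alternative gives $\N{A}(x,t-1)=A(x,t)+\N{A}(x,t)>0$, hence $\N{A}(x,s)>0$ for every $s<t$ by monotonicity of $\N{A}$ in time. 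Since $\N{A}(x,s)>0$ is all you need to invoke $\epsilon$-sgreediness at round $s$, the rest of your argument goes through unchanged; you should just replace the faulty intermediate step with this disjunction.
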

\begin{proof}
We will show that for any $s<t$, and every $x$, $U_A(A,s) \geq (1-\delta)(v_A(x,t) - \epsilon)$.
Take some $x$. If $v_A(x,t) = 0$, then the claim  is clear. Otherwise, $A(x,t) > 0$ or $\N{A}(x,t) > 0$ or both. Either way, $\N{A}(x,s) > 0$. Therefore, as $A$ is $\epsilon$-sgreedy, for every $y$ such that $A(y,s)>0$, 
\[
v_A(y,s) \geq v_A(x,s) - \epsilon \geq v_A(x,t) - \epsilon,
\]  where the last inequality is because $v_A(x,\cdot)$ is non-increasing (Lemma \ref{lem:mono}). 
As $A$ is $\delta$-redundant, $U_A(A,s)$ is a sum of such $v_A(y,s)$'s with coefficients that sum to at least $1-\delta$. 
Hence $U_A(A,s) \geq (1 - \delta)(v_A(x,t) - \epsilon)$, as required.
\end{proof}
Lastly, we 
 prove a claim that
corresponds to Item 4 in the proof of Theorem \ref{thm:greeeeed}.
It implies 
that the utility of a filled box decreases considerably from round 1 to the round when it is filled.
\begin{claim}\label{clm:before}
Assume that $A$ fills box $x$ in time $t>1$. 
Then $v_A(x,t) \leq C(k) v_A(x,1)$.
\end{claim}
\begin{proof}
First,
\[
v_A(x,1) \geq f(x) C(1) \N{A}(x,1)^{k-1} =
f(x) \N{A}(x,1)^{k-1}
\]
On the other hand, since $\N{A}(x,t) = 0$,
\[
v_A(x,t) = f(x) C(k) A(x,t)^{k-1} \leq
f(x) C(k) \N{A}(x,1)^{k-1}
,\]
because $A(x,t) \leq \N{A}(x,t-1) \leq \N{A}(x,1)$.
Combining the above two inequalities gives the result.
\end{proof}
We proceed very similarly to the proof of Theorem \ref{thm:greeeeed}.
Denote by $X_\text{fill}$ the set of rounds for which there is at least one box $x$ that is filled by $A$. Let $X_\text{no-fill}$ be the rest of the rounds, except for $t=1$ which is in neither.
\[
U_A(B) 
\leq \sum_t \maxv(t)
=
\sum_{t \in X_\text{no-fill} \cup \set{1}} \maxv(t)
+
\sum_{t \in X_\text{fill}} \maxv(t).
\]
We want to show that this is not much larger than $U_A(A) = \sum_t U_A(A, t)$.
By Claims \ref{clm:nothisand that}, \ref{clm:firstRound} above, for each $t \in X_{\text{no-fill}} \cup \set{1}$, 
\[
U_A(A,t) \geq (1-\delta)(\maxv(t) - \epsilon).
\]
Therefore, the first sum satisfies:
\[
\sum_{t \in X_\text{no-fill} \cup \set{1}} \maxv(t)
\leq
\sum_{t \in X_\text{no-fill} \cup \set{1}}\B{ \F{U_A(A,t)}{1-\delta}+ \epsilon} 
.\]
Regarding the second sum, denote $X_\text{fill} = \set{t_0, t_1, \ldots, t_k}$ in order. 
By Claim~\ref{clm:decreasing}, for every $i \geq 1$, 
\[
(1-\delta)(\maxv(t_i) - \epsilon) \leq  U_A(A, t_{i-1})
.\]
Therefore,
\[
\sum_{t \in X_\text{fill}} \maxv(t)
\leq
\maxv(t_0) + \sum_{t \in X_\text{fill} \setminus \set{t_k}} U_A(A,t) 
\leq
\maxv(t_0) + \sum_{t \in X_\text{fill}} \B{ \F{U_A(A,t)}{1-\delta}+ \epsilon} 
.\]
To sum up,
\[
U_A(B) 
\leq 
\sum_t \B{ \F{U_A(A,t)}{1-\delta}+ \epsilon}  + \maxv(t_0)
=
\F{U_A(A)}{1-\delta} + \epsilon T + \maxv(t_0)
.\]
By Claims \ref{clm:before} and \ref{clm:firstRound},
\begin{align*}
\maxv(t_0) 
& = \max_x v_A(x,t_0) 
\leq \max_x C(k) v_A(x,1) 
\\ & \leq 
C(k) \B{\F{U_A(A,1)}{1-\delta} + \epsilon}
\leq
C(k) \B{\F{U_A(A)}{1-\delta}} + \epsilon
.\end{align*}
Therefore,
\[
U_A(B) \leq \F{1+C(k)}{1-\delta}U_A(A) + (T+1)\epsilon
.\qedhere\]
\end{proof}

\subsubsection{Polynomial Construction of Approximate Sgreedy and Non-redundancy Strategy}\label{sec:lemGreedyCreate}
\newcommand{\Val}{\mathtt{val}}

\lemGreedyCreate*

This proof is a constructive version of the proof of Lemma \ref{lem:existence}, and so they bear many similarities. 
We construct a matrix $A$ that is as required, calculating its values round after round. 
Intuitively, for time $t$, we examine a candidate value $w$ and construct the $t$-th column of $A$ which satisfies approximately $v_A(t)=w$, where we recall that $v_A(t)
= \min_x\stset{v_A(x,t)}{A(x,t) > 0} 
$. 

The boundary cases are easy. Specifically, if by setting $A(x,t) = 0$ we get $v_A(x,t) \leq w$ then we keep $A(x,t) = 0$. This is because adding more probability to $x$ would only reduce  the value $v_A(x,t)$, and we cannot have a box in the support that is below $v_A(t)$. Similarly, if by setting $A(x,t)=\N{A}(x,t-1)$ we have $v_A(x,t) \geq w$, then we keep 
$A(x,t)=\N{A}(x,t-1)$, which means that the box is filled. Otherwise, we must find a value for $A(x,t)$ that would give (approximately) $v_A(x,t)=w$.
Since $v_A(x,t)$ is monotone as a function of $A(x,t)$, and using a lemma which bounds its derivative from above and below, then using binary search, we can find values of $A(x,t)$ which give a $v_A(x,t)$ that is close enough to $w$. 

Lastly, the sum of these $A(x,t)$'s over the $x$'s needs to be between $1-\delta$ and $1$ for $A$ to be $\delta$-redundant.
We achieve this by a binary search on the value of $w$.

\begin{proof}
We will say a quantity is polynomial if it is as stated in the lemma.
First, if $T>M$, we set $A(\cdot,t) = 0$ for any $t>M$. This trivially satisfies both $\delta$-redundancy and $\epsilon$-sgreediness for these rounds. 
For $t\leq M$,
we show how to calculate $A(x,t)$'s assuming all of $A(x,s)$'s for $s<t$ are already calculated.

Consider $v_A(x,t)$ as a function of $A(x,t)$, with $\N{A}(x,t-1) = 1 - \sum_{s<t} A(x,s)$ fixed. As we know it is a continuous strictly decreasing function, and is defined between $0$ and $\N{A}(x,t-1)$. Denote by $A_w(x,t)$ the extended inverse. That is, given $w$, it is the $A(x,t)$ such that setting it gives $v_A(x,t) = w$ if such $A(x,t)$ exists. If this is not possible, it means either $w$ is too large, and so $A_w(x,t) = 0$, or too small, and then $A_w(x,t) = \N{A}(x,t-1)$. 

The idea is to do a binary search for a good enough $v_A(t) = \min_x \stset{v_A(x,t)}{A(x,t) > 0}$.  Denote our current guess as $w$. We need some procedure to say whether $w$ is too large, too small, or sufficient as a guess for $v_A(t)$.
For this purpose, we find $A(x,t)$'s that approximate the $A_w(x,t)$'s. That is,

\begin{enumerate}
\item If assigning $A(x,t) = 0$ gives $v_A(x,t) \leq w$, we set $A(x,t) = 0 = A_w(x,t)$. 
\item If assigning $A(x,t) = \N{A}(x,t)$ gives $v_A(x,t) \geq w$, we set $A(x,t) = \N{A}(x,t) = A_w(x,t)$.
\item 
Otherwise, we use binary search to find $A(x,t)$ such $|A(x,t) - A_w(x,t)| < \delta/4M$.
As $v_A(x,t)$ as a function of $A(x,t)$ is strictly monotone and continuous this simply involves running a binary search, comparing $v_A(x,t)$ to $w$, until the size of the interval of $A(x,t)$'s we consider is less than $\delta/4M$. This can be done polynomially. 

In fact, we run the binary search even more, so as to be able to guarantee that $|v_A(x,t) - w| < \epsilon/2$. 
In the terminology of Lemma \ref{lem:bounds} (proven below),
$v_A(x,t) =  f(x) \phi(A(x,t))$, and so, if the end points of our current interval are $A_1(x,t)$ and $A_2(x,t)$, then: 
\[
|f(x) \phi(A_1(x,t)) - f(x) \phi(A_2(x,t))| \leq f(1)4^{k-1}|A_1(x,t)-A_2(x,t)|.
\]
Thus, taking $O(k+\log(1/\epsilon))$ steps of the binary search can make this at most $\epsilon/2$. As $w$ is between the values $v_A(x,t)$'s we get for $A_1(x,t)$ and $A_2(x,t)$, we can conclude our search.
\end{enumerate}

Surely, this gives an $\epsilon$-sgreedy matrix at time $t$ (for now it is not doubly-substochastic, but this will be fixed soon).   
Now, consider the sum of the $A(x,t)$'s we got:\\
\begin{enumerate}
\item If it is between $1-\delta$ and $1$ we are done. 
\item If it is greater than $1$ then we consider $w$ as too small, and continue with the next step of binary search. 
\item If it is smaller than $1-\delta$, then we say $w$ is too large and continue.
\end{enumerate}

 If the process concludes, then the $A$ we get is doubly-substochastic by the fact that always $A(x,t) \in [0, \N{A}(x,t)]$, $\sum_x A(x,t) \leq 1$, and for $t>M$, $A(x,t) = 0$. As this $A$ is both $\epsilon$-sgreedy and $\delta$-redundant, we are done. 

Next, we claim that the process concludes and analyze its time complexity. 
Taking $w=0$ will set each $A(x,t)$ to $\N{A}(x,t-1)$, and taking $w$ to be larger than $f(1)$ will set all of them to be $0$. Therefore $\sum_x A_w(x,t)$ can range between 0 and at least $\sum_x \N{A}(x,t-1) \geq 1$ (as $t\leq M$), and by continuity, there is some $w^\star$ such that this sum is exactly $1-\delta/2$. 

Always, $|\sum_x A(x,t) - \sum_x A_w(x,t)| < \delta/4$. This in particular means that 
for any $w$ such that $\sum_x A_w(x,t) \in [1 - 3\delta/4, 1-\delta/4]$ we are guaranteed to stop. Also, it means that if $w>w^\star$ then it will never be considered small, and if $w<w^\star$ it will never be considered large. Thus our binary search is valid and is guaranteed to stop. 

The time it will take to stop is the time until the interval between the $w$'s it considers guarantees that the difference between $\sum_x A_w(x,t)$'s is smaller than $\delta/4$. Say the current interval is $[w_1, w_2]$.
In the terminology 
of Lemma \ref{lem:bounds},
$p = A_{w_2}(x,t)$,   $p +\epsilon = A_{w_1}(x,t)$,
$w_1 = f(x)\phi(p+\epsilon)$, and $w_2 = f(x)\phi(p)$. Therefore, according the the l.h.s.\ of the lemma, 
\[
A_{w_1}(x,t) - A_{w_2}(x,t) = \epsilon \leq 
\BF{(k-1)(w_2 - w_1)}{f(x)(1 - C(k))}^{1/k-1},
\]
which is at most $\delta/4M$ if
\[
w_2 - w_1 \leq \F{f(x)(1-C(k))}{k-1} \BF{\delta}{4M}^{k-1}.
\]
This can be guaranteed with $O( \log(1/f(M)) + \log(1/(1-C(k))) + k\log(1/\delta) + k\log(M))$ binary search steps, as required. \end{proof}

\subsubsection{Upper and Lower Bounds on the Value}

The following Lemma is actually a generalization of the monotonicity Lemma. The $\phi$ defined within is  $v_A(x,t) / f(x)$, where $q$ is $\N{A}(x,t-1)$, and the lemma shows upper and lower bounds on the resulting change of this value, when $A(x,t)$ is changed from $p$ to $p+\epsilon$. 
\begin{lemma} \label{lem:bounds}
Let
\[
\phi(p) = \sum_{i=0}^{k-1} C(i+1) \binom{k-1}{i} p^i (q-p)^{k-1-i}
,\]
where $q \in (0,1]$ and $p \in [0,q]$. For $\epsilon > 0$, where $p+\epsilon \leq q$,
\[
\F{1 - C(k)}{k-1}\epsilon^{k-1} \leq \phi(p) - \phi(p+\epsilon) \leq 4^{k-1} \epsilon
.\]
\end{lemma}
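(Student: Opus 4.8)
The plan is to regard $\phi$ as a polynomial in $p$ with $q$ held fixed, and to control the increment $\phi(p)-\phi(p+\epsilon)$ through the derivative $\phi'$. Differentiating $\phi(p)=\sum_{i=0}^{k-1}C(i+1)\binom{k-1}{i}p^i(q-p)^{k-1-i}$ and using the identities $\binom{k-1}{i}i=(k-1)\binom{k-2}{i-1}$ and $\binom{k-1}{i}(k-1-i)=(k-1)\binom{k-2}{i}$, the resulting sum telescopes exactly as in the claim inside the proof of Lemma~\ref{lem:alg-mono}, and I expect the closed form
\[
-\phi'(s)=(k-1)\sum_{j=0}^{k-2}\bigl(C(j+1)-C(j+2)\bigr)\binom{k-2}{j}\,s^j(q-s)^{k-2-j}.
\]
Since $C$ is non-increasing, every coefficient $C(j+1)-C(j+2)$ is nonnegative, and $\sum_{j=0}^{k-2}\bigl(C(j+1)-C(j+2)\bigr)=C(1)-C(k)=1-C(k)$ by telescoping. (I take $k\geq 2$; for $k=1$ the function $\phi$ is constant and the lower bound is undefined.) Both inequalities then follow by integrating, since $\phi(p)-\phi(p+\epsilon)=\int_p^{p+\epsilon}\!\bigl(-\phi'(s)\bigr)\,ds$ with a nonnegative integrand.

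For the upper bound I would bound $-\phi'$ pointwise. Each difference obeys $C(j+1)-C(j+2)\leq C(j+1)\leq C(1)=1$, while the Bernstein-type terms sum by the binomial theorem to $\sum_{j=0}^{k-2}\binom{k-2}{j}s^j(q-s)^{k-2-j}=q^{k-2}\leq 1$ because $q\leq 1$. Hence $-\phi'(s)\leq(k-1)q^{k-2}\leq k-1$, and integrating over an interval of length $\epsilon$ gives $\phi(p)-\phi(p+\epsilon)\leq(k-1)\epsilon\leq 4^{k-1}\epsilon$, using $k-1\leq 4^{k-1}$. This is even a little stronger than the stated bound, which is fine.

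The lower bound is the main obstacle, because the integrand $s^j(q-s)^{k-2-j}$ can be uniformly tiny on $[p,p+\epsilon]$ when that interval lies near $0$ or near $q$, so it cannot simply be replaced by an endpoint value. The device I would use is the shift $s=p+t$ with $t\in[0,\epsilon]$: then $(p+t)^j\geq t^j$ since $p\geq 0$, and writing $q-p-t=(q-p-\epsilon)+(\epsilon-t)\geq\epsilon-t\geq 0$ gives $(q-p-t)^{k-2-j}\geq(\epsilon-t)^{k-2-j}$. This lowers each term's integral to a Beta integral free of $p$ and $q$,
\[
\int_p^{p+\epsilon}\!s^j(q-s)^{k-2-j}\,ds\ \geq\ \int_0^{\epsilon}\!t^j(\epsilon-t)^{k-2-j}\,dt=\epsilon^{k-1}\,\frac{j!\,(k-2-j)!}{(k-1)!},
\]
and multiplying by $(k-1)\binom{k-2}{j}$ cancels the factorials, leaving a contribution of at least $\epsilon^{k-1}$ from each index $j$. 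Weighting by the nonnegative coefficients and telescoping their sum yields $\phi(p)-\phi(p+\epsilon)\geq(1-C(k))\,\epsilon^{k-1}\geq\frac{1-C(k)}{k-1}\epsilon^{k-1}$, as required. The only care needed is the clean verification of the derivative identity and of the Beta-integral evaluation.
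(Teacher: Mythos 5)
Your proof is correct, and it takes a genuinely different route from the paper's. The paper never differentiates: for the upper bound it drops the $C(i+1)$'s and bounds the difference of products term by term, and for the lower bound it rewrites $\phi$ via Abel summation as a combination of the partial sums $B_i(p)=\sum_{j\le i}\binom{k-1}{j}p^j(q-p)^{k-1-j}$, isolates a single index $j$ whose coefficient is at least $(1-C(k))/(k-1)$ by pigeonhole, and then proves a separate inductive claim that the binomial tail probabilities satisfy $X^n_i(p+\epsilon)-X^n_i(p)\ge\epsilon^n$. You instead compute $-\phi'$ in closed form as a Bernstein-basis combination with the telescoping nonnegative weights $C(j+1)-C(j+2)$ (I checked the identity; it is right, including the boundary index $i=0$), and then integrate: the pointwise bound $-\phi'(s)\le(k-1)q^{k-2}\le k-1$ gives the upper bound, and the shift $s=p+t$ with $(p+t)^j\ge t^j$ and $(q-p-t)^{k-2-j}\ge(\epsilon-t)^{k-2-j}$ reduces the lower bound to a Beta integral whose factorials exactly cancel $\binom{k-2}{j}$. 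This buys you strictly sharper constants on both sides --- $(k-1)\epsilon$ in place of $4^{k-1}\epsilon$, and $(1-C(k))\epsilon^{k-1}$ in place of $(1-C(k))\epsilon^{k-1}/(k-1)$, since you keep the full telescoped mass $1-C(k)$ rather than one pigeonholed share of it --- and it dispenses with the paper's auxiliary induction (Claim \ref{clm:binomLower}) altogether. The paper's argument is purely discrete, which is its only comparative advantage. Your parenthetical restriction to $k\ge2$ is consistent with the paper's setting, and your one self-identified point of care (the derivative identity and the Beta evaluation) both check out.
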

\begin{proof}
Let us first prove the upper bound. Dropping the $C(i+1)$'s as they are at most $1$, and considering the sum as a sum of $2^{k-1}$ terms:
\begin{align*}
\phi(p) - \phi(p+\epsilon)
& \leq 
2^{k-1} \max_{i=0}^{k-1} \left|
(p+\epsilon)^i (q-p-\epsilon)^{k-1-i} - p^i (q-p)^{k-1-i} 
\right|
\\ & \leq
2^{k-1} \max_{i=0}^{k-1} \B{ 
(p+\epsilon)^i (q-p)^{k-1-i}
-
p^i (q-p-\epsilon)^{k-1-i}
}.
\end{align*}
Denoting $a=p$ and $b=q-p-\epsilon$, the term we are maximizing is:
\[
(a+\epsilon)^i (b+\epsilon)^{k-1-i}
-
a^i b^{k-1-i}
\leq
2^{k-1} \epsilon
,\]
where the last inequality is because opening the left term to its $2^{k-1}$ terms, each one is a multiplication of some powers of $a$, $b$ and $\epsilon$, all of them are at most 1. The only one that does not contain $\epsilon$ is canceled out by the right term.
This establishes the upper bound part. 

Next, we prove the lower bound. Denote
\[
B_i(p) 
:= \sum_{j=0}^i \binom{k-1}{j} p^j (q-p)^{k-1-j} 
= q^{k-1} \sum_{j=0}^i \binom{k-1}{j} \BF{p}{q}^j \B{1 - \F{p}{q}}^{k-1-j}.
\]
Then,
\[
\phi(p) = C(k) B_{k-1}(p) + (C(k-1) - C(k)) B_{k-2}(p) + \cdots + (C(1) - C(2)) B_0(p) 
.\]
All of the $B_i$ are non-increasing in $p$, as it is $q^{k-1}$ times the probability that at most $i$ of $k-1$ Bernoulli random variables each of probability $p/q$ are $1$. Also, all of the $C(i-1) - C(i)$ are non-negative, and at least one of them is strictly positive, and is at least $(C(1) - C(k))/(k-1)$. Therefore,
\begin{equation}\label{eq:Bi}
\phi(p) - \phi(p+\epsilon) 
\geq \F{1 - C(k)}{k-1} (B_j(p) - B_j(p+\epsilon)),
\end{equation}
where $j \leq k-2$.
For any integers $i\leq n$, denote by $X^n_i(p)$ the probability that at least $i$ of $n$ i.i.d.\ Bernoulli random variables of probability $p$ are $1$.  Since $B_j(p)$ can be seen as $q^{k-1}(1 - X^{k-1}_{j+1}(p/q))$,  
 Claim \ref{clm:binomLower} (see below) implies that the r.h.s.\ of Eq~\eqref{eq:Bi}
 is at least:
\[
\F{1 - C(k)}{k-1} q^{k-1} \B{\F{p+\epsilon}{q} - \F{p}{q}}^{k-1}
=
\F{1 - C(k)}{k-1} \epsilon^{k-1},\]
as required. 
\end{proof}

\begin{claim} \label{clm:binomLower}
If $i \geq 1$, and $p$ and $p+\epsilon$ are in $[0,1]$, then
\[
X^n_i(p+\epsilon) - X^n_i(p) \geq \epsilon^n.
\]
\end{claim}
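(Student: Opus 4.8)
The plan is to prove the bound by a monotone coupling argument, which makes the exponent $n$ appear naturally. Introduce i.i.d.\ uniform random variables $U_1,\dots,U_n$ on $[0,1]$, and for any threshold $r\in[0,1]$ declare the $j$-th trial a success when $U_j\le r$. Since each indicator $\mathbf{1}[U_j\le r]$ has mean $r$ and the $U_j$ are independent, the number of successes at threshold $r$ is distributed as a sum of $n$ i.i.d.\ Bernoulli$(r)$ variables, so $X^n_i(r)$ equals the probability that at least $i$ of the $U_j$ satisfy $U_j\le r$. The point of realizing both $X^n_i(p)$ and $X^n_i(p+\epsilon)$ on the \emph{same} probability space is that raising the threshold from $p$ to $p+\epsilon$ can only turn non-successes into successes, never the reverse.

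Consequently, under this coupling the event $\set{\text{at least } i \text{ successes at threshold } p}$ is contained in $\set{\text{at least } i \text{ successes at threshold } p+\epsilon}$, and therefore the difference of the two probabilities equals the probability of the ``gain'' event
\[
G=\set{\text{at least } i \text{ successes at } p+\epsilon}\cap\set{\text{fewer than } i \text{ successes at } p}.
\]
To finish, I would lower-bound $\prob{G}$ by restricting to the single concrete event $E=\set{U_j\in(p,p+\epsilon]\text{ for all } j}$. On $E$ no variable is $\le p$, so the success count at threshold $p$ is $0$, which is below $i$ because $i\ge 1$; meanwhile every variable is $\le p+\epsilon$, so the success count at threshold $p+\epsilon$ is $n\ge i$. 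Hence $E\subseteq G$, and by independence $\prob{E}=\epsilon^n$, giving $X^n_i(p+\epsilon)-X^n_i(p)=\prob{G}\ge \prob{E}=\epsilon^n$.

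There is no serious obstacle here once the coupling is in place; the only genuine step is spotting that the ``all $n$ uniforms land in the width-$\epsilon$ window'' event already lies inside the gain event, and this is precisely where the hypotheses $i\ge 1$ (so $0<i$) and $i\le n$ (so $n\ge i$, implicit in the definition of $X^n_i$) are used. It is worth noting that the bound is tight: for $i=n$ one has $X^n_n(r)=r^n$ and the inequality reduces to $(p+\epsilon)^n-p^n\ge\epsilon^n$, with equality at $p=0$. This tightness is reassuring evidence that isolating $E$ is essentially the right move and that no sharper elementary estimate is needed for the application in Lemma \ref{lem:bounds}.
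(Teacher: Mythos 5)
Your proof is correct, but it takes a genuinely different route from the paper. The paper proves Claim \ref{clm:binomLower} by a double induction on $i+n$, with base cases $i=1$ and $i=n$ handled by explicit algebra, and an inductive step driven by the recursion $X^n_i(p) = p\,X^{n-1}_{i-1}(p) + (1-p)\,X^{n-1}_i(p)$, where the cross term $\epsilon\bigl(X^{n-1}_{i-1}(p+\epsilon) - X^{n-1}_i(p+\epsilon)\bigr)$ is discarded as non-negative. You instead use a monotone coupling: realizing both binomials via the same uniforms $U_1,\dots,U_n$ with thresholds $p$ and $p+\epsilon$, observing that the increment $X^n_i(p+\epsilon)-X^n_i(p)$ is exactly the probability of the gain event, and lower-bounding that by the event $\set{U_j\in(p,p+\epsilon]\text{ for all }j}$ of probability $\epsilon^n$, which uses precisely $i\ge 1$ and $i\le n$. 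Both arguments are complete; yours is shorter, gives a transparent probabilistic reason for the exponent $n$, and makes tightness at $(i,p)=(n,0)$ visible at a glance, while the paper's induction is more mechanical but stays entirely within elementary algebraic manipulations of the binomial sums, matching the computational style of the surrounding Lemma \ref{lem:bounds}. Either proof suffices for the application.
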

\begin{proof} 
We prove it by a double  induction on $i+n$. One base case if when $i=1$ and the second is when $i=n$. 
First, regarding the former base case:
\begin{align*}
X^n_1(p+\epsilon) - X^n_1(p) 
& =
\B{1 - (1-p-\epsilon)^n} - \B{1 - (1-p)^n}
\\ & =
(1-p)^n - (1-p-\epsilon)^n 
=
(a+\epsilon)^n - a^n,
\end{align*}
where $a=1-p-\epsilon \geq 0$. This last expression is easily seen to be at least $\epsilon^n$.
Now, for the latter base case:
\[
X^n_n(p+\epsilon) - X^n_n(p) 
=
(p+\epsilon)^n - p^n 
\geq
\epsilon^n.
\]
For the induction step, first note that if $n > i$,
\[
X^n_i(p) = p X^{n-1}_{i-1}(p) + (1-p) X^{n-1}_i(p).
\]
Therefore, in this case,
\begin{align*}
X^n_i(p+\epsilon) - X^n_i(p) 
& = 
p \B{X^{n-1}_{i-1}(p+\epsilon) - X^{n-1}_{i-1}(p)} \\ & \quad +
(1-p) \B{X^{n-1}_i(p+\epsilon) - X^{n-1}_i(p)} 
\\ & \quad +
\epsilon \B{X^{n-1}_{i-1}(p+\epsilon) - X^{n-1}_i(p+\epsilon)}
\\ & \geq 
p \epsilon^{n-1} + (1-p) \epsilon^{n-1} + \epsilon \cdot 0 =
\epsilon^{n-1} \geq \epsilon^n,
\end{align*}
 where for the first two terms we used the induction hypothesis, and the last term is non-negative by the definition of $X$.
\end{proof}

\section{The Exclusive Policy}\label{sec:exclusive}

\begin{lemma}
\label{lem:astargreedy}
Under the exclusive policy, $\Astar$ restricted to $T\leq M$ rounds is sgreedy and non-redundant.
\end{lemma}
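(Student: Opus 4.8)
The plan is to verify the two asserted properties separately, using the explicit description of the matrix of $\Astar$ from Section \ref{sec:Astar}, namely $\N{\Astar}(x,t) = \min(1, \alpha(t) q(x))$ with $q(x) = f(x)^{-1/(k-1)}$ and $\sum_x \N{\Astar}(x,t) = M - t$. Non-redundancy is the easy half: since $\Astar(x,t) = \N{\Astar}(x,t-1) - \N{\Astar}(x,t)$ and the column sums of $\N{\Astar}$ telescope, I get $\sum_x \Astar(x,t) = (M - (t-1)) - (M-t) = 1$ for every $1 \le t \le M$ (using $\N{\Astar}(x,0) = 1$), which is exactly the definition of non-redundancy for $t \le M$.

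For sgreediness I would invoke the characterization of Lemma \ref{lem:greedydef}: it suffices to show that every box $x$ in the support of $\Astar$ at time $t$ (i.e.\ $\Astar(x,t) > 0$) attains the global maximum value $\maxv(t) = \max_y v_{\Astar}(y,t)$, since then the hypothesis $v_{\Astar}(x,t) < v_{\Astar}(y,t)$ of that lemma is vacuous for support boxes and greediness follows. The crux is therefore to compute $v_{\Astar}(x,t) = f(x)\N{\Astar}(x,t)^{k-1}$ (the value under $\Cex$ in a symmetric profile) and read off its structure. Here the choice $q(x) = f(x)^{-1/(k-1)}$ does the work: for every box that is \emph{not capped}, i.e.\ $\N{\Astar}(x,t) = \alpha(t)q(x) < 1$, I obtain $v_{\Astar}(x,t) = f(x)\,\alpha(t)^{k-1} q(x)^{k-1} = \alpha(t)^{k-1}$, a quantity independent of $x$. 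For a \emph{capped} box, $\N{\Astar}(x,t) = 1$ and $v_{\Astar}(x,t) = f(x)$; but capping means $\alpha(t)q(x) \ge 1$, equivalently $\alpha(t)^{k-1} \ge f(x)$, so such a box has value $f(x) \le \alpha(t)^{k-1}$. Hence $\maxv(t) = \alpha(t)^{k-1}$, attained precisely on the non-capped boxes.

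It then remains to observe that the support of $\Astar$ at time $t$ contains only non-capped boxes. Indeed, $\N{\Astar}$ is non-increasing in $t$ (because $\alpha(t)$ is non-increasing, which follows from $\sum_x \N{\Astar}(x,t) = M - t$ being decreasing), so a box capped at time $t$ has $\N{\Astar}(x,t) = \N{\Astar}(x,t-1) = 1$ and thus $\Astar(x,t) = 0$. Consequently every $x$ with $\Astar(x,t) > 0$ satisfies $v_{\Astar}(x,t) = \alpha(t)^{k-1} = \maxv(t)$, so the condition of Lemma \ref{lem:greedydef} holds vacuously at each $t \le M$, giving sgreediness.

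I expect the only real obstacle to be the bookkeeping around the capped/non-capped dichotomy — in particular confirming $\maxv(t) = \alpha(t)^{k-1}$ and that capping forces a box out of the support — whereas the flatness identity $v_{\Astar}(x,t) = \alpha(t)^{k-1}$ on non-capped boxes is the single clean computation that makes the whole argument go through. The degenerate case $t = M$ (where $\alpha(M) = 0$ and all values vanish) is harmless and should merely be noted in passing.
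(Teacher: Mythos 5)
Your proof is correct and follows essentially the same route as the paper's: the telescoping column sums give non-redundancy, and the identity $v_{\Astar}(x,t)=\min(f(x),\alpha(t)^{k-1})$ — flat on non-capped boxes, dominated by $\alpha(t)^{k-1}$ on capped ones — combined with the observation that capped boxes leave the support yields sgreediness via Lemma \ref{lem:greedydef}. The only cosmetic difference is that you phrase the support/cap relation in the contrapositive direction from the paper, which argues directly that $\Astar(x,t)>0$ forces $\N{\Astar}(x,t)<1$.
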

\begin{proof}
See Section \ref{sec:model} for a description of $\Astar$. To see it is non-redundant:
\[
\sum_x \Astar(x,t) = \sum_x \N{\Astar}(x,t-1) - \sum_x \N{\Astar}(x,t) = (M-t-1) - (M-t) = 1
.\]
To see that it is sgreedy, let us fix $t$. The value of a box $x$ is 
\[
v_{\Astar}(x,t) = f(x) \N\Astar(x,t)^{k-1} = f(x) \min\B{1, \alpha(t)q(x)}^{k-1}
=
\min(f(x), \alpha(t)^{k-1}).
\]
If $\Astar(x,t) > 0$ then $\N\Astar(x,t) < 1$,
which means that $v_{\Astar}(x,t) = \alpha(t)^{k-1}$, and so is constant for all such $x$. 
If $\Astar(x,t) = 0$ then $\N\Astar(x,t) = 1$, and so $\alpha(t) \geq 1/q(x) = f(x)^{1/(k-1)}$. Then $v_{\Astar}(x,t) = f(x) \leq \alpha(t)^{k-1}$, as required from a sgreedy  strategy. 
\end{proof}

Hence, according to Theorem \ref{thm:greeeeed}, 
\thmAstar*
According to Claim \ref{lem:greedyCk}  all symmetric equilibria under the exclusive policy are equivalent, and thus equivalent to $A^\star$. Hence, the optimality of $A^\star$ (w.r.t.~symmetric profiles) implies that both the PoSA and PoSS of the exclusive policy are optimal.
That is, for every $f$, $T$, and policy $C$, 
\[
\PoSA(\Cex,f,T)=\PoSS(\Cex,f,T)\leq \PoSS(C,f,T)
.\]

Our next goal is to establish  the PoA of the exclusive policy. For this purpose, we first prove that the success probability of any  equilibrium is at least as large as that of any symmetric profile. 
Since $\Astar$ is a symmetric equilibrium, its optimality among symmetric profiles follows. Hence, the proof provides as alternative proof to the one in \cite{JACM}.

\subsection{Symmetric Equilibria are the Worst}

\thmNashismore*
\begin{proof}
Let $A$ be a  strategy and $\Pnash$ be a profile at equilibrium with respect to  $\Cex$. 
If $\success(\Pnash)=1$, then the inequality is trivial. According to Lemma \ref{lem:non-redundant-0}, we can therefore assume that all players of $\Pnash$ are non-redundant and that $T \leq M$. 
Denote the probability of visiting $x$ in profile $\prof$ by \[\success(\prof, x) = 1 - \prod_{B \in \prof} \N{B}(x,T).\]

We say that box $x$ is \emph{high} with respect to a profile $\prof$ 
if $\success(\prof, x) > \success(A, x)$,
\emph{low} if $\success(\prof, x) < \success(A, x)$,
and \emph{saturated} if they are equal.
The next claim uses the fact that $A$ is symmetric. 

\begin{claim}
\label{claim:nolow}
If a profile $\prof$ is non-redundant and  contains no high boxes, then all boxes are saturated.
\end{claim}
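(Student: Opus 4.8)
The plan is to exploit a conservation law on the total ``coverage'' of the two profiles together with the arithmetic--geometric mean inequality, forcing the pointwise inequalities that come from ``no high boxes'' to all be tight. First I would fix a box $x$ and write $s_i = \N{B_i}(x,T)$ for the players $B_i \in \prof$, and $a = \N{A}(x,T)$. Here is where symmetry of $A$ enters: all $k$ copies contribute the same factor, so $\success(A,x) = 1 - a^k$, whereas $\success(\prof,x) = 1 - \prod_i s_i$. The hypothesis that there are no high boxes means $\success(\prof,x) \le \success(A,x)$ for every $x$, i.e.\ $\prod_i s_i \ge a^k$. Applying AM--GM gives $\frac1k \sum_i s_i \ge (\prod_i s_i)^{1/k} \ge a$, hence the pointwise lower bound $\sum_i s_i \ge k a$, one inequality per box.

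Next I would set up the matching upper bound by double counting. Since each $B_i$ is non-redundant and (using $T \le M$, as established just before the claim) every one of its first $T$ columns sums to $1$, we have $\sum_x (1 - s_i) = \sum_{t \le T}\sum_x B_i(x,t) = T$, so $\sum_x s_i = M - T$; summing over the $k$ players yields $\sum_x \sum_i s_i = k(M-T)$. For $A$, doubly-substochasticity means the total mass in its first $T$ columns is at most $T$, so $\sum_x (1-a) \le T$, i.e.\ $\sum_x a \ge M - T$. Combining the two, $\sum_x \sum_i s_i = k(M-T) \le \sum_x k a$, that is $\sum_x \bigl(\sum_i s_i - k a\bigr) \le 0$.

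The crux is the pinch between these two estimates: pointwise each term $\sum_i s_i - k a$ is nonnegative, yet their total over $x$ is at most zero. A sum of nonnegative quantities bounded above by zero must vanish term by term, so $\sum_i s_i = k a$ for every box $x$. Feeding this back into the AM--GM chain, $a = \frac1k \sum_i s_i \ge (\prod_i s_i)^{1/k} \ge a$ is squeezed, forcing $(\prod_i s_i)^{1/k} = a$, hence $\prod_i s_i = a^k$ and therefore $\success(\prof,x) = \success(A,x)$. Thus every box is saturated, as claimed.

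The only genuine obstacle is recognizing that ``no high boxes'' supplies the lower bound while the coverage conservation supplies a matching upper bound on the \emph{same} quantity $\sum_x\sum_i s_i$, so the two squeeze every term to equality; once this is noticed, the AM--GM equality analysis is routine, and it works precisely because the symmetry of $A$ places the comparison profile $A^k$ at the balanced (equality) configuration of AM--GM. I would also remark that the non-redundancy of $\prof$ and $T \le M$ are exactly what the coverage count needs, and that were $A$ itself redundant the same computation would make the sum strictly negative, contradicting nonnegativity---so the hypotheses can be met only when $A$ is non-redundant too, consistent with the statement.
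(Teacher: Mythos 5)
Your proof is correct and follows essentially the same route as the paper's: the pointwise AM--GM bound $\sum_i \N{B_i}(x,T) \geq k\,\N{A}(x,T)$ from ``no high boxes,'' pinched against the matching global count supplied by non-redundancy of $\prof$ and substochasticity of $A$, forcing equality (hence saturation) at every box. The only cosmetic difference is that you phrase the counting step via $\sum_x \N{B_i}(x,T) = M-T$ while the paper sums the matrix entries directly, and you run the AM--GM equality analysis forward where the paper argues by strictness at a hypothetical low box; these are the same argument.
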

\begin{proof}
If $x$ is not high, then
\begin{equation}\label{eq:AmGm}
\N{A}(x,T)^k
\leq
\prod_{B\in\prof} \N{B}(x,T)
\leq
\B{\R{k}\sum_{B\in\prof} \N{B}(x,T)}^k,
\end{equation}
which means that
\[
\sum_{B \in\prof} \N{B}(x,T) \geq k \N{A}(x,T).
\]
This is the same as
\[
\sum_{B,t} B(x,t) \leq k \sum_t A(x,t).
\]
As there are no high boxes, summing over all $x$'s:
\begin{equation} \label{eq:lastIneq}
\sum_{B,x,t} B(x,t) \leq k \sum_{x,t} A(x,t)
.\end{equation}
As all players in $\prof$ are non-redundant, 
$\sum_x B(x,t) = 1 \geq \sum_x A(x,t)$
for every $t$ and every player $B$. Hence, Eq.~\eqref{eq:lastIneq} is actually an equality. 
On the other hand, Eq.~\eqref{eq:AmGm} is a strict inequality if box $x$ is low and not saturated. Therefore, if even one box is low, we get that Eq.~\eqref{eq:lastIneq} is strict as well, in contradiction. 
\end{proof}

We proceed to prove a weak greediness property for equilibria. Denote a box $x$ {\em full} for player $B$ if $\sum_t B(x,t) = 1$. 
Also, for readability of what follows, when $\prof$ is clear from the context, we shall denote
$
v_B(x,t) = v_{\prof^{-B}}(x,t) = f(x) \cdot \prod_{A \in \prof\setminus\set{B}} \N{A}(x,t)
.$

\begin{restatable}{lemma}{lemWeakGreedy}
\label{lem:weakGreedy}
Consider a profile $\Pnash$ at equilibrium.
For every $B\in\Pnash$ and $t,x,y$ such that $y$ is not full in $B$,
if $B(x,t) > 0$ then $v_B(x,t) \geq v_B(y,t)$.
\end{restatable}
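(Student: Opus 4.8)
The plan is to prove this by a single local-exchange perturbation, exploiting the fact that, under the exclusive policy, the utility $U_{\Pnash^{-B}}(B)$ is a \emph{linear} function of the matrix $B$. Indeed, the value $v_B(x,t)=v_{\Pnash^{-B}}(x,t)=f(x)\prod_{A\in\Pnash\setminus\set{B}}\N{A}(x,t)$ depends only on the other players and not on $B$'s own matrix, so $U_{\Pnash^{-B}}(B)=\sum_{t}\sum_x B(x,t)\,v_B(x,t)$ is linear in the entries of $B$, with coefficients $v_B(x,t)$ that are fixed once $\Pnash^{-B}$ is fixed.

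First I would assume $B(x,t)>0$ and that $y$ is not full, i.e.\ $\sum_s B(y,s)<1$, and define the deviation $B' = B + \epsilon(\SM{y,t}-\SM{x,t})$ for a sufficiently small $\epsilon>0$. I would then verify that $B'$ is doubly-substochastic: the entry $B(x,t)$ is strictly positive so it can be decreased, row $y$ has slack because $y$ is not full (which also forces $B(y,t)<1$), and, crucially, the sum of column $t$ is unchanged because the perturbation moves mass only within that single column. By Corollary~\ref{cor:birkhoff}, $B'$ is the matrix of a legitimate strategy, hence an admissible deviation for player $B$.

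Next, by the linearity noted above, moving mass $\epsilon$ from $(x,t)$ to $(y,t)$ gives exactly
\[
U_{\Pnash^{-B}}(B') - U_{\Pnash^{-B}}(B) = \epsilon\bigl(v_B(y,t)-v_B(x,t)\bigr).
\]
Since $\Pnash$ is at equilibrium, $B$ is a best response to $\Pnash^{-B}$, so the left-hand side is nonpositive, which forces $v_B(x,t)\geq v_B(y,t)$, as claimed.

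I do not anticipate a serious obstacle here; the only point requiring care is the verification that $B'$ is doubly-substochastic, where the ``$y$ not full'' hypothesis supplies precisely the row slack needed, and the single-column nature of the perturbation dispenses with any column constraint. The reason this is only a \emph{weak} greediness statement — and why the exchange argument cannot be strengthened — is that for a box $y$ that \emph{is} full in $B$ there is no room to shift additional mass onto $y$, so no contradiction with optimality can be derived, and such boxes are legitimately excluded from the comparison.
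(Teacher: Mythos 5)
Your proof is correct and follows exactly the paper's argument: the same single-entry exchange $B' = B + \epsilon(\SM{y,t}-\SM{x,t})$ within column $t$, the same verification that the ``$y$ not full'' hypothesis provides the needed row slack, and the same linearity-of-utility computation yielding $\epsilon(v_B(y,t)-v_B(x,t)) \leq 0$ at equilibrium. Your additional remarks on why the column constraint is automatically preserved and why fullness of $y$ is the genuine obstruction are accurate but do not change the substance.
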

\begin{proof}
Assume otherwise. Define an alternative matrix $B'$ for player $B$, as 
$B' = B + \epsilon( \SM{y,t} - \SM{x,t})$.
For a sufficiently small $\epsilon>0$, $B'$ is a doubly-substochastic matrix
because $y$ is not full in $B$. Then,
$
U_{\prof_{\mathtt{nash}}^{-B}}(B') - U_{\prof_{\mathtt{nash}}^{-B}}(B) 
= \epsilon(v_B(y,t) - v_B(x,t))>0,
$ 
in contradiction.
\end{proof}

Let us define a process that starts with the profile $\Pnash$ and changes it by a sequence of \emph{alterations}, each shifting some amount of probability between two boxes. Importantly, we make sure that each alteration can only decrease the success probability. Hence, the proof is concluded once we show that the final profile has a success probability that is higher than that of $A$. 

We first describe the alternations.
Each alteration considers the current profile $\prof$, and changes it to $\prof'$. 
It takes some high box $x$, some low box $y$ (both w.r.t.\ $\prof$), and the maximal $t$ such that there is a player $B\in\prof$ with $B(x,t) > 0$. 
It defines $B' = B + \epsilon(\SM{y,t} - \SM{x,t})$, and lets the player that played $B$ play $B'$ instead. 
This $\epsilon$ is taken to be the largest so that $x$ does not become low, $y$ does not become high, and such that $\epsilon \leq B(x,t)$, so that the entries remain non-negative. 
Note that $B'$ is doubly sub-stochastic, because taking care that $y$ remains low, also means that $y$'s row in $B'$ still sums to less than $1$. 

After this alteration, either $x$ is saturated, $y$ is saturated, or $B'(x,t)=0$. Clearly, in a finite number of  alterations a profile $\Pfinal$ is obtained, for which either no box is high or no box is low.
\begin{claim}\label{claim:Final}
$\success(\Pfinal) \geq \success(A)$.
\end{claim}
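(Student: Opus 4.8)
The plan is to exploit the two structural facts already in hand about the terminal profile $\Pfinal$: the process halts with $\Pfinal$ having either \emph{no high box} or \emph{no low box}, and (as I would check first) $\Pfinal$ is non-redundant. Non-redundancy is immediate from the shape of an alteration: each one moves a mass $\epsilon$ from entry $(x,t)$ to entry $(y,t)$ \emph{within the same round $t$}, so it preserves every column sum of the altered player while leaving all other players untouched. Since we are in the case $\success(\Pnash)<1$, Lemma \ref{lem:non-redundant-0} gives that $\Pnash$ is non-redundant, and hence so is every intermediate profile, in particular $\Pfinal$.

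Next I would recall that the success probability decomposes additively over boxes, $\success(\prof)=\sum_x f(x)\,\success(\prof,x)$ with $\success(\prof,x)=1-\prod_{B\in\prof}\N{B}(x,T)$, and that each weight $f(x)$ is positive. Consequently it suffices to compare the two profiles box by box: if $\success(\Pfinal,x)\geq\success(A,x)$ holds for every $x$, then the $f$-weighted sums obey the same inequality, giving $\success(\Pfinal)\geq\success(A)$.

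Now I would split on the two terminal cases. If $\Pfinal$ has no low box, then by the definitions of high, low, and saturated, every box satisfies $\success(\Pfinal,x)\geq\success(A,x)$, so the pointwise comparison above applies directly. If instead $\Pfinal$ has no high box, I would invoke Claim \ref{claim:nolow}: since $\Pfinal$ is non-redundant and contains no high box, all boxes are saturated, i.e.\ $\success(\Pfinal,x)=\success(A,x)$ for every $x$, whence $\success(\Pfinal)=\success(A)$. In either case the claimed inequality $\success(\Pfinal)\geq\success(A)$ follows.

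The argument is short precisely because the substantive effort lies elsewhere, namely in establishing that each alteration cannot increase $\success$ (which is where the weak greediness of equilibria, Lemma \ref{lem:weakGreedy}, is used) and that the alteration process terminates; both of these I would take as already provided by the surrounding construction. The only point here that genuinely needs care is the appeal to Claim \ref{claim:nolow} in the no-high-box case, which is why I would make the preservation of non-redundancy explicit before using it; the no-low-box case is then a one-line monotonicity observation.
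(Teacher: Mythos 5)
Your proof is correct and follows essentially the same route as the paper's: the paper likewise observes that $\Pfinal$ contains only high and saturated boxes (via Claim \ref{claim:nolow} in the no-high-box case) and concludes by the box-wise decomposition $\success(\prof)=\sum_x f(x)\success(\prof,x)$. Your explicit check that alterations preserve non-redundancy (since mass moves within a column) is a point the paper leaves implicit, but it is the same argument.
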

\begin{proof} By Lemma \ref{claim:nolow}, $\Pfinal$ can only contain high and saturated boxes, that is, for every box $x$, $\success(\Pfinal,x) \geq \success(A, x)$. 
However,
$\success(\prof) = \sum_x f(x) \success(\prof, x)$,
and therefore $\success(\Pfinal) \geq \success(A)$. 
\end{proof}
Lastly, the following claim concludes the proof of Theorem \ref{thm:nashismore}.
\begin{claim}\label{claim:alteration}
An alteration can only decrease the probability of success. \qedhere
\end{claim}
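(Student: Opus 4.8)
The plan is to show that each alteration changes $\success$ by a non-positive amount, by evaluating that change exactly and reducing it to a single value comparison at $x$ versus $y$. Since an alteration only shifts mass $\epsilon$ of player $B$ from $(x,t)$ to $(y,t)$, the only terms of $\success(\prof)=\sum_z f(z)(1-\prod_{D}\N{D}(z,T))$ that move are those indexed by $x$ and $y$: the factor $\N{B}(x,T)$ increases by $\epsilon$ while $\N{B}(y,T)$ decreases by $\epsilon$, and every other factor is unchanged. Writing $v^{\prof}_B(z,s)=f(z)\prod_{D\in\prof\setminus\{B\}}\N{D}(z,s)$, this gives $\success(\prof')-\success(\prof)=\epsilon\,(v^{\prof}_B(y,T)-v^{\prof}_B(x,T))$, so it suffices to prove $v^{\prof}_B(y,T)\le v^{\prof}_B(x,T)$.

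Next I would pass from time $T$ to time $t$. Because $t$ is the \emph{maximal} round at which any player visits $x$ with positive probability, no player touches $x$ after $t$, so $\N{D}(x,T)=\N{D}(x,t)$ for every $D$ and hence $v^{\prof}_B(x,T)=v^{\prof}_B(x,t)$. For $y$, monotonicity of each $\N{D}(y,\cdot)$ in time gives $v^{\prof}_B(y,T)\le v^{\prof}_B(y,t)$. Thus it remains to establish the weak-greediness inequality $v^{\prof}_B(y,t)\le v^{\prof}_B(x,t)$ in the \emph{current} profile.

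\textbf{The main obstacle} is that $\prof$ is an intermediate profile, not an equilibrium, so Lemma~\ref{lem:weakGreedy} cannot be invoked directly. I would overcome this by transferring the inequality back to $\Pnash$ using two monotonicity facts. First, since $\epsilon$ is taken maximal, the status of a box is monotone along the process: a high box stays high until it turns saturated and is then frozen, and symmetrically a low box stays low until it turns saturated; since a box is touched only when it is chosen as the high or the low box, the box $x$ is high throughout (so it only sheds mass) and $y$ is low throughout (so it only accumulates mass). Consequently the value at $x$ can only increase and the value at $y$ can only decrease as the process runs, i.e. $v^{\prof}_B(x,t)\ge v^{\Pnash}_B(x,t)$ and $v^{\prof}_B(y,t)\le v^{\Pnash}_B(y,t)$. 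The same monotonicity shows the hypotheses of Lemma~\ref{lem:weakGreedy} already hold in $\Pnash$: the mass player $B$ places on $(x,t)$ is positive in $\Pnash$, since $x$ only shed mass and it is positive now, and $y$ is not full for $B$ in $\Pnash$, since $y$ only accumulated mass and is not full in $\prof$. Hence $v^{\Pnash}_B(x,t)\ge v^{\Pnash}_B(y,t)$, and chaining yields $v^{\prof}_B(y,t)\le v^{\Pnash}_B(y,t)\le v^{\Pnash}_B(x,t)\le v^{\prof}_B(x,t)$.

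Putting the pieces together, $\success(\prof')-\success(\prof)=\epsilon(v^{\prof}_B(y,T)-v^{\prof}_B(x,T))\le \epsilon(v^{\prof}_B(y,t)-v^{\prof}_B(x,t))\le 0$, which is the claim. Everything except the status- and value-monotonicity argument of the third paragraph is routine bookkeeping; that argument is precisely what lets a property guaranteed only at equilibrium be used at every intermediate profile, and it is the step I would state most carefully, in particular verifying that $x$ never re-enters the low phase and $y$ never re-enters the high phase.
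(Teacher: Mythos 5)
Your proposal is correct and follows essentially the same route as the paper's proof: the same exact computation of the change in success as $\epsilon\bigl(v_B(y,T)-v_B(x,T)\bigr)$, the same use of the maximality of $t$ to get $v_B(x,T)=v_B(x,t)$, and the same consistency/monotonicity argument along the sequence of alterations to transfer the weak-greediness inequality from $\Pnash$ (via Lemma~\ref{lem:weakGreedy}) to the intermediate profile. The paper packages these steps as its Items 1--4, but the content and the order of the reductions are the same as yours.
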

\begin{proof}
We first make the following sequence of claims:
\begin{enumerate}
\item \label{it:low}
Consistency: If box $x$ is low (high) w.r.t.\ to some intermediate profile then it was low (high) in all profiles preceding it. 
\item \label{it:monotone}
Monotonicity: Alterations can only increase the value of high boxes, and can only decrease the value of low boxes. This is true in the eyes of all players.
\item \label{it:tT}
If an alteration is made to an intermediate profile $\prof$ with $x,y,B,t$ as above (that is, shifting some probability mass from $B(x,t)$ to $B(y,t)$), then w.r.t.\ to $\prof$,  $v_B(x,T) = v_B(x,t)$.
\item \label{it:xy}
If an alteration is made to an intermediate profile $\prof$ with $x,y,B,t$ as above, then w.r.t.\ to $\prof$, $v_B(x,T) \geq v_B(y,T)$
\end{enumerate}
Here are the proofs:
\begin{enumerate}
\item
This one is clear from the way alterations are defined.
\item
Recall $v_B(x,t) = f(x) \prod_{B' \neq B} \N{B'}(x,t)$. If $x$ is high, then alterations only decrease its $B'(x,t)$ and so increase $\N{B'}(x,t)$, thus increasing its value in the eyes of the different players. This works in the same way, the other way around, for low boxes.
\item
By the way $t$ is chosen in an alteration, $B'(x,s) = 0$ for every player $B'$ and every $s > t$. Therefore, for all $B'$, $\N{B'}(x,t) = \N{B'}(x,T)$, and so
$v_B(x,T) = v_B(x,t)$.
\item
We know that $B(x,t) > 0$ in $\prof$. 
Since $x$ is high for the current profile, then by Item 1 it was also high in all preceding profiles. In particular, $B(x,t)$ was never increased, and so $B(x,t) > 0$ in $\Pnash$ as well. 
Since $y$ is low, then by Item \ref{it:low}, it was also low in $\Pnash$, and so not full. 
Therefore, since $\Pnash$ is an equilibrium, we can apply Lemma \ref{lem:weakGreedy}, and get $v_B(x,t) \geq v_B(y,t)$ w.r.t.\ $\Pnash$. By Item \ref{it:monotone}, this is also true in $\prof$.

By Item \ref{it:tT}, $v_B(x,t) = v_B(x,T)$, and as always $v_B(y,T) \leq v_B(y,t)$, we get the result.
\end{enumerate}

 Now, considering an alteration, let us examine the success probability of a profile $\prof$, and express it as a function of the matrix and values of the player $B$ involved in the alteration:
\begin{align*}
\success(\prof) 
& = 
\sum_x f(x) \B{1 - \prod_{B' \in \prof} \N{B'}(x,T)} 
= 
\sum_x f(x) - \sum_x f(x) \prod_{B' \in \prof} \N{B'} (x,T)
\\ & = 
\sum_x f(x) - \sum_x \N{B}(x,T) v_B(x,T).
\end{align*}
The alteration will increase $B(y,t)$ by $\epsilon$ and decrease $B(x,t)$ by $\epsilon$. As changes are made only to player $B$, all the $v_B$'s are not affected. Thus, 
the change in the success probability as a result of such an alteration is:
\[
\epsilon (v_B(y,T) - v_B(x,T))
.\]
By Item \ref{it:xy} this value is non-positive, and so the alteration can only decrease the success probability.
\end{proof}
Concluding the proof of Theorem \ref{thm:nashismore}.
\qed
\end{proof}

\subsection{The PoA of the Exclusive Policy}

Since $\Astar$ is a symmetric equilibrium, we immediately get that for every $f$ and $T$, the PoA is attained by $\Astar$, that is,
\[
\PoA(C_{ex},f,T)=
 {\max_{\prof\in {\mathcal{P}(T)}}}\BF{{\success}(\prof)}{{\success}(\Astar)}
.\]
Since $\Astar$ has the best success probability among symmetric profiles, and since every policy has some symmetric equilibrium, we get Corollary
\ref{cor:POA}.
To make this more concrete, we show that in the worst case, 
\thmPrice*
Note that as $k$ goes to infinity the PoA converges to $e/(e-1)\approx 1.582$. 
\begin{proof}
Denote $m=\min\{kT,M\}$.
Let $\Aunif$ be the very simple  strategy that chooses at each step uniformly among all the boxes in $\set{1, \ldots, m}$ that it did not choose yet.
The probability that a player checked a specific box $x$ in $\set{1, \ldots, m}$ is $T/m$. 
Therefore,
\[
\success(\Aunif) = \sum_{x=1}^m f(x) \B{1 - \B{1 - \F{T}{m}}^k}
\geq \sum_{x=1}^m f(x) \B{1 - \B{1 - \F{1}{k}}^k}
.\]
Now, by the optimality of $\Astar$,
$\success(\Astar) \geq \success(\Aunif)$, and so:
\[
\PoA(\Cex,f,T) = \F{\sum_{x=1}^m f(x)}{\success(\Astar)} \leq 
\B{1 - \B{1 - \F{1}{k}}^k}^{-1},
\]
where the first equality is by the fact that the PoA of the exclusive policy is attained by $\Astar$. 

Next, consider the situation where $f$ is the uniform distribution on $\set{1, \ldots, M}$, where $M = kT$.
In this specific case, $\Aunif$ is trivially sgreedy, as the value of all boxes at each point in time is the same. Therefore, as it is also non-redundant, by Theorem \ref{thm:greeeeed}, it is a symmetric equilibrium and in particular, an equilibrium. Hence,
\[
\PoA(\Cex,f,T) \geq \F{\sum_{x=1}^M f(x)}{\success(\Aunif)} = \B{1 - \B{1 - \F{T}{M}}^k}^{-1} = 
\B{1 - \B{1 - \R{k}}^k}^{-1}
. \qedhere\] 
\end{proof}

\subsection{Robustness}

The game induced by the exclusive policy is not monotonously scalable, and so  Theorem \ref{thm:robustgeneral}is not applicable to it.  
Indeed, an example follows where the $\Astar$ equilibrium is not robust.
\subsubsection{Non-Robustness of \texorpdfstring{\boldmath{$\Astar$}}{Astar}}
\label{apx:AstarNoRobust}
\begin{lemma}
For every $\epsilon > 0$ there is a configuration where $\Astar_k$ is not an $(1+\epsilon)$-equilibrium when played by $k+1$ players.
\end{lemma}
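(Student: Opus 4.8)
The plan is to exhibit an explicit single-round configuration in which one player can profitably deviate from $\Astar_k$ by a multiplicative factor that grows without bound as a distribution parameter is driven to an extreme; fixing that parameter appropriately then defeats any prescribed $1+\epsilon$. Concretely, I would take the exclusive policy $\Cex$, $T=1$, and $M=2$ boxes with $f=(1-\beta,\beta)$ for a small $\beta>0$, played by $k+1$ players all running $\Astar_k$. The whole argument reduces to comparing, for the external (deviating) player facing the remaining $k$ copies of $\Astar_k$, the utility of $\Astar_k$ itself against the utility of simply grabbing the rare box. Let $\prof$ denote the profile consisting of those $k$ remaining copies.

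First I would pin down $\Astar_k$ on this instance. Since $\Astar_k$ is built for $k$ players, $\N{\Astar_k}(x,1)=\min(1,\alpha q(x))$ with $q(x)=f(x)^{-1/(k-1)}$, and by non-redundancy (Lemma \ref{lem:astargreedy}) we have $\N{\Astar_k}(1,1)+\N{\Astar_k}(2,1)=M-1=1$. Writing $n_x=\N{\Astar_k}(x,1)$, both boxes are visited, so $n_1=\alpha q(1)$, $n_2=\alpha q(2)$, $n_1+n_2=1$, and $\Astar_k(x,1)=1-n_x$. Solving gives $\alpha=1/(q(1)+q(2))$ and $n_1=q(1)/(q(1)+q(2))$. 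As $\beta\to0$ we have $q(1)=(1-\beta)^{-1/(k-1)}\to1$ while $q(2)=\beta^{-1/(k-1)}\to\infty$, hence $n_1\to0$ and $n_2\to1$. Intuitively, almost all of the $\Astar_k$ mass piles onto the likely box $1$, even though in the $(k+1)$-player game that box is nearly worthless.

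The key computation is the value the external player sees against $\prof$, which under $\Cex$ equals $v_\prof(x,1)=f(x)\prod_{A\in\prof}\N{A}(x,1)=f(x)\,n_x^{k}$. Factoring out the sgreediness identity $f(x)\,n_x^{k-1}=\alpha^{k-1}$ (the defining property of $\Astar_k$, since $n_x^{k-1}=\alpha^{k-1}q(x)^{k-1}=\alpha^{k-1}f(x)^{-1}$) gives $v_\prof(x,1)=\alpha^{k-1}n_x$. Hence the utility of $\Astar_k$ is
\[
U_\prof(\Astar_k)=\sum_x(1-n_x)\alpha^{k-1}n_x=\alpha^{k-1}\bigl(n_1(1-n_1)+n_2(1-n_2)\bigr)=2\alpha^{k-1}n_1n_2,
\]
using $n_1+n_2=1$, whereas the strategy $B$ that deterministically takes box $2$ yields $U_\prof(B)=v_\prof(2,1)=\alpha^{k-1}n_2$. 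The deviation ratio is therefore $U_\prof(B)/U_\prof(\Astar_k)=1/(2n_1)$, which tends to infinity as $\beta\to0$. Given $\epsilon$, I would fix $\beta$ small enough that $n_1<1/(2(1+\epsilon))$, so that $U_\prof(B)>(1+\epsilon)U_\prof(\Astar_k)$, witnessing that $\Astar_k$ is not a $(1+\epsilon)$-equilibrium when played by $k+1$ players.

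The steps are all elementary, so there is no serious obstacle; the only points needing care are the value identity $f(x)n_x^{k}=\alpha^{k-1}n_x$ (which is exactly where the one extra player converts the $k$-player equalized values into values proportional to $n_x$, favouring the rarely visited box) and the limiting behaviour $n_1\to0$ via $q(2)\to\infty$. One should also note $U_\prof(\Astar_k)=2\alpha^{k-1}n_1n_2>0$ for every $0<\beta<1$, since $n_1,n_2\in(0,1)$, so the ratio is well defined throughout.
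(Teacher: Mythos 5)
Your proposal is correct and follows essentially the same route as the paper: a single-round instance under the exclusive policy where the one extra player turns the equalized $k$-player values into values proportional to $\N{\Astar_k}(x,1)$, so deviating to the rarely visited low-probability box gains a factor on the order of $(f(1)/f(2))^{1/(k-1)}$, which is driven to infinity by skewing $f$. The paper realizes the skew with many equal low-probability boxes while you use $M=2$ and $f=(1-\beta,\beta)$, but the computation and mechanism are the same.
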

\begin{proof}
Set $T=1$, and let all boxes except the first have the same $f$.
Denote by $p(x)$ the probability that $\Astar_k$ plays $x$.  Clearly, $p$ is the same for all boxes except the first. Also, by how $\Astar$ is defined $W = M$, and so $p(x) \geq 0$ for all boxes. Denote by $v(1)$ the value boxes get when $\Astar_k$ is played by $k$ players, as we know this is equal for all boxes.
Therefore, when played by $k+1$ players, for all~$x$, 
\[
v(x,1) = f(x)(1-p(x))^{k}=v(1)(1 - p(x)).
\] Denote by $B$ the algorithm that plays box $2$ with probability $1$. Using the equation above we get:
\begin{align*}
\F{U_{\Astar}(B)}{U_{\Astar}(\Astar)}
& =
\F{1-p(2)}{p(1)(1 - p(1)) + (1-p(1))(1-p(2))}
=
\F{1- p(2)}{1-p(1)}\cdot \F{1}{1 + p(1) - p(2)}
\\ & \geq 
\F{1-p(2)}{1-p(1)} \cdot \R{2} 
 = 
\R{2} \cdot \F{\alpha(1)q(2)}{\alpha(1)q(1)}
=
\R{2}\BF{f(1)}{f(2)}^\R{k-1}
.\end{align*}
Therefore, if for example $f(1) = 1/2$, we can set $f(x) = f(2)$ to be as small as we want (by increasing the number of boxes $M$), and this ratio to be as large as we wish.
\end{proof}

\subsubsection{Robustness of \texorpdfstring{\boldmath{$\Astar$}}{A*} under Mild Conditions} 

\thmAstarRobust*
We first prove a simple lemma:
\begin{lemma} \label{lem:simpleLemma}
For non-negative $a_1, \ldots, a_n$ and strictly positive $b_1, \ldots, b_n$, 
$
\sum_i a_i/ \sum_i b_i \leq \max_i a_i/b_i
$.
\end{lemma}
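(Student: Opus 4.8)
The plan is to reduce the inequality to a sum of trivial per-index inequalities. First I would set $m = \max_i \F{a_i}{b_i}$, the quantity appearing on the right-hand side. By the definition of the maximum, for every index $i$ we have $\F{a_i}{b_i} \leq m$. Since each $b_i$ is strictly positive, multiplying through by $b_i$ preserves the direction of the inequality, yielding the clean bound $a_i \leq m\, b_i$ for all $i$. This step is where the positivity hypothesis on the $b_i$ is used in an essential way; without it, clearing the denominator could flip the inequality and the statement would fail.

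Next I would simply sum these $n$ inequalities over $i$, obtaining $\sum_i a_i \leq m \sum_i b_i$. Because all $b_i > 0$, the total $\sum_i b_i$ is strictly positive, so it is legitimate to divide both sides by it, which gives $\F{\sum_i a_i}{\sum_i b_i} \leq m = \max_i \F{a_i}{b_i}$, exactly as claimed. The non-negativity of the $a_i$ is not even needed for this argument; the only properties actually invoked are the positivity of each $b_i$ (to clear denominators index by index) and the positivity of their sum (to divide at the end).

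There is no genuine obstacle here: the whole proof is the observation that a weighted average of the ratios $a_i/b_i$, with positive weights $b_i$, cannot exceed the largest ratio. The only points requiring a word of care are the two uses of positivity just mentioned, so in writing it up I would state the chain $a_i \leq m b_i \Rightarrow \sum_i a_i \leq m \sum_i b_i \Rightarrow \F{\sum_i a_i}{\sum_i b_i} \leq m$ and flag that both the per-term multiplication and the final division rely on $b_i > 0$.
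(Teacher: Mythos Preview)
Your argument is correct and in fact cleaner than the paper's. The paper first treats the case $n=2$ by contradiction---assuming $(a_1+a_2)/(b_1+b_2)$ exceeds both $a_1/b_1$ and $a_2/b_2$, cross-multiplying to obtain $a_2 b_1 > a_1 b_2$ and $a_1 b_2 > a_2 b_1$, and noting the contradiction---and then inducts on $n$ by grouping the last $n-1$ terms. Your approach bypasses both the contradiction and the induction: setting $m=\max_i a_i/b_i$, the per-index bound $a_i \le m b_i$ sums directly to the conclusion. This is the standard ``a weighted average is at most the largest term'' argument, and as you observe it does not even use the non-negativity of the $a_i$; only $b_i>0$ is needed (once per index to clear the denominator, and once at the end to divide by $\sum_i b_i$). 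The paper's route works but is unnecessarily roundabout for such an elementary fact.
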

\begin{proof}
First the case $n=2$. Assume by contradiction that $(a_1+a_1)/(b_1 + b_2$) is greater than both $a_1/b_1$ and $a_2/b_2$. 
\[
\F{a_1+a_2}{b_1+b_2} > \F{a_1}{b_1} 
\RIGHT
a_1 b_1 + a_2 b_1 > a_1 b_1 + a_1 b_2
\RIGHT
a_2 b_1 > a_1 b_2
.\]
Also:
\[
\F{a_1+a_2}{b_1+b_2} > \F{a_2}{b_2} 
\RIGHT
a_1 b_2 + a_2 b_2 > a_2 b_1 + a_2 b_2
\RIGHT
a_1 b_2 > a_2 b_1
,\]
in contradiction.
By induction:
\[
\F{\sum_i a_i}{\sum_i b_i} 
\leq
\max \set{\F{a_1}{b_1}, \F{\sum_{i\geq 2} a_i}{\sum_{i\geq 2} b_i}}
\leq
\max_i \F{a_i}{b_i}
. \qedhere\]
\end{proof}
We can now proceed to prove Theorem \ref{thm:AstarRobust}:
\begin{proof}
Recall the definition of $\Astar_k$'s matrix (we will henceforth drop the subscript $k$) given in Section~\ref{sec:model}.
Let $q(x) = f(x)^{-1/
(k-1)}$. For each $t$, 
$\N{\Astar}(x,t) = \min(1, \alpha(t) q(x))
,$
where $\alpha(t) \geq 0$ is such that $\sum_x 1 - \N{\Astar}(x,t) = t$. 

As the $q(x)$ are non-decreasing, then for every $t$ there is some $W_t \leq M$, such that $\N{\Astar}(x,t) < 1$ for every $x\leq W_t$, and $\N{\Astar}(x,t) = 1$ for larger $x$. 
If $\alpha(t) \in [1/q(W_t+1), 1/q(W_t))$, then
\[
t = \sum_x 1 - \N{\Astar}(x,t)
= \sum_{x \leq W_t} 1 - q(x) \alpha(t)
,\]
and so $W_t$ is the largest such that 
\[
\sum_{x\leq W_t} 1 - q(x)/q(W_t) < t
.\]
Note this characterization works also for the case $W_t = M$.
In particular, if $W_t < M$ then 
\[
\alpha(t) \geq 1/q(W_t + 1),
\] 
which we will need later on.
When played with $k+k'$ players, the value w.r.t.\ $\Astar$ is 
\[
v_{\Astar}(x,t) = f(x) \N{\Astar}(x,t)^{k + k' -1}
.\]
Running with $k$ players, for $x \leq W_t$ this is equal to
$v(t) \N{\Astar}(x,t)^{k'}$, where $v(t)$ is the value of all boxes in $\set{1, \ldots, W_t}$ when running $\Astar$ with $k$ players, in which case we know they all have the same value. That is because $A^\star$ is sgreedy, and all of these boxes have $\Astar(x,t) > 0$ (because if $\N{\Astar}(x,t) < 1$, since $\alpha(t)$ is strictly decreasing, we get $\Astar(x,t) > 0$). We also know that for all other boxes, which have $\Astar(x,t) = 0$, by the fact that $\Astar$ is sgreedy when running with $k$ players, $v_{\Astar}(x,t) = f(x) \leq v(t)$.

Consider now an alternative strategy $B$ played by one of the $k+k'$ players. The relative utility this player gains is:
\begin{align*}
\F{U_{\Astar}(B)}{U_{\Astar}(\Astar)}  
& \leq
\F{\sum_t \max_x v_{\Astar}(x,t)}{\sum_t \sum_x \Astar(x,t) v_{\Astar}(x,t)}
\\ & \leq
\F{\sum_t \max_x v_{\Astar}(x,t)}{\sum_t \min_{x\leq W_t}v_{\Astar}(x,t)}
=
\max_t \set{
\F{\max_x v_{\Astar}(x,t)}{
\min_{x\leq W_t} v_{\Astar}(x,t)}}
,\end{align*}
where the last inequality is by Lemma \ref{lem:simpleLemma}.

Fix $t$.
If $W_t = M$, then as mentioned, for all $x$, $v_\Astar(x,t) = v(t)\N{\Astar}(x,t)^{k'} = v(t) (\alpha(t) q(x))^{k'}$. As the $q(x)$ are non-decreasing in $x$,
\[
\F{\max_x v_{\Astar}(x,t)}{
\min_x v_{\Astar}(x,t)}
=
\BF{\alpha(t)q(M)}{\alpha(t)q(1)}^{k'} = \BF{f(1)}{f(M)}^\F{k'}{k-1}
\leq 1+\epsilon
,\]
as desired.
If $W_t < M$, then as mentioned before, $\alpha(t) \geq 1/q(W_t + 1)$. Also, as said, $v_{\Astar}(x,t) = v(t) \N{\Astar}(x,t)^{k'}$ for $x\leq W_t$, and is at most $v(t)$ for $x> W_t$. 
Therefore:
\begin{align*}
\F{\max_x v_{\Astar}(x,t)}{\min_{x\leq W_t} v_{\Astar}(x,t)}
& \leq 
\F{v(t)}{v(t) \N{\Astar}(1,t)^{k'}}
=
\R{(\alpha(t) q(1))^{k'}}
\\ & \leq 
\BF{q(W_t + 1)}{q(1)}^{k'} 
= 
\BF{f(1)}{f(W_t+1)}^\F{k'}{k-1} \leq 1 + \epsilon
,
\end{align*}
 as desired.
\end{proof}

\section{Future Work and Open Questions}

In \cite{JACM}, the main complexity measure was actually the running time and not the success probability. Our results about equilibria are also relevant to this measure, but the social gain is different. For example, it is still true that $\Astar$ is an equilibrium under the exclusive policy, and that all other symmetric equilibria in the exclusive policy are equivalent to it. As $\Astar$ is optimal among symmetric profiles w.r.t.\ the running time, the PoSA of $\Cex$ is equal to the PoSS, and it is also the best among all policies. Furthermore, importing from \cite{JACM}, we know that the PoSA (w.r.t.\ the running time) is about 4.
However, showing the analogue of Corollary \ref{cor:POA}, namely, that the PoA of $\Cex$ is that achieved by $\Astar$, seems difficult, especially because general equilibria are not necessarily greedy.  

Moreover, the results of Vetta \cite{vetta2002nash} do not apply when analyzing the running time, and finding the PoA, PoSA, and PoSS of the sharing policy, for example, remains open.
More generally, it would be interesting to rate policies according to how close they are to the exclusive policy, and analyze their resulting competitive measures as a function of this distance.

Another interesting variant would be to consider feedback during the search. For example, assuming that a player visiting a box $x$ knows whether or not other players were there before. Such a feedback can help in the case that the players collaborate \cite{Dobrev}, but seems to significantly complicate the analysis in the game theoretic variant.

Finally, we would like to encourage game theoretical studies of other frameworks of collaborative search, e.g.,  \cite{Adrian,Yuval-ICALP,ANTS,Higashikawa}.

\clearpage


\bibliographystyle{ieeetr}
\bibliography{bib}

\begin{thebibliography}{10}

\bibitem{JACM}
P.~Fraigniaud, A.~Korman, and Y.~Rodeh, ``Parallel bayesian search with no
  coordination,'' {\em J. ACM}, 2019.

\bibitem{Social-foraging}
L.-A. Giraldeau and T.~Caraco, ``Social foraging theory,'' {\em Princeton
  University Press}, 2000.

\bibitem{hills}
T.~T. Hills, P.~M. Todd, D.~Lazer, A.~D. Redish, I.~D. Couzin, C.~S.~R. Group,
  {\em et~al.}, ``Exploration versus exploitation in space, mind, and
  society,'' {\em Trends in cognitive sciences}, vol.~19, no.~1, pp.~46--54,
  2015.

\bibitem{papanastasiou2017crowdsourcing}
Y.~Papanastasiou, K.~Bimpikis, and N.~Savva, ``Crowdsourcing exploration,''
  {\em Management Science}, 2017.

\bibitem{boinc}
U.~of~California~Berkeley, ``Boinc.'' \url{https://boinc.berkeley.edu/}, 2017.

\bibitem{kleinberg2011mechanisms}
J.~Kleinberg and S.~Oren, ``Mechanisms for (mis) allocating scientific
  credit,'' in {\em Proceedings of the forty-third annual ACM symposium on
  Theory of computing}, pp.~529--538, ACM, 2011.

\bibitem{IFD-review2}
M.~Kennedy and R.~D. Gray, ``Can ecological theory predict the distribution of
  foraging animals? a critical analysis of experiments on the ideal free
  distribution,'' {\em Oikos}, vol.~68, no.~1, pp.~158--166, 1993.

\bibitem{GameTheory}
N.~Nisan, T.~Roughgarden, E.~Tardos, and V.~V. Vazirani, {\em Algorithmic Game
  Theory}.
\newblock Cambridge University Press, 2007.

\bibitem{papa}
E.~Koutsoupias and C.~Papadimitriou, ``Worst-case equilibria,'' {\em Computer
  Science Review}, vol.~3, no.~2, pp.~65 -- 69, 2009.

\bibitem{Welfare}
A.~C. Pigou, ``The economics of welfare,'' {\em Library of Economics and
  Liberty}, 1932.

\bibitem{horn}
R.~A. Horn, {\em Topics in Matrix Analysis}.
\newblock New York, NY, USA: Cambridge University Press, 1986.

\bibitem{birkhoff1946three}
G.~Birkhoff, ``Three observations on linear algebra,'' {\em Univ. Nac. Tacuman,
  Rev. Ser. A}, vol.~5, pp.~147--151, 1946.

\bibitem{nash}
J.~Nash, ``Non-cooperative games,'' {\em Annals of Mathematics}, vol.~54,
  no.~2, pp.~286--295, 1951.

\bibitem{vetta2002nash}
A.~Vetta, ``Nash equilibria in competitive societies, with applications to
  facility location, traffic routing and auctions,'' in {\em Foundations of
  Computer Science, 2002. Proceedings. The 43rd Annual IEEE Symposium on},
  pp.~416--425, IEEE, 2002.

\bibitem{Halpern1}
J.~Y. Halpern, ``A computer scientist looks at game theory,'' {\em Games and
  Economic Behavior}, vol.~45, no.~1, pp.~114--131, 2003.

\bibitem{Halpern2}
J.~Y. Halpern, ``Computer science and game theory: {A} brief survey,'' {\em
  CoRR}, vol.~abs/cs/0703148, 2007.

\bibitem{OmerGame}
R.~Gradwohl and O.~Reingold, ``Fault tolerance in large games,'' {\em Games and
  Economic Behavior}, vol.~86, pp.~438--457, 2014.

\bibitem{collet}
S.~Collet and A.~Korman, ``Intense competition can drive selfish explorers to
  optimize coverage,'' in {\em Proceedings of the 30th on Symposium on
  Parallelism in Algorithms and Architectures, {SPAA} 2018, Vienna, Austria,
  July 16-18, 2018} (C.~Scheideler and J.~T. Fineman, eds.), pp.~183--192,
  {ACM}, 2018.

\bibitem{Gairing}
M.~Gairing, ``Covering games: Approximation through non-cooperation,'' in {\em
  Internet and Network Economics, 5th International Workshop, {WINE} 2009,
  Rome, Italy, December 14-18, 2009. Proceedings}, pp.~184--195, 2009.

\bibitem{Ramaswamy}
V.~Ramaswamy, D.~Paccagnan, and J.~R. Marden, ``The impact of local information
  on the performance of multiagent systems,'' {\em CoRR}, vol.~abs/1710.01409,
  2017.

\bibitem{gittins2011multi}
J.~Gittins, K.~Glazebrook, and R.~Weber, {\em Multi-armed bandit allocation
  indices}.
\newblock John Wiley \& Sons, 2011.

\bibitem{slivkins2019introduction}
A.~Slivkins {\em et~al.}, ``Introduction to multi-armed bandits,'' {\em
  Foundations and Trends{\textregistered} in Machine Learning}, vol.~12,
  no.~1-2, pp.~1--286, 2019.

\bibitem{kremer2014implementing}
I.~Kremer, Y.~Mansour, and M.~Perry, ``Implementing the ``wisdom of the
  crowd'','' {\em Journal of Political Economy}, vol.~122, no.~5,
  pp.~988--1012, 2014.

\bibitem{mansour2015bayesian}
Y.~Mansour, A.~Slivkins, and V.~Syrgkanis, ``Bayesian incentive-compatible
  bandit exploration,'' in {\em Proceedings of the Sixteenth ACM Conference on
  Economics and Computation}, pp.~565--582, 2015.

\bibitem{mansour2016bayesian}
Y.~Mansour, A.~Slivkins, V.~Syrgkanis, and Z.~S. Wu, ``Bayesian exploration:
  Incentivizing exploration in bayesian games,'' {\em arXiv preprint
  arXiv:1602.07570}, 2016.

\bibitem{chen2018incentivizing}
B.~Chen, P.~Frazier, and D.~Kempe, ``Incentivizing exploration by heterogeneous
  users,'' in {\em Conference On Learning Theory}, pp.~798--818, 2018.

\bibitem{crowdsourcing}
Y.~Papanastasiou, K.~Bimpikis, and N.~Savva, ``Crowdsourcing exploration,''
  {\em Management Science}, vol.~64, no.~4, pp.~1727--1746, 2018.

\bibitem{frazier2014incentivizing}
P.~Frazier, D.~Kempe, J.~Kleinberg, and R.~Kleinberg, ``Incentivizing
  exploration,'' in {\em Proceedings of the fifteenth ACM conference on
  Economics and computation}, pp.~5--22, ACM, 2014.

\bibitem{kannan2017fairness}
S.~Kannan, M.~Kearns, J.~Morgenstern, M.~Pai, A.~Roth, R.~Vohra, and Z.~S. Wu,
  ``Fairness incentives for myopic agents,'' in {\em Proceedings of the 2017
  ACM Conference on Economics and Computation}, pp.~369--386, 2017.

\bibitem{monderer1996potential}
D.~Monderer and L.~S. Shapley, ``Potential games,'' {\em Games and economic
  behavior}, vol.~14, no.~1, pp.~124--143, 1996.

\bibitem{rosenthal1973congestion}
R.~W. Rosenthal, ``A class of games possessing pure-strategy nash equilibria,''
  {\em International Journal of Game Theory}, vol.~2, no.~1, pp.~65--67, 1973.

\bibitem{makespan1}
S.~Albers and M.~Hellwig, ``Online makespan minimization with parallel
  schedules,'' {\em CoRR}, vol.~abs/1304.5625, 2013.

\bibitem{makespan2}
A.~Czumaj and B.~V\"{o}cking, ``Tight bounds for worst-case equilibria,'' {\em
  ACM Trans. Algorithms}, vol.~3, pp.~4:1--4:17, Feb. 2007.

\bibitem{makespan4}
S.~Bhattacharya, S.~Im, J.~Kulkarni, and K.~Munagala, ``Coordination mechanisms
  from (almost) all scheduling policies,'' in {\em Innovations in Theoretical
  Computer Science, ITCS'14, Princeton, NJ, USA, January 12-14, 2014},
  pp.~121--134, 2014.

\bibitem{Dobrev}
S.~Dobrev, R.~Kr{\'{a}}lovic, and D.~Pardubsk{\'{a}}, ``Treasure hunt with
  barely communicating agents,'' in {\em 21st International Conference on
  Principles of Distributed Systems, {OPODIS} 2017, Lisbon, Portugal, December
  18-20, 2017}, pp.~14:1--14:16, 2017.

\bibitem{Adrian}
D.~Dereniowski, Y.~Disser, A.~Kosowski, D.~Pajak, and P.~Uznanski, ``Fast
  collaborative graph exploration,'' {\em Inf. Comput.}, vol.~243, pp.~37--49,
  2015.

\bibitem{Yuval-ICALP}
Y.~Emek, T.~Langner, J.~Uitto, and R.~Wattenhofer, ``Solving the {ANTS} problem
  with asynchronous finite state machines,'' in {\em Automata, Languages, and
  Programming - 41st International Colloquium, {ICALP} 2014, Copenhagen,
  Denmark, July 8-11, 2014, Proceedings, Part {II}}, pp.~471--482, 2014.

\bibitem{ANTS}
O.~Feinerman and A.~Korman, ``The {ANTS} problem,'' {\em Distributed
  Computing}, vol.~30, no.~3, pp.~149--168, 2017.

\bibitem{Higashikawa}
Y.~Higashikawa, N.~Katoh, S.~Langerman, and S.-I. Tanigawa, ``Online graph
  exploration algorithms for cycles and trees by multiple searchers,'' {\em J.
  Comb. Optim.}, vol.~28, pp.~480--495, Aug. 2014.

\bibitem{FL69}
S.~D. Fretwell and H.~L.~L. Jr., ``On territorial behavior and other factors
  influencing habitat distribution in birds,'' {\em Acta biotheoretica},
  vol.~19, no.~1, pp.~16--32, 1969.

\bibitem{IFD-review}
T.~Tregenza, ``Building on the ideal free distribution,'' {\em Adv. Ecol. Res},
  pp.~253--307, 1995.

\bibitem{Nicholson}
A.~Nicholson, ``An outline of the dynamics of animal populations.,'' {\em
  Australian Journal of Zoology}, pp.~9--65, 1954.

\bibitem{noga}
N.~Pinter-Wollman, T.~Dayan, D.~Eilam, and N.~Kronfeld-Schor, ``Can aggression
  be the force driving temporal separation between competing common and golden
  spiny mice?,'' {\em Journal of Mammalogy}, vol.~87, no.~1, pp.~48--53, 2006.

\end{thebibliography}


\appendix

\section{Animals Searching for Food}\label{App:animals}

The way animals disperse  in their environment is a cornerstone of ecology  \cite{Social-foraging}.  In these contexts, dispersal is
typically governed by two contradicting forces: The bias towards selecting
the higher quality patches, and the need to avoid costly collisions or overlaps. In the ecology discipline, the (single-round) setting of animals competing over patches of resources has been extensively studied through the concept of {\em Ideal Free Distribution} \cite{FL69,IFD-review2,IFD-review}. Unfortunately, however,  although many animals engage in search over multiple sites, much less is known about relevant game theoretical aspects is such multi-round settings.  

In animal contexts, increased collision costs (the analogy to reward policies) can be caused by various factors, including aggressive behavior, or merely due to equally sharing the patch between the colliding individuals (a.k.a., scramble competition \cite{Nicholson}). Plausibly, collision costs have emerged by evolution due to dynamics that is governed by multiple parameters. Without excluding other factors, one of the possible evolutionary driving forces may relate to competition between groups. Indeed, from the perspective of the group, consuming large quantities of food by all members together (coverage, in our terminology) can indirectly increase the fitness of individuals and hence become significant for their survival. To see why, consider for example, a setting in which two species compete over the same patched food resource, each acting in a different time period of the day (so there is no direct interaction between the two species) \cite{noga}. Assume that  one species is more aggressive towards conspecies. All other factors being similar, at first glance, it may appear that this species would be inferior to the more peaceful one as it induces unnecessary waste of energy and risks of injury. However, perhaps counter-intuitively, our results suggest that it might actually be the converse. Indeed, the higher collision costs of the aggressive species may drive its members to better cover the food resources, on the expense of the more peaceful species.

\section{An Example where a Non-Redundant Sgreedy Strategy is not an Equilibrium}\label{exm:non-eq}

Consider the sharing policy, and take $k=2$, $T=2$, and $f(1) = 4/7$, $f(2) = 3/7$. Denote by $A$ the matrix of the sgreedy non-redundant strategy guaranteed to exist by Lemma \ref{lem:existence}.
By non-redundancy, we know $A = \B{\begin{smallmatrix}
p & 1-p \\ 1-p & p \\
\end{smallmatrix}}
$. Which gives values:
\[
v_A = \B{
\begin{matrix}
\F47 \B{1-p + \F{p}{2}} &
\F47 \cdot \F{1-p}{2} \\
\F37 \B{p + \F{1-p}{2}} & 
\F37 \cdot \F{p}{2}
\end{matrix}}
=
\R{14}\B{
\begin{matrix}
8 \B{1 - \F{p}{2}} &
4(1-p) \\
3 \B{1 + p} & 
3p
\end{matrix}}
\]
Let us put the $1/14$ aside as it only changes utilities by a multiplicative constant.
Setting $p=1$ gives $v_A = 
\B{\begin{smallmatrix}
4 & 0 \\ 6 & 3 \\
\end{smallmatrix}}
$. This means that $A$ is not sgreedy, because in $t=1$, choosing the second box gives a higher utility for this round than $A$ gets.
Setting $p=0$ gives 
$v_A = 
\B{\begin{smallmatrix}
8 & 4 \\ 3 & 0 \\
\end{smallmatrix}}
$. Again, $A$ is not sgreedy, because choosing box 1 in the first round improves the utility at this round.

Otherwise, $0<p<1$, and since $A$ is sgreedy, then $v_A(1,1) = v_A(2,1)$. This means:
\[
8(1-p/2) = 3(1+p)
\RIGHT
5 = 7p
\RIGHT 
p = 5/7
.\]
Therefore, 
$v_A = 
\B{\begin{smallmatrix}
36/7 & 8/7 \\
36/7 & 15/7
\end{smallmatrix}}
$.
Let strategy $B$
first choose box 1 and then box 2. Then, 
\[
\F{U_A(B)}{U_A(A)} = 
\F{36 + 15}{36 + \F27 8 + \F57 15}
>1
.\]
So $A$ is not a symmetric equilibrium.

\section{The PoA of the Sharing Policy}\label{apx:vetta}

The following is practically the same as the proof of \cite{vetta2002nash}, where it is shown that the price of anarchy for valid increasing utility systems is at most $2$. We include it here for completeness, and extend it slightly to hold for approximate equilibria. 
\begin{lemma}\label{lem:vetta}
Consider any policy $C$ such that for all $\ell$, $C(\ell) \leq 1/\ell$. Let $\prof = \set{A_1, \ldots, A_k}$ be some $(1+\epsilon)$-equilibrium. Then, for every   $f$ and $T$,
\[
 {\success}(\prof)\geq \R{2+\epsilon}\cdot \max_{\prof'\in {\mathcal{P}(T)}}{\success}(\prof').
\]
In particular, $\PoA(C,f,T) \leq 2$. 
\end{lemma}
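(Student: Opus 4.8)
The idea is to recognize the search game under such a policy $C$ as a \emph{valid utility system} in the sense of Vetta~\cite{vetta2002nash} and then run his price-of-anarchy argument, carrying the factor $(1+\epsilon)$ through. Write $\prof = \set{A_1, \ldots, A_k}$ for the given $(1+\epsilon)$-equilibrium and $\prof^* = \set{A^*_1, \ldots, A^*_k}$ for a profile attaining $\max_{\prof'\in\mathcal{P}(T)}\success(\prof')$. Viewing $\success$ as a set function on the players, I would isolate three ingredients: (i) $\success$ is monotone non-decreasing and submodular in the set of strategies; (ii) the sum of the players' utilities never exceeds the social welfare, $\sum_i U_{\prof^{-A_i}}(A_i) \leq \success(\prof)$; and (iii) a validity bound stating that the utility a player obtains by deviating to its optimal strategy dominates that strategy's marginal contribution to welfare.

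For (i), note that for each box $x$ the map $S \mapsto 1 - \prod_{B \in S} \N{B}(x,T)$ is a probabilistic-coverage function: inserting a strategy $B$ raises it by $(1 - \N{B}(x,T)) \prod_{B' \in S} \N{B'}(x,T)$, which is non-negative and shrinks as $S$ grows, giving both monotonicity and submodularity; summing against the weights $f(x) \geq 0$ preserves these. Ingredient (ii) is exactly where the hypothesis $C(\ell) \leq 1/\ell$ is used: when a box is first found by $\ell$ players simultaneously the total reward handed out is $\ell\, C(\ell) \leq 1$, so the expected reward collected at box $x$ is at most the probability that $x$ is found, and weighting by $f(x)$ and summing gives $\sum_i U_{\prof^{-A_i}}(A_i) \leq \success(\prof)$. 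Ingredient (iii) is where $C(1)=1$ is used: playing $A^*_i$ against $\prof^{-A_i}$, whenever $A^*_i$ reaches a box that none of the other players ever visits, player $i$ is the sole first-finder and collects $C(1)=1$; lower-bounding the utility by precisely these sole-finder events yields $U_{\prof^{-A_i}}(A^*_i) \geq \success(\prof^{-A_i}\cup\set{A^*_i}) - \success(\prof^{-A_i})$, since by independence the right-hand side equals $\sum_x f(x)\,(1-\N{A^*_i}(x,T))\prod_{j\neq i}\N{A_j}(x,T)$.

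With these in hand I would close the argument by a telescoping sum. Monotonicity gives $\success(\prof^*) \leq \success(\prof \cup \prof^*)$; writing the right-hand side as $\success(\prof)$ plus the marginals of inserting $A^*_1, \ldots, A^*_k$ one at a time into $\prof$, submodularity bounds the $i$-th marginal by the marginal of adding $A^*_i$ to the smaller set $\prof^{-A_i}$. Chaining (iii), the equilibrium inequality $U_{\prof^{-A_i}}(A^*_i) \leq (1+\epsilon)\,U_{\prof^{-A_i}}(A_i)$, and finally (ii) gives
\[
\success(\prof^*) \leq \success(\prof) + (1+\epsilon)\sum_i U_{\prof^{-A_i}}(A_i) \leq \success(\prof) + (1+\epsilon)\success(\prof) = (2+\epsilon)\,\success(\prof),
\]
which rearranges to the claim, and setting $\epsilon = 0$ yields $\PoA(C,f,T)\leq 2$. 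The one genuinely delicate step is the validity bound (iii): because the paper works with probability matrices rather than Vetta's abstract ground sets, the identity relating the sole-finder probability to the welfare marginal must be checked directly from the $\N{\cdot}(x,T)$ values, and one must be careful that the deviator's reward in co-finding events is merely dropped, never double-counted. The submodularity in (i) and the reward accounting in (ii) are standard but, for the same reason, worth stating explicitly.
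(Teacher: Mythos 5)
Your proof is correct and follows essentially the same route as the paper, which itself runs Vetta's valid-utility-system argument: telescope $\success(\prof\cup\prof^*)$, bound each marginal by the deviator's utility against $\prof^{-A_i}$, apply the $(1+\epsilon)$-equilibrium condition, and use $C(\ell)\leq 1/\ell$ to bound the sum of utilities by $\success(\prof)$. The only cosmetic difference is that you pass through explicit submodularity of $\success$ plus a validity bound at $\prof^{-A_i}$, where the paper instead bounds the marginal by the utility against the larger profile and then uses monotonicity of utilities in the opposing profile; both chains are valid and yield the identical final inequality.
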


\begin{proof}
Let $\prof = \set{A_1, \ldots, A_k}$ be some $(1+\epsilon)$-equilibrium, and let $\mathbb{O} = \set{O_1, \ldots, O_k}$ be the optimal profile in terms of success probability. Our goal is to show
$\success(\prof) \geq \R{2+\epsilon} \success(\mathbb{O})$.
The trick is to play all of them together.
\begin{align*}
\success(\prof, \mathbb{O})  
& = \success(\prof) + (\success(\prof, O_1) - \success(\prof))
\\ & \quad + \ldots +
(\success(\prof, O_1, \ldots, O_k) - 
\success(\prof, O_1, \ldots, O_{k-1}))
.\end{align*}
We wish to bound the difference $(\success(\prof, O_1, \ldots, O_i) - 
\success(\prof, O_1, \ldots, O_{i-1}))$.
For that, note that when adding a player ($O_i$ in this case) to an existing profile, its utility is always at least what it contributes to the total success probability of the existing profile. This is because for whatever it contributes it gets a utility of $1$, and if it steps on a box that another player already checks, it adds nothing to the success probability, yet may gain some utility (depending on $C$ and the time it checks the box relative to the other players). Therefore,
\begin{align*}
\success(\prof, O_1, \ldots, O_i) & - 
\success(\prof, O_1, \ldots, O_{i-1})
\\ & \leq
U_{\prof,O_1,\ldots,O_{i-1}}(O_i) 
 \leq
U_{\prof^{-A_i}}(O_i)
\leq
(1+\epsilon)U_{\prof^{-A_i}}(A_i)
,
\end{align*}
where the last step is because $\prof$ is a $(1+\epsilon)$-equilibrium.
As $C(\ell) \leq 1/\ell$  for every $\ell$,  $\sum_i U_{\prof^{-A_i}} (A_i) \leq \success(\prof)$, and so:
\[
(2+\epsilon)\success(\prof) \geq 
\success(\prof, \mathbb{O}) \geq \success(\mathbb{O}),\]
as required.
\end{proof}

\end{document}